\tikzstyle{snode}=[circle,draw=black,fill=white,thick, inner sep=0pt ,minimum size=1.4mm]
\tikzstyle{bnode}=[circle ,draw=black,fill=black,thick, inner sep=0pt ,minimum size=1.4mm]
\newcommand{\NP}{{\sf NP}}
\DeclareMathOperator{\chair}{chair}
\theoremstyle{plan}
\newtheorem{observation}[theorem]{Observation}
\title{Partitioning $H$-Free Graphs of Bounded Diameter\footnote{An extended abstract of this paper will appear in the proceedings of ISAAC 2021.}} 
\titlerunning{Partitioning $H$-Free Graphs of Bounded Diameter}
\author{Christoph Brause}{Technische Universit\"at Bergakademie Freiberg}{brause@math.tu-freiberg.de}{}{}
\author{Petr Golovach}{University of Bergen}{petr.golovach@ii.uib.no}{https://orcid.org/0000-0002-2619-2990}{}
\author{Barnaby Martin}{Department of Computer Science, Durham University, Durham, United Kingdom}{barnaby.d.martin@durham.ac.uk}{}{}
\author{Dani\"el Paulusma}{Department of Computer Science, Durham University, Durham United Kingdom}{daniel.paulusma@durham.ac.uk}{0000-0001-5945-9287}{Supported by the Leverhulme Trust (RPG-2016-258).}
\author{Siani Smith}{Department of Computer Science, Durham University, Durham, United Kingdom}{siani.smith@durham.ac.uk}{}{}
\authorrunning{C. Brause, P.A. Golovach, B. Martin, D. Paulusma and S. Smith }
\keywords{vertex partitioning problem, $H$-free, diameter, complexity dichotomy}
\begin{document}

\maketitle

\begin{abstract}
A natural way of increasing our understanding of \NP-complete graph problems is to restrict the input to a special graph class. Classes of $H$-free graphs, that is, graphs that do not contain some graph~$H$ as an induced subgraph, have proven to be an ideal testbed for such a complexity study. However, if the forbidden graph $H$ contains a cycle or claw, then these problems often stay \NP-complete. A recent complexity study (MFCS 2019) on the {\sc $k$-Colouring} problem shows that we may still obtain tractable results if we also bound the diameter of the $H$-free input graph. We continue this line of research by initiating a complexity study on the impact of bounding the diameter for a variety of classical vertex partitioning problems restricted to $H$-free graphs.  We prove that bounding the diameter does not help for {\sc Independent Set}, but leads to new tractable cases for problems closely related to {\sc $3$-Colouring}. That is, we show that {\sc Near-Bipartiteness}, {\sc Independent Feedback Vertex Set}, {\sc Independent Odd Cycle Transversal}, {\sc Acyclic $3$-Colouring} and {\sc Star $3$-Colouring} are all polynomial-time solvable for chair-free graphs of bounded diameter. To obtain these results we exploit a new structural property of $3$-colourable chair-free graphs.
\end{abstract}

\section{Introduction}\label{s-intro}

Many well-known graph problems are \NP-complete in general but become polynomial-time solvable under input restrictions. We focus on problems that partition the vertex set $V$ of a graph $G$ into sets $V_1,\ldots,V_k$ such that each $V_i$ satisfies some property $\pi_i$ and where $V_1$ might have the extra condition of being large or being small. For instance, the {\sc $k$-Colouring} problem is to decide if $V$ can be partitioned into sets $V_1,\ldots,V_k$, called {\it colour classes}, such that each $V_i$ is an independent set. To give another example, the {\sc Independent Set} problem is to decide if $V$ can be partitioned into sets $V_1$ and $V_2$ where $V_1$ is independent and $|V_1|\geq p$ for some given integer~$p$. Our underlying goal is to understand which graph properties ensure tractability of these problems and which properties cause the computational hardness. In the literature, input is restricted in various ways. In particular, {\it hereditary} graph classes have been considered. 

Hereditary graph classes are the classes of graphs closed under vertex deletion. They form a natural and rich framework that cover many well-known graph classes (see, for example,~\cite{BLS99}). Moreover, they enable a systematic study on the computational complexity of graph problems under input restrictions. The reason is that a graph class~${\cal G}$ is hereditary if and only if ${\cal G}$ can be characterized by a set ${\cal F}_{\cal G}$ of forbidden induced subgraphs; we also say that ${\cal G}$ is {\it ${\cal F}_{\cal G}$-free}. A natural starting point for a systematic study is the case where ${\cal F}_{\cal G}$ has size~$1$, say ${\cal F}_{\cal G}=\{H\}$ for some graph $H$. In this case the graphs in ${\cal G}$ are said to be {\it $H$-free}. In other words, no graph $G\in {\cal G}$ can be modified into $H$ by a sequence of vertex deletions.

In the literature, there are extensive studies on $H$-free graphs; for example, on bull-free graphs~\cite{Ch12} and claw-free graphs~\cite{CS05,HMLW11}. There also exist several surveys on graph problems or graph parameters for hereditary graph classes that are characterized by a small set of forbidden induced subgraphs, for example, on {\sc Colouring}~\cite{GJPS17,RS04} and clique-width~\cite{DJP19}. 

A well-known dichotomy on {\sc Colouring} restricted to $H$-free graphs is due to Kr\'al', Kratochv\'{\i}l, Tuza, and Woeginger~\cite{KKTW01}. Namely, {\sc Colouring} on $H$-free graphs is polynomial-time solvable  if $H$ is an induced subgraph of 
$P_4$ (the $4$-vertex path) or of $P_1+P_3$ (the disjoint union of $P_1$ and $P_3$) and it is \NP-complete otherwise. Recently, similar but almost-complete dichotomies (up to one missing case each) were established for {\sc Acyclic Colouring}, {\sc Star Colouring} and {\sc Injective Colouring}~\cite{BJMOPS}.  In particular, all these problems stay \NP-complete if the forbidden induced subgraph $H$ has a cycle or claw (the $4$-vertex star~$K_{1,3}$). Moreover, the latter holds even if the number of colours $k$ is fixed, i.e., not part of the input. 

Several other vertex partitioning problems on $H$-free graphs stay \NP-complete as well if $H$ has a cycle or claw. Examples of such problems include {\sc (Independent) Feedback Vertex Set}~\cite{BDFJP19,Mu17,Po74}, 
{\sc (Independent) Odd Cycle Transversal}~\cite{BDFJP19,CHJMP18} and {\sc Even Cycle Transversal}~\cite{PPR}.
Hence, for all these problems, if $H$ is a cycle or claw, then we need to add more structure to the class of input graphs in order to find tractable results for $H$-free graphs. One way of doing this is to bound the {\it diameter} of the input graph $G$ for some problem. Our research question then becomes:

\medskip
\noindent
{\it Does bounding the diameter of an $H$-free graph lead to new tractability results?}

\medskip
\noindent
We note that graph classes of diameter at most~$d$ are hereditary if and only if $d\leq 1$. Many graph problems, such as {\sc Colouring},  {\sc Acyclic Colouring}, {\sc Star Colouring}, {\sc Clique} and {\sc Independent Set} stay \NP-complete even for graphs of diameter~$2$. The reason is that we can take an arbitrary instance $(G,k)$ from such a problem and add a dominating vertex to obtain a new graph $G'$: $(G,k)$ is a yes-instance if and only if $(G',k+1)$ is a yes-instance.

This approach of adding a dominating vertex does not work if we consider $3$-{\sc Colouring}. Mertzios and Spirakis~\cite{MS16} proved in a highly nontrivial way that {\sc $3$-Colouring} is \NP-complete even for triangle-free graphs of diameter at most~$3$. However, determining the complexity of {\sc $3$-Colouring} for graphs of diameter~$2$ is a notoriously open problem (see~\cite{BKM12,BFGP13,MPS19,MS16,Pa15}); we refer to~\cite{MS16} and~\cite{DPR} for subexponential-time algorithms for {\sc List $3$-Colouring} on graphs of diameter at most~$2$. It is also known that {\sc Acyclic $3$-Colouring} and {\sc Star Colouring}, restricted to graphs of diameter at most~$d$, are polynomial-time solvable if $d=2$ or $d=3$, respectively, but \NP-complete if $d=5$ or $d=8$, respectively~\cite{BGMPS21}. Moreover, the related problems {\sc Near Bipartiteness} and {\sc Independent Feedback Vertex Set}, which we define below, are polynomial-time solvable for~$d=2$ but \NP-complete for $d=3$~\cite{BDFJP18}.
These results also show that bounding the diameter on its own (without forbidding any induced graph $H$) does not suffice.

We refer to~\cite{MPS19,MPS21} for a number of results on {\sc $3$-Colouring} and {\sc List $3$-Colouring} for $H$-free graphs of bounded diameter, where $H$ is a cycle or a {\it polyad}, which is a tree where exactly one vertex has degree at least~$3$ (polyads are also known as {\it subdivided stars}).
 One crucial observation in~\cite{MPS19}, based on an application of Ramsey's Theorem,  was the starting point of this investigation: for all integers $d,k$, the {\sc $k$-Colouring} problem is constant-time solvable on claw-free graphs of diameter at most~$d$. In the same paper~\cite{MPS19}, this result was generalized to the case where $H$ is the {\it chair}, which is the graph obtained from the claw after subdividing exactly one of its edges (see also Figure~\ref{f-st}). The chair is also known as the {\it fork}.

 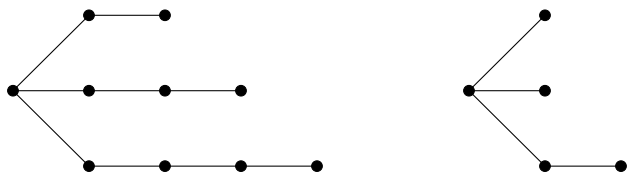
\begin{figure}
  \hspace*{2cm}
\begin{tikzpicture}[scale=1]
\draw[fill=black] (4,0) circle [radius=2pt] ;
\draw[fill=black] (5,0) circle [radius=2pt] ;
\draw[fill=black] (5,1) circle [radius=2pt];
\draw[fill=black] (5,-1) circle [radius=2pt];
\draw[fill=black] (6,-1) circle [radius=2pt];
\draw(4,0)--(5,0);
\draw(4,0)--(5,1);
\draw(4,0)--(5,-1)--(6,-1);   

\draw (0,1)--(-1,1)--(-2,0)--(-1,-1)--(2,-1) (-2,0)--(1,0); \draw[fill=black] (-1,1) circle [radius=2pt] (0,1) circle [radius=2pt] (-2,0) circle [radius=2pt] (-1,0) circle [radius=2pt] (0,0) circle [radius=2pt] (1,0) circle [radius=2pt] (-1,-1) circle [radius=2pt] (0,-1) circle [radius=2pt] (1,-1) circle [radius=2pt] (2,-1) circle [radius=2pt];

\end{tikzpicture}
\caption{Left: the graph $S_{2,3,4}$. Right: the graph $S_{1,1,2}$ also known as the chair or fork.}\label{f-st}
\end{figure}

\subsection*{Our Results} 
We first consider the {\sc Independent Set} problem. For this problem we will prove new \NP-hardness results that show that bounding the diameter does not help. To explain this, let the graph~$S_{h,i,j}$, for $1\leq h\leq i\leq j$, be the \emph{subdivided claw}, which is the tree with one vertex~$x$ of degree~$3$ and exactly three leaves, which are at distance~$h$,~$i$ and~$j$ from~$x$, respectively; see \figurename~\ref{f-st} for an example.  
Note that $S_{1,1,1}$ is the claw~$K_{1,3}$, the graph $S_{1,1,2}$ is the chair and that every subdivided claw is a polyad.
Let ${\cal S}$ be the set of graphs, each connected component of which is a subdivided claw or a path. 
Alekseev~\cite{Al82} proved that for every finite set of graphs~${\cal F}$, if no graph from ${\cal F}$ belongs to ${\cal S}$, then {\sc Independent Set} is \NP-complete for the class of ${\cal F}$-free graphs. In Section~\ref{s-hard}, we show that exactly the same \NP-completeness result holds for the class of ${\cal F}$-free graphs {\it of diameter~$2$} if and only if $|{\cal F}|=1$. 

We then turn to the class of $H$-free graphs where $H$ is a polyad. First, we focus on the case where $H$ is the chair. In Section~\ref{s-h},  we prove
 that for every integer $d\geq 1$, a number of vertex partitioning problems that require yes-instances to be $3$-colourable become polynomial-time solvable on chair-free graphs of diameter at most~$d$. The problems are {\sc Acyclic $3$-Colouring}, {\sc Star $3$-Colouring}, {\sc Near-Bipartiteness}, {\sc Independent Feedback Vertex Set} and {\sc Independent Odd Cycle Transversal}. We define these problems below. 
 
 Our proof is based on a common strategy. Namely, we determine the following for every chair-free $3$-colourable non-bipartite input graph $G$ of bounded diameter: either $G$ has a constant number of $3$-colourings or there exists a set $S$ such that $G-S$ has this property. We prove that we can let $S$ be the set of private neighbours of some vertex $u$ of a triangle on vertices $u,v,w$, that is, the vertices of $S$ are adjacent to neither $v$ nor $w$. We then consider each constructed $3$-colouring $c$ and determine in polynomial time if we can extend $c$ to a solution for the vertex partitioning problem under consideration.

In Section~\ref{s-limitations} we prove that there is little hope of a full extension from the chair to arbitrary polyads $H$. To be more precise, we prove that for {\sc Acyclic $3$-Colouring}, {\sc Star $3$-Colouring} and {\sc Independent Odd Cycle Transversal}, there exists a polyad~$H$ and a constant~$d$ such that each of these problems is \NP-complete for the class of $H$-free graphs of diameter at most~$d$. 
We finish our paper with some relevant open problems in Section~\ref{s-con}.

\subsection*{Additional Terminology}
Let $G=(V,E)$ be a graph. A {\it colouring} of $G$ is a mapping $c:V\to \{1,2,\ldots\}$ with $c(u)\neq c(v)$ for every pair of adjacent vertices $u$ and $v$ in $G$. If $c(u)\in \{1,\ldots,k\}$ for every $u\in V$, then $c$ is also called a {\it $k$-colouring}. 
Note that the sets $\{u\in V(G)\; |\; c(u)=i\}$ for $i\in \{1,\ldots,k\}$ are the corresponding colour classes. If a $k$-colouring exists, then $G$ is said to be {\it $k$-colourable}.

A graph is {\it acyclic $3$-colourable} or {\it star $3$-colourable} if it is $3$-colourable and every two colour classes induce a forest or a star forest, respectively (in this context, the $P_1$ and $P_2$ are seen as stars). The corresponding decision problems are {\sc Acyclic $3$-Colouring} and {\sc Star $3$-Colouring}. A graph~$G$ is {\it near-bipartite} if its vertex set can be partitioned into an independent set $I$ and a forest $F$; we also say that $I$ is an {\it independent feedback vertex set} of~$G$. The problems {\sc Near-Bipartiteness} and {\sc Independent Feedback Vertex Set} are to decide if a graph is near-bipartite or has an independent feedback vertex set of size at most $k$ for some given integer $k$. A subset $S\subseteq V$ of a graph $G=(V,E)$ is an {\it independent odd cycle transversal} if $S$ is independent and $G-S$ is bipartite. Note that a graph is $3$-colourable if and only if it has an independent odd cycle transversal. The {\sc Independent Odd Cycle Transversal} problem is to decide if a given graph has an independent odd cycle transversal of size at most $k$ for some given integer $k$.

Let $C_r$, $P_r$ and $K_r$ be the cycle, path and complete graph on $r$ vertices. The graph $G+H=(V(G)\cup V(H),E(G)\cup E(H))$ is the disjoint union of graphs $G$ and $H$, and $sG$ is the disjoint union of $s$ copies of $G$. A graph~$G$ is ${\cal H}$-free if $G$ is $H$-free for every $H\in {\cal H}$.

Let $G$ be a graph. For a vertex $u\in V(G)$, we let $N(u)=\{v\; |\; uv\in E(G)\}$ denote the {\it neighbourhood} of $u$ in $G$. For a subset $U\subseteq V(G)$, we write $N(U)=\bigcup_{u\in U}N(u)\setminus U$.

\section{Independent Set}\label{s-hard}

We let ${\cal S}$ denote the set of graphs, each connected component of which is either a subdivided claw or a path. The following well-known result is due to Alekseev.

\begin{theorem}[\cite{Al82}]\label{t-al82}
	Let ${\cal F}$ be a finite set of graphs. If no graph from ${\cal F}$ belongs to ${\cal S}$, then {\sc Independent Set} is \NP-complete for ${\cal F}$-free graphs.
\end{theorem}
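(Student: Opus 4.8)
The plan is to reduce from \textsc{Independent Set} on cubic graphs, which is a classical \NP-complete problem, and to turn a cubic instance into an ${\cal F}$-free instance by subdividing edges many times. First I would recall the folklore fact that subdividing a single edge twice raises the independence number by exactly one: replacing an edge $uv$ by an induced path with two new interior vertices yields a graph $G'$ with $\alpha(G')=\alpha(G)+1$, where $\alpha$ denotes the independence number. Iterating this, if $G_k$ is the graph obtained from $G$ by subdividing every edge exactly $2k$ times, then $\alpha(G_k)=\alpha(G)+k\,|E(G)|$, and $G_k$ can be built in polynomial time for every fixed $k$. Hence $(G,p)$ is a yes-instance of \textsc{Independent Set} if and only if $(G_k,\,p+k\,|E(G)|)$ is, so for each fixed $k$ this transformation is a valid polynomial-time reduction and membership in \NP\ is immediate.

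It then remains to choose $k$ so that $G_k$ is ${\cal F}$-free whenever $G$ is cubic. Let $N=\max_{H\in{\cal F}}|V(H)|$ and take $k>N$. I would record two structural features of $G_k$: it has maximum degree~$3$, and its vertices of degree~$3$ are exactly the original vertices of $G$, while every subdivision vertex has degree~$2$; moreover, since each edge is subdivided $2k$ times, the girth of $G_k$ exceeds $N$ and any two distinct original vertices lie at distance greater than $N$. The heart of the argument is the claim that \emph{every induced subgraph $J$ of $G_k$ on at most $N$ vertices belongs to ${\cal S}$}. Indeed, $J$ has maximum degree at most~$3$ and, as the girth of $G_k$ exceeds $N\ge|V(J)|$, the graph $J$ is a forest. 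Within any tree component of $J$ all vertices are pairwise at distance at most $|V(J)|-1<N$, whereas distinct branch vertices (degree-$3$ vertices) of $G_k$ are at distance greater than $N$; hence each component of $J$ contains at most one branch vertex and is therefore a path or a subdivided claw. Thus $J\in{\cal S}$.

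Combining these observations, since no $H\in{\cal F}$ lies in ${\cal S}$ and every such $H$ has at most $N$ vertices, no $H\in{\cal F}$ can occur as an induced subgraph of $G_k$; that is, $G_k$ is ${\cal F}$-free, and together with the first paragraph this gives \NP-completeness of \textsc{Independent Set} on ${\cal F}$-free graphs. I expect the main obstacle to be the structural claim rather than the reduction itself: one has to check that the only ways a forbidden graph could embed are through a cycle, a vertex of degree at least~$4$, or two branch vertices inside one component, and that these are ruled out respectively by the large girth, the degree bound, and the large pairwise distance between original vertices. The step needing the most care is the distance bookkeeping, ensuring that a tree component with two branch points genuinely forces two far-apart original vertices and hence cannot fit inside a small induced subgraph of $G_k$.
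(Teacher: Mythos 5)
The paper does not prove this statement itself---it imports it from Alekseev~\cite{Al82}---and your argument is a correct reconstruction of the classical proof: Poljak's double-subdivision trick (each double subdivision raises $\alpha$ by exactly one) applied $2k$ times per edge to a subcubic \NP-complete instance, with $k$ exceeding the maximum order $N$ of a graph in ${\cal F}$, so that large girth rules out cycles, the degree bound rules out $K_{1,4}$'s, and the distance $2k+1>N$ between original vertices ensures each component of any $\le N$-vertex induced subgraph has at most one branch vertex and hence lies in ${\cal S}$. All three exclusions check out, and since $k$ is a constant depending only on the fixed class ${\cal F}$, the reduction is polynomial, so your proof is complete as written.
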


We strengthen Theorem~\ref{t-al82} to ${\cal F}$-free graphs of diameter~$2$ if $|{\cal F}|=1$.
We need two lemmas.

\medskip
\begin{figure}[h]
\centering
\begin{tikzpicture}[scale=1.2]
%%% C_5--K_5
% 1st vertex K_5 
\foreach \c in {0,2}
	\draw (0,1)--(\c,0);
% 2nd vertex K_5
\foreach \c in {0,3}
	\draw (1,1)--(\c,0);
% 3rd vertex K_5
\foreach \c in {1,3}
	\draw (2,1)--(\c,0);
% 4th vertex K_5
\foreach \c in {1,4}
	\draw (3,1)--(\c,0);
% 5th vertex K_5
\foreach \c in {2,4}
	\draw (4,1)--(\c,0);

%%% Circle C_5=G
\draw[out=-60, in=-120] (0,0) to (4,0);
\draw (0,0) node[bnode]{}--(1,0) node[bnode]{}--(2,0) node[bnode]{}--(3,0) node[bnode]{}--(4,0) node[bnode]{};

\foreach \z in  {-.5,0.5,4.5}
\foreach \x in  {0,1,2,3,4}
	\draw (\z,2.3)node[snode]{}--(\x,1)node[snode]{};

\draw (-.2,0.05) node[below]{\small $u_1$};	
\draw (1,0) node[below]{\small $u_2$};	
\draw (2,0) node[below]{\small $u_3$};	
\draw (3,0) node[below]{\small $u_4$};
\draw (4.25,0.05) node[below]{\small $u_5$};	
\draw (-.5,2.3) node[above]{\small $y_1$};	
\draw (0.5,2.3) node[above]{\small $y_2$};
\draw (2.5,2.3) node[above]{\small \ldots};
\draw (4.5,2.3) node[above]{\small $y_{25}$};	
\end{tikzpicture}
\caption{The graph $G'$ in the proof of Lemma \ref{l-ic3d2} if $G=C_5$. The black vertices are those of $G$ and the white vertices in the middle are $x_{u_1u_3},x_{u_1u_4},x_{u_2u_4},x_{u_2u_5},$ and $x_{u_3u_5}$ from left to right.}\label{f-p1p2}
\end{figure}

\begin{lemma}\label{l-ic3d2}
	For every $r\geq 3$, {\sc Independent Set} is \NP-complete for $C_r$-free graphs of diameter~$2$.
\end{lemma}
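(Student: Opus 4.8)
The plan is to reduce from {\sc Independent Set} on $C_r$-free graphs, which is \NP-complete by Theorem~\ref{t-al82}: indeed $C_r\notin{\cal S}$ for every $r\geq 3$, since a cycle is neither a path nor a subdivided claw. It therefore suffices to transform, in polynomial time, an arbitrary $C_r$-free graph $G$ with $V(G)=\{u_1,\ldots,u_n\}$ into a $C_r$-free graph $G'$ of diameter~$2$ together with an integer $p'$ so that $G$ has an independent set of size $p$ if and only if $G'$ has one of size $p'$. I would first dispose of an easy case: for $r\geq 4$, let $G'$ be $G$ with one added dominating vertex $v^{\ast}$ adjacent to all of $V(G)$. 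Then $G'$ has diameter at most~$2$, the independence number $\alpha(G')$ equals $\alpha(G)$ (a dominating vertex lies in no independent set of size at least~$2$), and $G'$ stays $C_r$-free because $v^{\ast}$ has $r-1\geq 3$ neighbours on any putative induced $C_r$ through it and so cannot lie on one; hence every induced $C_r$ would already live in $G$. The substantial case is thus $r=3$, where this shortcut fails because $v^{\ast}$ together with any edge of $G$ forms a triangle.

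For the triangle-free case I would build $G'$ as suggested by \figurename~\ref{f-p1p2}. Keep $G$ on $U=\{u_1,\ldots,u_n\}$; for every non-adjacent pair $u_i,u_j$ add a \emph{connector} $x_{ij}$ adjacent to exactly $u_i$ and $u_j$ (forming a set $X$); and add a large independent set $Y$ of \emph{apex} vertices, each adjacent to all of $X$, with $|Y|$ chosen to exceed $|U|+|X|$. The connectors give every non-adjacent pair inside $U$ a common neighbour while creating no triangle, precisely because they join non-adjacent vertices, and the apex vertices put every pair inside $X$ at distance~$2$. The independence count is then forced: since $|Y|$ exceeds the number of all other vertices, every maximum independent set of $G'$ contains all of $Y$, can contain no vertex of $X$ (each is adjacent to $Y$), and restricted to $U$ is exactly a maximum independent set of $G$; hence $\alpha(G')=|Y|+\alpha(G)$, and we may take $p'=|Y|+p$. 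Here it is important that $Y$ is \emph{not} joined to $U$, since otherwise taking all of $Y$ would block some vertices of $U$ and spoil the clean relation.

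It remains to prove that $G'$ has diameter~$2$ and is triangle-free, and this is where I expect the main difficulty to lie. The tension is structural: diameter~$2$ demands many common neighbours, whereas triangle-freeness forbids a common neighbour for every \emph{adjacent} pair, so no new vertex may be adjacent simultaneously to both endpoints of an edge of $G$, nor to both $u_i$ and a connector $x_{ij}$. The heart of the argument is a case analysis over the vertex types ($U$--$U$, $U$--$X$, $U$--$Y$, $X$--$X$, $X$--$Y$, $Y$--$Y$) showing that every non-adjacent pair acquires a common neighbour while no adjacent pair does, the latter yielding triangle-freeness at the same time. The genuinely awkward pairs are those $(u_i,x_{kl})$ for which $\{u_i,u_k,u_l\}$ is independent in $G$: these are exactly the pairs left at distance~$3$ by the bare connector-plus-apex gadget, and they correspond to independent triples of $G$. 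Repairing every such pair while (a) creating no triangle and (b) not inflating the independence number is, I expect, the crux of the proof; I would handle it by enlarging the gadget with further connectors whose adjacencies are tuned to supply the missing common neighbours only between non-adjacent vertices, verifying by the same type-by-type analysis that any induced cycle through a new vertex must have a chord, so that all induced cycles—in particular any $C_r$—remain confined to $G$.

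A small amount of preprocessing of $G$ (discarding instances with $\alpha(G)\leq 2$, for which the problem is trivial, and ensuring every vertex has a non-neighbour) removes the degenerate corner cases and lets the distance and triangle checks go through uniformly.
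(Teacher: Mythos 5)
Your overall architecture matches the paper's proof: the dominating-vertex shortcut for $r\geq 4$ is exactly what the paper does, and for $r=3$ your connector-plus-apex construction is the paper's skeleton (Figure~\ref{f-p1p2}), with the same counting argument via a huge independent set $Y$ joined completely to $X$. But your proposal stops exactly at the step you yourself flag as the crux, and that step is where the only real idea of the proof lives. The pairs $(u,x_{kl})$ with $\{u,u_k,u_l\}$ independent in $G$ are left at distance~$3$, and you offer only a promissory note ("enlarging the gadget with further connectors whose adjacencies are tuned\ldots"). The paper's resolution is not to add more vertices at all, but to change what each connector is attached to: for each nonadjacent pair $\{u,v\}$, the vertex $x_{uv}$ is made adjacent to an \emph{inclusion-maximal} independent set $I_{uv}$ of $G$ containing $u$ and $v$ (greedily computable). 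Then if $u$ is not adjacent to $x_{u'v'}$, we have $u\notin I_{u'v'}$, and maximality forces $u$ to have a neighbour $z\in I_{u'v'}$, which is a common neighbour of $u$ and $x_{u'v'}$ --- the awkward pairs disappear for free. Triangle-freeness survives because each $N(x_{uv})\cap V(G)=I_{uv}$ is independent, and the counting argument is untouched since $X$ is still completely joined to $Y$ and nothing else changes.

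It is worth noting that your proposed repair direction is not merely unfinished but structurally perilous, for the reason you half-identify: since every $x\in X$ is adjacent to every $y\in Y$, any repair vertex adjacent simultaneously to some vertex of $X$ and some vertex of $Y$ closes a triangle; but a repair vertex \emph{not} joined to $Y$ can join the big independent set and inflate $\alpha(G')$ beyond $|Y|+\alpha(G)$, spoiling the clean relation $p'=|Y|+p$ unless you control these new vertices by further adjacencies --- which reintroduces the same distance/triangle tension one layer up. Iterating connectors does not obviously terminate, whereas replacing pairwise attachment by maximal-independent-set attachment solves everything in one stroke. One more small repair your write-up needs even in the $r\geq 4$ case: you should also handle $r=4$ and $r=5$ consistently with your claim $\alpha(G')=\alpha(G)$; this holds once $\alpha(G)\geq 2$, which your preprocessing grants, and the paper sidesteps it by simply observing $(G,k)$ and $(G',k)$ are equivalent instances for $k\geq 2$.
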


\begin{proof}
	First suppose that $r=3$. By Theorem~\ref{t-al82}, {\sc Independent Set} is \NP-complete for $C_3$-free graphs.
	Let $(G,k)$ be an instance of {\sc Independent Set}, where $G$ is an $n$-vertex $C_3$-free graph. We may assume without loss of generality that $n\geq 2$ and $k\geq 2$; otherwise, the problem is trivial. We also assume that $G$ has no dominating vertex; else, if $u$ is a dominating vertex of $G$, then $G$ is a star with centre $u$ and the leaves of $G$ form a maximum independent set of size $n-1$. From $G$ we now construct a graph $G'$ as follows; see also Figure~\ref{f-p1p2}.

	\begin{itemize}
		\item Construct a copy of $G$.
		\item For every pair $\{u,v\}$ of nonadjacent vertices of $G$, 
		\begin{itemize}
			\item construct an arbitrary inclusion maximal independent set $I_{uv}$ of $G$ containing $u$ and $v$,
			\item construct a vertex $x_{uv}$ and make $x_{uv}$ adjacent to every vertex of $I_{uv}$.
		\end{itemize}
		Denote by $X$ the set of all vertices $x_{uv}$ constructed for the pairs of nonadjacent $u$ and $v$.
		\item Construct an
		independent set of $n^2$ vertices $Y$ and make every vertex of $Y$ adjacent to every vertex of~$X$.  
	\end{itemize}      
	
\noindent
	Observe that for every pair $\{u,v\}$ of nonadjacent vertices of $G$, $I_{uv}$ can be constructed by the straightforward greedy procedure starting from the set $\{u,v\}$. This implies that $G$ can be constructed in polynomial time.
	Our construction immediately implies that $G'$ is triangle free, because the initial graph $G$ is triangle-free and the neighbourhood of every vertex from $X\cup Y$ is an  independent set. We claim that the diameter of $G'$ is at most~$2$.
	
	Consider two distinct vertices $u,v\in V(G')$. We show that either $u$ and $v$ are adjacent or they have a common neighbour $z$. This is easy to see if $u,v\in X\cup Y$. Suppose that $u\in V(G)$ and $v$ is not adjacent to $u$. 
	If $v\in V(G)$, then $G'$ contains $x_{uv}$ that is adjacent to both $u$ and $v$ as required. Let $v\in X$. Then $v=x_{u'v'}$ is constructed for some nonadjacent $u',v'\in V(G)$.  We have that $u\notin I_{u'v'}$, because $u$ and $v$ are not adjacent. We also have that $I_{u'v'}\cup \{u\}$ is not an independent set, because $I_{u'v'}$ is inclusion maximal. Hence, $u$ is adjacent to some $z\in I_{u'v'}$ which is a common neighbour of $u$ and $v$.
	Assume that $v\in Y$. Since $G'$ has no universal vertex, $u$ is not adjacent to some $v'\in V(G)$. Therefore, there is $x_{uv'}\in X$. Clearly, this vertex is adjacent to $u$ and $v$. This completes the case analysis. Since $u$ and $v$ are arbitrary, we conclude that the diameter of $G'$ is at most two.
	
	Let $k'=n^2+k$. We show that $G$ has an independent set of size at least $k$ if and only if $G'$ has an independent set of size at least $k'$. 
	
	Assume that $G$ has an independent set $U$ of size at least $k$. Consider this set in the copy of $G$ in $G'$ and set $W=U\cup Y$. Clearly, $W$ is an independent set of $G'$ and $|W|=|U|+|Y|\geq n^2+k=k'$. For the opposite direction, let $W$ be an independent set of $G'$ of size at least $k'$. Observe that $W\cap Y\neq \emptyset$. Otherwise, $W\subseteq V(G)\cup X$ and $|W|\leq |V(G)|+|X|\leq n+\binom{n}{2}<n^2$, because $n\geq 2$. Let $y\in W\cap Y$. Because $N_{G'}(y)=X$, $W\cap X=\emptyset$. Let $U=W\cap V(G)$. We obtain that $|U|=|W|-|W\cap Y|\geq |W|-n^2\geq k'-n^2=k$. Hence, $U$ is an independent set of $G$ of size at least $k$.
	
	\medskip
	\noindent
	Now suppose that $r\geq 4$. By Theorem~\ref{t-al82}, {\sc Independent Set} is \NP-complete for $C_r$-free graphs. Let $(G,k)$ be an instance of {\sc Independent Set}, where $G$ is a $C_r$-free graph. We add a dominating vertex $u$ to $G$. This yields a graph $G'$, which is also $C_r$-free but which has diameter~$2$. Moreover, $G$ has an independent set of size at least $k$ if and only if $G'$ has an independent set of size at least $k$. This completes the proof.
\end{proof}

\begin{lemma}\label{l-k14}
	{\sc Independent Set} is \NP-complete for $K_{1,4}$-free graphs of diameter~$2$.
\end{lemma}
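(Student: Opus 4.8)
The plan is to combine Theorem~\ref{t-al82} with a diameter-reducing gadget, in the same spirit as Lemma~\ref{l-ic3d2}. First I would check that $K_{1,4}\notin{\cal S}$: it is connected but neither a path nor a subdivided claw, since its centre has degree~$4$ whereas a subdivided claw has a unique vertex of degree~$3$ and three leaves. Hence Theorem~\ref{t-al82} applies to ${\cal F}=\{K_{1,4}\}$ and yields an \NP-complete instance $(G,k)$ of {\sc Independent Set} with $G$ being $K_{1,4}$-free; as in Lemma~\ref{l-ic3d2} I would also assume that $n=|V(G)|$ is large and that $G$ has no dominating vertex, both being harmless preprocessing steps. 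The goal is then to build, in polynomial time, a $K_{1,4}$-free graph $G'$ of diameter at most~$2$ together with an integer $k'$ such that $G'$ has an independent set of size $k'$ if and only if $G$ has one of size~$k$.

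Second, I would follow the three-layer template of Lemma~\ref{l-ic3d2}: keep a copy of $G$; add a set $X$ of \emph{connector} vertices whose role is to supply, for every pair of non-adjacent vertices, a common neighbour (thereby forcing diameter~$2$); and add a large \emph{amplifier} set $Y$ of roughly $n^2$ vertices, designed so that any independent set of size at least $k'=|Y|+k$ is forced to contain all of $Y$ and hence to avoid $X$ entirely, leaving an independent set inside the copy of $G$. With such a construction the equivalence of the two instances is routine, exactly as in Lemma~\ref{l-ic3d2}: the forward direction takes $U\cup Y$ for an independent set $U$ of $G$, and the backward direction observes that a solution of size $k'$ must meet $Y$, which through the adjacencies of $Y$ rules out the connectors and restricts the remainder to $V(G)$. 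Verifying that every two vertices have a common neighbour, i.e.\ the diameter-$2$ claim, would again be a short case analysis over the layers.

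The crucial and genuinely different point, and the step I expect to be the main obstacle, is preserving $K_{1,4}$-freeness, which in fact forces the \emph{opposite} design to Lemma~\ref{l-ic3d2}. There the connectors were joined to maximal \emph{independent} sets, which is harmless under triangle-freeness but here is fatal, since such a connector becomes the centre of an arbitrarily large induced star and hence of an induced $K_{1,4}$. Being $K_{1,4}$-free is equivalent to every neighbourhood having independence number at most~$3$, so the connector and amplifier layers must instead be organised as \emph{cliques}, and each connector must be attached to its pair of vertices \emph{together with enough of their neighbourhoods} that it neither becomes the centre of, nor extends, a set of four pairwise non-adjacent vertices. The tension is intrinsic: a common neighbour of a non-adjacent pair already has two pairwise non-adjacent neighbours, and because {\sc Independent Set} is polynomial-time solvable on claw-free graphs the hard base graph $G$ must contain induced claws, so a careless attachment turns an induced claw at an endpoint into an induced $K_{1,4}$. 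Resolving this, by choosing the connector adjacencies so that they are ``absorbed'' into the cliques covering the relevant neighbourhoods while still guaranteeing a common neighbour for \emph{every} non-adjacent pair, is the technical heart of the proof; once the construction is fixed, the $K_{1,4}$-freeness check reduces to confirming that each of the finitely many types of neighbourhood has independence number at most~$3$.
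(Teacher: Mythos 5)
Your proposal stalls exactly at the step that carries the proof's content: you correctly diagnose the obstruction (a connector serving a non-adjacent pair must not become the centre of, nor extend, an independent set of size four in any neighbourhood), but you never produce a construction that overcomes it, and you even anticipate that it is ``the main obstacle.'' The missing idea in the paper is not a cleverer attachment scheme for connectors on an arbitrary $K_{1,4}$-free base graph, but a change of the \emph{base class}: Theorem~\ref{t-al82} applies just as well to ${\cal F}=\{C_3,K_{1,4}\}$, since neither graph lies in ${\cal S}$, and a $(C_3,K_{1,4})$-free graph is subcubic. The paper then attaches one connector $x_{e_1,e_2}$ to the four endpoints of every pair of \emph{disjoint edges} $e_1,e_2$ of $G$ (not to pairs of non-adjacent vertices together with portions of their neighbourhoods) and lets $X$, the set of connectors plus one extra vertex $y$, induce a clique. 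This resolves your tension for free: the neighbourhood of each connector is covered by three cliques (the two edges and $X\setminus\{x_{e_1,e_2}\}$), and for a vertex $z$ of $G$, subcubicity forces the three non-$X$ leaves of a putative $K_{1,4}$ centred at $z$ to exhaust $N_G(z)$, while every connector adjacent to $z$ is by construction also adjacent to a $G$-neighbour of $z$, so the four leaves cannot be independent. The claw-extension problem you worry about on a general $K_{1,4}$-free base graph (which must contain claws, as you note) simply never arises. Diameter~$2$ is then obtained via connectivity, minimum degree~$2$ and triangle-freeness, which guarantee the needed disjoint edge pairs.

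A second, concrete defect: your amplifier layer is internally inconsistent. If $Y$ is a large independent set complete to the clique $X$, as the forcing argument $k'=|Y|+k$ from Lemma~\ref{l-ic3d2} requires, then every $x\in X$ has all of $Y$ in its neighbourhood and is the centre of an induced $K_{1,4}$; if instead $Y$ is a clique, as your $K_{1,4}$-freeness analysis demands, then an independent set meets $Y$ in at most one vertex and the forcing logic collapses. The paper dispenses with an amplifier entirely: since $X$ is a clique, any independent set of $G'$ contains at most one vertex of $X$, so the single vertex $y$ and the threshold $k'=k+1$ already yield the equivalence in both directions.
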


\begin{proof}
	By Theorem~\ref{t-al82}, {\sc Independent Set} is \NP-hard for $(C_3,K_{1,4})$-free graphs. Let $(G,k)$ be an instance of {\sc Independent Set}, where $G$ is a $(C_3,K_{1,4})$-free graph of order $n$ and size $m$. We may assume without loss of generality that $G$ is connected. Note that $G$ is subcubic, that is, has maximum degree at most~$3$. We may assume without loss of generality that
	$G$ has no vertices of degree~$1$ (as we can pick such vertices to be in the independent set and remove their neighbours from $G$ until this operation can no longer be applied).
	To $G$, we add for every pair of edges $e_1$ and $e_2$ of $G$ that do not share an end-vertex,
	a new vertex~$x_{e_1,e_2}$ and an edge between $x_{e_1,e_2}$ and the two end-vertices of both $e_1$ and $e_2$. 
We also add a new vertex $y$ and all edges such that 
	$\{x_{e_1,e_2}:e_1,e_2\}\cup \{y\}$ induces a complete subgraph. Let $G'$ be the resulting graph and let $X:= \{x_{e_1,e_2}:e_1,e_2\}\cup \{y\}$. See also Figure~\ref{f-p5}.
\begin{figure}[h]
\centering
\begin{tikzpicture}[scale=1.2]
%%% C_5--K_5
% 1st vertex K_5 
\foreach \c in {0,1,2,3}
	\draw (0,1)--(\c,0);
% 2nd vertex K_5
\foreach \c in {0,1,2,4}
	\draw (1,1)--(\c,0);
% 3rd vertex K_5
\foreach \c in {0,1,3,4}
	\draw (2,1)--(\c,0);
% 4th vertex K_5
\foreach \c in {0,2,3,4}
	\draw (3,1)--(\c,0);
% 5th vertex K_5
\foreach \c in {1,2,3,4}
	\draw (4,1)--(\c,0);

%%% Circle C_5=G
\draw[out=-60, in=-120] (0,0) to (4,0);
\draw (0,0) node[bnode]{}--(1,0) node[bnode]{}--(2,0) node[bnode]{}--(3,0) node[bnode]{}--(4,0) node[bnode]{};

%%% K_5
\draw[bend left] (0,1) to (4,1);
\draw[bend left] (0,1) to (3,1);
\draw[bend left] (0,1) to (2,1);
\draw[bend left] (1,1) to (4,1);
\draw[bend left] (1,1) to (3,1);
\draw[bend left] (2,1) to (4,1);
\draw (0,1) --(4,1);

%%% z
\foreach \x in{0,1,2,3,4}
	\draw (2,2.3)node[snode]{}--(\x,1)node[snode]{};

%%% labelling
\draw (0.5,0.1) node[below]{\small $e_1$};	
\draw (1.5,0.1) node[below]{\small $e_2$};	
\draw (2.5,0.1) node[below]{\small $e_3$};	
\draw (3.5,0.1) node[below]{\small $e_4$};	
\draw (2,-0.9) node[below]{\small $e_5$};	
\draw (2,2.3) node[above]{\small $y$};	

\end{tikzpicture}
\caption{The graph $G'$ in the proof of Lemma \ref{l-k14} if $G=C_5$. The black vertices are those of $G$ and the white vertices in the middle are $x_{e_1e_3},x_{e_2e_5},x_{e_1e_4},x_{e_3e_5},$ and $x_{e_2e_4}$ from left to right.}\label{f-p5}
\end{figure}
	
We claim that $G'$ is $K_{1,4}$-free. For a contradiction, assume that there is some induced $K_{1,4}$ in $G'$, say $z$ is its centre vertex and $z_1,z_2,z_3,z_4$ are its leafs. Since $N_{G'}(x)$ can be partitioned into at most three cliques for each $x\in X$, we have $z\in V(G)$. Furthermore, since $G$ is subcubic and $X$ is a clique, $X$ contains at least one and at most one vertex of $\{z_1,z_2,z_3,z_4\}$. As we see now, $N_G(z)\subseteq \{z_1,z_2,z_3,z_4\}$. However, each vertex of $X$ that is adjacent to $z$ has a neighbour in $N_G(z)$, which contradicts our assumption that $z$ is the centre vertex of an induced $K_{1,4}$ in $G'$.
	
To show that $G'$ has diameter~$2$, first consider an arbitrary vertex $u$ of $G$.
As $G$ is connected, there is a vertex $v\in N_G(u)$. As $G$ has minimum degree at least $2$ and $G$ is triangle-free, there are vertices $u'\in N_G(u)\setminus\{v\}$ and $v'\in N_G(v)\setminus\{u,u'\}$. For $e_1=uu'$ and $e_2=vv'$,  $e_1$ and $e_2$ do not have a common end-vertex, $u,x\in N_{G'}[x_{e_1,e_2}]$, and $dist_{G'}(u,x)\leq 2$ for each $x\in X$. Additionally, if $w\in V(G)\setminus\{u\}$ is a vertex with $dist_G(u,w)\geq 3$, then, since $G$ is connected, there are edges $e_3,e_4\in E(G)$ such that $u$ is incident to $e_3$, $w$ is incident to $e_4$, and the two edges $e_3,e_4$ have no common incident vertex. Thus, $u,w\in N_G(x_{e_3,e_4})$ and so $dist_{G'}(u,w)\leq 2$. As $X$ is a clique, we conclude that $G'$ has diameter $2$.
	
	We observe that $I\cup\{y\}$ is an independent set of size $k+1$ in $G'$ if $I$ is an independent set of size $k$ in $G$. Vice versa, given an independent set $I'$ of size $k+1$ in $G'$, at most one of its vertices belongs to
	$X$ (as $X$ is a clique) and thus  $I'\setminus X$ is an independent set of size at least $k$ in $G$.
	
Finally, we observe that the graph $G'$ can be constructed in time $\mathcal{O}(m^2)$ and has at most $n+m^2$ vertices. 
\end{proof}

\noindent
We now strengthen Theorem~\ref{t-al82}, but can only do this for the case where $|{\cal F}|=1$. For example, if ${\cal F}=\{C_3,K_{1,4}\}$, every ${\cal F}$-free graph has maximum degree~$3$. Then every ${\cal F}$-free graph with bounded diameter has constant size. Hence, {\sc Independent Set} is \NP-complete for ${\cal F}$-free graphs by Theorem~\ref{t-al82} but constant-time solvable for ${\cal F}$-free graphs of bounded diameter.

\begin{theorem}\label{t-stronger}
	Let $H$ be a graph. If $H\notin {\cal S}$, then {\sc Independent Set} is \NP-complete for $H$-free graphs of diameter at most~$2$. 
	If $H\in {\cal S}$, then {\sc Independent Set} for $H$-free graphs is polynomially equivalent to {\sc Independent Set} for $H$-free graphs of diameter at most~$2$.
\end{theorem}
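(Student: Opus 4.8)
The plan is to separate the two assertions and to reduce everything to a single structural dichotomy for $H$, combined with two elementary tools. The first tool is a \emph{subclass principle}: if $F\ssi H$, then every $F$-free graph is $H$-free, so every $F$-free graph of diameter at most~$2$ is an $H$-free graph of diameter at most~$2$; hence any \NP-completeness result for \textsc{Independent Set} on $F$-free graphs of diameter~$2$ transfers verbatim to $H$-free graphs of diameter~$2$. The second tool is the \emph{dominating-vertex trick}: if $H$ has no vertex adjacent to all the others (no \emph{universal} vertex), then adding a dominating vertex $w$ to any $H$-free graph $G$ keeps it $H$-free. Indeed, an induced copy of $H$ in the new graph $G'$ either avoids $w$, and then lies in $G$, or contains $w$, which would make $w$ universal in that copy; both are impossible. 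At the same time $G'$ has diameter at most~$2$ and its maximum independent set has the same size as that of $G$ (for nonempty $G$), so $(G,k)\mapsto(G',k)$ is a valid reduction.

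For the first statement I would assume $H\notin\mathcal{S}$ and argue by three exhaustive cases. (i) If $H$ is not a forest, then its shortest cycle is chordless, hence an induced $C_r$ for some $r\ge 3$; the subclass principle together with Lemma~\ref{l-ic3d2} (stated for all $r\ge 3$, which is exactly what is needed here) finishes this case. (ii) If $H$ is a forest with a vertex of degree at least~$4$, then, since the neighbours of a vertex in a forest are pairwise non-adjacent, $H$ contains an induced $K_{1,4}$, and Lemma~\ref{l-k14} applies through the subclass principle. (iii) Otherwise $H$ is a forest of maximum degree at most~$3$; because $H\notin\mathcal{S}$, some component is neither a path nor a subdivided claw, so it has at least two vertices of degree~$3$. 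In particular $H$ is not a star, and a forest has a universal vertex only if it is a star, so $H$ has no universal vertex. I would then reduce \textsc{Independent Set} on $H$-free graphs, which is \NP-complete by Theorem~\ref{t-al82}, to the diameter-$2$ restriction using the dominating-vertex trick.

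Case (iii) is the main obstacle: here neither lemma is available through the subclass principle, since such an $H$ contains no induced cycle (it is a forest) and no induced $K_{1,4}$ (it is subcubic). The key realisation that unlocks it is that these are precisely the forbidden graphs for which the dominating-vertex trick succeeds. The only $H\notin\mathcal{S}$ forcing us to resort to the more delicate constructions of Lemmas~\ref{l-ic3d2} and~\ref{l-k14} are exactly those possessing a universal vertex $u$: if $H-u$ has an edge then $u$ together with that edge gives an induced $C_3$, and if $H-u$ is edgeless then $H$ is a star $K_{1,t}$ with $t\ge 4$ (as $H\notin\mathcal{S}$) and hence contains an induced $K_{1,4}$. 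Verifying that adding a dominating vertex cannot create an induced copy of a universal-vertex-free $H$ is the crux, but it follows immediately from the trick above.

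For the second statement, let $H\in\mathcal{S}$. One direction of the equivalence is immediate, since every $H$-free graph of diameter at most~$2$ is an $H$-free graph, so \textsc{Independent Set} on the former class is a special case of \textsc{Independent Set} on the latter. For the converse reduction I would again invoke the dominating-vertex trick whenever $H$ has no universal vertex, which reduces \textsc{Independent Set} on $H$-free graphs to its diameter-$2$ restriction. The finitely many members of $\mathcal{S}$ that do have a universal vertex are exactly $P_1,P_2,P_3$ and $K_{1,3}$; for each of these \textsc{Independent Set} is polynomial-time solvable on all $H$-free graphs (trivially for $P_1,P_2,P_3$ and by the claw-free algorithm for $K_{1,3}$), hence also on the diameter-$2$ subclass, so both problems are polynomial-time solvable and therefore polynomially equivalent.
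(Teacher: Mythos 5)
Your proof is correct and follows essentially the same route as the paper's: the same case split on $H\notin{\cal S}$ (induced $C_r$ via Lemma~\ref{l-ic3d2}, induced $K_{1,4}$ via Lemma~\ref{l-k14}, and otherwise the dominating-vertex reduction from Theorem~\ref{t-al82}, valid because such an $H$ has no dominating vertex), and the same handling of $H\in{\cal S}$ by noting that $P_1,P_2,P_3,K_{1,3}$ are the only graphs in ${\cal S}$ with a dominating vertex and that {\sc Independent Set} is polynomial-time solvable for those cases. Your explicit verifications (that a subcubic tree which is neither a path nor a subdivided claw has two degree-$3$ vertices, and that the universal-vertex members of ${\cal S}$ are exactly those four graphs) merely spell out details the paper leaves implicit.
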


\begin{proof}
	Let  $H$ be a graph. First assume that $H\notin {\cal S}$. If $H$ contains an induced cycle~$C_r$ for some $r\geq 3$, then the class of $H$-free graphs contains the class of $C_r$-free graphs, and we use Lemma~\ref{l-ic3d2}.
	Otherwise $H$ is a forest with either an induced $K_{1,4}$ or a connected component that has at least two vertices of degree~$3$. In the first case, 
	we use Lemma~\ref{l-k14}. In the second case, we reduce from {\sc Independent Set} for $H$-free graphs, which is \NP-complete by Theorem~\ref{t-al82}. Let $(G,k)$ be an instance of {\sc Independent Set}, where $G$ is an $H$-free graph. 
We add a dominating vertex to $G$ to obtain a graph $G'$. As $H$ has no dominating vertex, $G'$ is $H$-free.
We also note that $(G,k)$ and $(G',k)$ are equivalent instances.
	
Now assume that $H\in {\cal S}$. Any polynomial-time algorithm for {\sc Independent Set} on $H$-free graphs can be used on $H$-free graphs of diameter~$2$. As {\sc Independent Set} is polynomial-time solvable for $K_{1,3}$-free 
	graphs~\cite{Sb80}, we may assume that $H\notin \{K_{1,3},P_1,P_2,P_3\}$. Any polynomial-time algorithm for {\sc Independent Set} on $H$-free graphs of diameter~$2$ can be used on $H$-free graphs of arbitrary diameter as follows. To the $H$-free input graph $G$ we add a dominating vertex. As $H\in {\cal S}\setminus \{K_{1,3},P_1,P_2,P_3\}$
and $K_{1,3}$, $P_1$, $P_2$, $P_3$ are the only graphs in ${\cal S}$ with a dominating vertex,	
this yields an $H$-free graph $G'$ of diameter~$2$. We then observe that $G$ has an independent set of size at least~$k$ if and only if $G'$ has an independent set of size at least~$k$.
\end{proof}

\section{Chair-Free Graphs of Bounded Diameter}\label{s-h}

It follows from Ramsey's Theorem that every $k$-colourable $K_{1,r}$-free graph of diameter at most~$d$ has order bounded by a function in $d,k,r$; see~\cite{MPS19} for a proof of this observation. As a consequence, every problem that has the property that all its yes-instances are $k$-colourable for some constant~$k$ is constant-time solvable on $K_{1,r}$-free graphs of diameter at most $d$.
We aim to extend the above observation to $H$-free graphs of bounded diameter when $H$ is obtained from a star after at least one edge subdivision.
In~\cite{MPS19}, a number of results are given for {\sc $3$-Colouring} for such graph classes. 
We consider a variety of problems that require all the yes-instances to be $3$-colourable. We focus on the first interesting case which is where $H$ is the chair $S_{1,1,2}$ (recall that the chair is obtained from the claw after subdividing one edge). 

We need the following characterization of bipartite chair-free graphs, due to Alekseev.
A \emph{complex} is a bipartite graph that can be obtained by removing the edges of a possibly empty matching from a complete bipartite graph.

\begin{theorem}[\cite{Al04}]\label{t-al04}
If $G$ is a connected bipartite chair-free graph, then $G$ is a cycle or a path or a complex.
\end{theorem}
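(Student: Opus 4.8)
The plan is to prove the contrapositive direction constructively: given a connected bipartite chair-free graph $G$ with bipartition $(A,B)$, I want to show it must be one of the three listed types. The chair $S_{1,1,2}$ has a degree-$3$ vertex $x$ with three pendant paths of lengths $1,1,2$; forbidding it as an induced subgraph restricts how neighbourhoods can overlap. First I would dispose of the low-degree case: if every vertex of $G$ has degree at most $2$, then $G$ is a path or a cycle (these are the only connected graphs with maximum degree $2$), so I may assume some vertex $u$ has degree at least $3$.

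With a vertex $u$ of degree $\geq 3$ in hand, say $u \in A$, the goal is to force $G$ to be a complex. The key structural consequence of chair-freeness I would extract is a ``neighbourhood nesting / comparability'' property: for any two vertices $a,a'$ on the same side $A$, their neighbourhoods $N(a)$ and $N(a')$ inside $B$ should be comparable or nearly so. Concretely, suppose $a \in A$ has a neighbour $b \in B$ that is not adjacent to $a' \in A$, while $a'$ has some neighbour; I would chase a chair. The natural configuration is to take the degree-$3$ structure at $u$ together with a vertex at distance $2$ producing the length-$2$ arm of the chair, and argue that the only way to avoid an induced chair is for the bipartite adjacency to be ``complete up to a matching.'' This is exactly the definition of a complex, so the heart of the argument is showing that the bipartite complement of $G$ (between $A$ and $B$) is a matching.

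The cleanest route to that conclusion is to assume for contradiction that the bipartite complement between $A$ and $B$ contains two non-edges sharing a vertex, or two independent non-edges forming a $2K_2$ in the complement, and in each case locate an induced chair. For instance, if $a \in A$ is non-adjacent to both $b_1, b_2 \in B$ (two complement edges at $a$), then using a common neighbour and the connectivity of $G$ to reach an additional pendant vertex, I would construct the claw part plus the subdivided arm. I expect the main obstacle to be the careful case analysis guaranteeing that the fourth and fifth vertices needed for the chair genuinely exist and induce exactly the chair (no extra edges), which is where connectivity and the degree-$\geq 3$ assumption must be invoked to supply the vertices realizing the length-$2$ arm. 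Once the complement between $A$ and $B$ is shown to be a matching, $G$ is by definition a complex, completing the case $\max \deg \geq 3$ and hence the theorem.

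Since Theorem~\ref{t-al04} is credited to Alekseev~\cite{Al04} and stated without proof in the excerpt, I would in practice simply cite it; but were I to reconstruct the argument, the outline above---reduce to the max-degree-$2$ case yielding paths and cycles, then force the bipartite complement to be a matching by forbidding induced chairs---is the approach I would follow, with the delicate chair-extraction in the bipartite-complement step being the crux.
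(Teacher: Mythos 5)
The paper contains no proof of this statement: it is imported verbatim from Alekseev~\cite{Al04}, so your fallback of simply citing it is exactly what the authors do. Judged as a reconstruction attempt, however, your sketch has both a concrete error and a genuine gap. The error is in your case split. A set of bipartite non-edges fails to be a matching if and only if two non-edges share an endpoint; two independent non-edges forming a $2K_2$ in the complement are perfectly consistent with $G$ being a complex, so your second contradiction case attacks a configuration from which no chair can possibly be extracted. For instance, $K_{3,3}$ minus a perfect matching (which is $C_6$) and $K_{4,4}$ minus a perfect matching are connected, bipartite, chair-free, and their complements contain several independent non-edges. Any attempt to ``locate an induced chair'' in that case is doomed, since the conclusion of the theorem itself permits such configurations.

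The remaining case (a vertex non-adjacent to two vertices of the other side) is the right forbidden configuration, but it is not locally contradictory either, which is why your deferred ``chair-chasing'' is the entire mathematical content and cannot be waved at. Take $A=\{u,a,w\}$, $B=\{b_1,b_2,b_3,b_1^*,b_2^*\}$, with $u,a,w$ all adjacent to $b_1,b_2,b_3$, and additionally $ub_1^*,wb_2^*\in E(G)$. Here $a$ misses both $b_1^*$ and $b_2^*$, yet no chair exists centred at or near $a$: for every $x\in A$ and every $3$-subset $T\subseteq N(x)$, every other vertex of $A$ meets $T$ in $0$, $2$ or $3$ vertices. The only chairs in this graph are centred on the $B$ side (centre $b_1$, leaves $a,w$, arm $b_1$--$u$--$b_1^*$). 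A correct argument must therefore first isolate the local lemma that, in a bipartite graph, chair-freeness says exactly this: for every vertex $x$ with $\deg(x)\geq 3$ and every vertex $y$ on the same side, $N(y)$ meets each $3$-subset of $N(x)$ in either $0$ or at least $2$ vertices; then propagate it by connectivity from the degree-$3$ vertex $u$ (showing every vertex on $u$'s side lies within distance $2$ of $u$ and misses at most one vertex of $N(u)$, and similarly around the distance-$2$ vertices); and finally apply the same lemma with centres on the \emph{other} side of the bipartition to kill double misses, as the example shows is unavoidable. None of this appears in your proposal, and you flag the omission yourself (``the delicate chair-extraction \dots being the crux''). An outline that names its crux but does not execute it, and whose stated case analysis includes an unprovable case, is not a proof.
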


It is well-known (cf.\,\cite{HoTa73}) that finding the components of a graph by breadth-first search takes $O(n+m)$ time. Let $p$ be the number of components of a graph $G$, $n$ be its order, and $m$ be its size. Then $G$ is a forest if and only if $p=n-m$, which implies the following result.

\begin{observation}\label{c-forest}
If $G$ is a graph, then we can decide if $G$ is a forest in $O(n+m)$ time. 
\end{observation}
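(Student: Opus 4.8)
The plan is to reduce the question to a single arithmetic test based on the standard fact that a graph $G$ on $n$ vertices with $m$ edges and $p$ connected components is a forest if and only if $m = n - p$, equivalently $p = n - m$. Thus once we know $p$, $n$ and $m$, deciding whether $G$ is a forest is immediate, and the whole task becomes one of computing these three quantities efficiently.

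First I would establish the identity. A forest is, by definition, an acyclic graph, so each of its connected components is a tree. A tree on $n_i$ vertices has exactly $n_i - 1$ edges, so summing over the $p$ components of a forest gives $m = \sum_{i=1}^{p}(n_i - 1) = n - p$. For the converse, recall that every connected graph on $n_i$ vertices has at least $n_i - 1$ edges, with equality precisely when it is a tree; hence if some component of $G$ contains a cycle, that component contributes strictly more than $n_i - 1$ edges and we get $m > n - p$. Therefore $m = n - p$ holds exactly when every component is a tree, that is, exactly when $G$ is a forest.

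It remains to compute the three quantities within the claimed time bound. The values $n$ and $m$ are read directly from the input in $O(n+m)$ time. To obtain $p$ I would run breadth-first search from successive unvisited vertices, incrementing a counter each time a new search is started; as recalled above (cf.\,\cite{HoTa73}), this finds all connected components in $O(n+m)$ time. Finally I would test the equality $p = n - m$ in constant time and answer accordingly, so the total running time is $O(n+m)$.

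I do not expect any genuine obstacle here. The only content beyond bookkeeping is the component-count identity $m = n - p$, which is a textbook property of forests, together with the linear-time component computation, which is exactly the already-quoted fact about breadth-first search; combining the two gives the statement directly.
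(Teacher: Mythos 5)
Your proof is correct and follows exactly the paper's route: the paper also justifies this observation via the identity $p=n-m$ for forests combined with the linear-time component computation by breadth-first search (citing the same reference). Your only addition is spelling out the proof of the component-count identity, which the paper treats as well known.
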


If $T$ is a tree of order $n$, then its diameter is at most $2$ if and only if its maximum degree equals $n-1$. Therefore, we can decide whether a given graph is a forest each component of which is of diameter at most $2$ in $O(n+m)$ time. When working with vertex labellings, our findings imply the following observation.

\begin{observation}\label{c-3-colourings}
If $G$ is a graph and $\ell$ is a vertex labelling of $G$ with labels $1,2,$ and $3$, then we can decide whether $\ell$ is a $3$-colouring, star $3$-colouring, or acyclic $3$-colouring of $G$ in $O(n+m)$ time. 
\end{observation}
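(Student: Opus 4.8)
The plan is to prove Observation~\ref{c-3-colourings} by reducing each of the three decision tasks to checks that we already know how to perform in linear time. Given a vertex labelling $\ell:V(G)\to\{1,2,3\}$, I would first verify that $\ell$ is a proper $3$-colouring: iterate once over the edge set and confirm that $\ell(u)\neq\ell(v)$ for every edge $uv\in E(G)$. This takes $O(n+m)$ time, and if it fails we reject immediately, since every star and acyclic $3$-colouring is in particular a proper $3$-colouring.

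Assuming $\ell$ is a proper $3$-colouring, the remaining work is to test the bichromatic condition for each of the $\binom{3}{2}=3$ pairs of colour classes. For a fixed pair $\{i,j\}$, let $G_{ij}$ be the subgraph of $G$ induced by the vertices labelled $i$ or $j$. To decide acyclic $3$-colouring I would check whether each $G_{ij}$ is a forest; by Observation~\ref{c-forest} this costs $O(n_{ij}+m_{ij})$ time per pair, where $n_{ij}$ and $m_{ij}$ denote the order and size of $G_{ij}$. To decide star $3$-colouring I would instead check whether each $G_{ij}$ is a forest in which every component is a star, i.e.\ a forest each component of which has diameter at most $2$; by the remark preceding the observation (a tree of order $n$ has diameter at most $2$ iff its maximum degree equals $n-1$) this test also runs in $O(n_{ij}+m_{ij})$ time per pair. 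Since the three induced subgraphs $G_{ij}$ have disjoint edge sets and each vertex lies in at most two of them, the totals satisfy $\sum_{\{i,j\}} n_{ij}\le 2n$ and $\sum_{\{i,j\}}m_{ij}\le m$, so summing over the three pairs keeps the overall running time at $O(n+m)$.

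There is essentially no hard step here; the only point requiring a little care is the accounting that ensures the three pairwise checks together remain linear rather than blowing up by a constant factor that depends on the number of colours. Because the number of colour pairs is the fixed constant $3$, even a crude bound of $O(3(n+m))=O(n+m)$ suffices, so the sharper edge-disjointness observation is not strictly needed. Constructing each $G_{ij}$ can be folded into a single pass over the adjacency lists, recording for each edge which pair of colour classes it joins, so no separate preprocessing dominates the bound. Altogether, the three verifications---proper colouring, acyclic, and star---each decompose into the linear-time primitives supplied by Observation~\ref{c-forest} and the degree characterisation of diameter-$2$ trees, yielding the claimed $O(n+m)$ total.
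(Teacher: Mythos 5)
Your proposal is correct and matches the paper's (implicit) argument exactly: the paper derives the observation from the same two linear-time primitives, namely the forest test of Observation~\ref{c-forest} and the characterisation of diameter-$2$ trees via a vertex of degree $n-1$, applied to the subgraph induced by each of the three pairs of colour classes after a single edge scan verifies properness. Nothing further is needed.
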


It is also well-known that we can use breadth-first search for deciding whether a given graph $G$ is bipartite and, if so, we are in a position to determine its parts in the same time. By this fact, we obtain the following result.

\begin{observation}\label{c-k}
If $G$ is a graph, $k$ is an integer, and $\ell$ is a vertex labelling of $G$ with labels $1,2,$ and $3$, then we can decide whether one colour class of $\ell$ is an independent feedback vertex set or an independent odd cycle transversal (of size at most $k$) in $O(n+m)$ time.
\end{observation}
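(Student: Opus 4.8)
The plan is to treat the (at most three) colour classes of $\ell$ one at a time and observe that deciding whether a single fixed colour class is an independent feedback vertex set or an independent odd cycle transversal can be done in $O(n+m)$ time; since there are only three classes, the overall running time stays $O(n+m)$.

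First I would make a single pass over $V(G)$ to record each vertex's label and to compute the sizes $|C_1|,|C_2|,|C_3|$ of the three colour classes $C_i=\{v : \ell(v)=i\}$ (some of which may be empty). This preprocessing costs $O(n)$ time. Now fix one class $C_i$. To test that $C_i$ is independent, I scan every edge of $G$ once and check that no edge has both endpoints in $C_i$, which takes $O(n+m)$ time. For the independent feedback vertex set test it then suffices to check that the induced subgraph $G-C_i$ is a forest; by Observation~\ref{c-forest} this takes $O(n+m)$ time. For the independent odd cycle transversal test it suffices to check that $G-C_i$ is bipartite, which is done by the standard breadth-first-search two-colouring and again runs in $O(n+m)$ time. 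When a size bound is imposed, I additionally compare $|C_i|$ against $k$, which is $O(1)$ using the precomputed sizes.

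Finally, I would answer ``yes'' exactly when some colour class $C_i$ passes the relevant test (and satisfies $|C_i|\le k$ whenever the size bound is required). As the constant-sized routine above is repeated for each of the three classes, the whole procedure runs in $O(n+m)$ time. I do not expect any real obstacle here beyond careful bookkeeping: the only points to verify are that independence-testing, the forest test, and the bipartiteness test are each linear, and that forming the induced subgraph $G-C_i$ and running these tests costs only $O(n+m)$ per class, so that looping over the three classes does not break the linear bound.
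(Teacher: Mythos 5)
Your proposal is correct and matches the paper's (sketched) argument exactly: the paper justifies this observation via the linear-time forest test of Observation~\ref{c-forest} and breadth-first-search bipartiteness testing, which is precisely what you apply to $G-C_i$ for each of the three colour classes, together with the routine independence and size checks. Nothing is missing; your explicit handling of the independence test and the per-class loop is just careful bookkeeping the paper leaves implicit.
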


To prove our results we need some more terminology.
A {\it list assignment} of a graph~$G$ is a function $L$ that gives each vertex $u\in V(G)$ a (finite) {\it list of admissible colours} $L(u)\subseteq \{1,2,\ldots\}$.
A colouring $c$  {\it respects} ${L}$ if  $c(u)\in L(u)$ for every $u\in V(G).$ If $|L(u)|\leq 2$ for each $u\in V(G)$, then $L$ is 
a {\it $2$-list assignment}. The {\sc $2$-List Colouring} problem is to decide if a graph $G$ with a $2$-list assignment $L$ has a colouring that respects $L$. 

We use the following well-known result, which is due to Edwards (see the proof of Lemma~2.4 in~\cite{Ed86}).

\begin{lemma}[\cite{Ed86}]\label{l-2sat}
The {\sc $2$-List Colouring} problem is solvable in $O(n+m)$ time on graphs with $n$ vertices and $m$ edges.
\end{lemma}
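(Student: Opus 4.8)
The plan is to reduce \textsc{$2$-List Colouring} to \textsc{$2$-Satisfiability} (2-SAT) and then invoke the classical linear-time algorithm for the latter, based on strongly connected components of the implication graph. First I would preprocess: if some vertex $u$ has $L(u)=\emptyset$, then no respecting colouring exists and we answer \textbf{no}; otherwise every list has size $1$ or $2$. Next I would introduce Boolean variables. For each vertex $u$ with $|L(u)|=2$, write $L(u)=\{\alpha_u,\beta_u\}$ and introduce a variable $x_u$, interpreting $x_u=\mathrm{true}$ as $c(u)=\alpha_u$ and $x_u=\mathrm{false}$ as $c(u)=\beta_u$. For a vertex $u$ with $|L(u)|=1$ the colour is forced, so I would attach a variable together with a unit clause fixing its value (equivalently, record the forced colour directly).

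The edge constraints are then encoded as follows. For an edge $uv$, a respecting colouring must satisfy $c(u)\neq c(v)$, so for every colour $\gamma\in L(u)\cap L(v)$ the joint assignment $c(u)=c(v)=\gamma$ is forbidden. Each such forbidden assignment is a conjunction of two literals (one per endpoint selecting $\gamma$), and forbidding it is exactly a $2$-clause, namely the disjunction of the two negated literals. Since $|L(u)|,|L(v)|\leq 2$, we have $|L(u)\cap L(v)|\leq 2$, so each edge contributes at most two clauses. Hence the resulting 2-SAT instance has at most $n$ variables and $O(m+n)$ clauses, and can be built in $O(n+m)$ time.

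The equivalence is immediate from the construction: truth assignments satisfying all clauses correspond bijectively to colourings respecting $L$, since a clause is violated precisely when some edge has both endpoints coloured with a common admissible colour. To finish, I would solve the 2-SAT instance by constructing its implication graph (two nodes per variable, two arcs per clause), computing the strongly connected components in linear time, and checking that no variable lies in the same component as its negation; this decides satisfiability and, if satisfiable, yields an assignment, hence a respecting colouring, in time linear in the instance size. Since the instance has size $O(n+m)$, the whole procedure runs in $O(n+m)$ time, as claimed.

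I expect the reduction itself to be routine; the genuine content lies in the linear-time solvability of 2-SAT via strongly connected components, together with the observation that the list-size bound of $2$ is exactly what keeps the number of generated clauses linear in $m$. With lists of larger size, a single edge could force many forbidden pairs and the natural encoding would produce general (NP-hard) \textsc{Satisfiability} rather than 2-SAT, so the bound $|L(u)|\leq 2$ is used twice: once to make each vertex a Boolean variable, and once to bound the per-edge clause count.
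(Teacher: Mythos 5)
Your proof is correct and is essentially the same argument as the one the paper invokes: the paper gives no proof of its own but cites Edwards~\cite{Ed86}, whose Lemma~2.4 establishes the result by exactly this reduction to \textsc{$2$-Satisfiability} (at most two clauses per edge since $|L(u)\cap L(v)|\leq 2$, unit clauses for singleton lists) followed by the linear-time implication-graph/strongly-connected-components algorithm. Your closing observation about where the list-size bound is used twice is accurate and matches why the encoding stays within 2-SAT.
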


We now introduce some additional terminology. 
Let $G$ be a graph of diameter $d$ for some $d\geq 1$ and $S$ be some set of vertices of $G$. A vertex $u\notin S$ is a {\it private} neighbour of a vertex $v\in S$ {\it with respect to} $S$ if $u$ is adjacent to $v$ but non-adjacent to any other vertex of $S$. We let $P(v)$ be the set of private neighbours of $v$ with respect to $S$.
Let $N_0=S$ and, for $i\in \{1,\ldots d\}$, let $N_i$ be the set of vertices that do not belong to $N_0\cup \ldots \cup N_{i-1}$ but that do have a neighbour in $N_{i-1}$; in particular $N_1=N(S)$. As $G$ has diameter at most~$d$, we partition $V(G)$ into the sets $N_0$, $N_1$, \ldots, $N_{d}$ (where some sets might be empty). We say that we {\it partition $V(G)$ from $S$}.
Note that this partitioning takes $O(n+m)$ time by breadth-first search on the graph $G'$ which is obtained from $G$ by adding a new vertex $u$ and all edges from $u$ to every vertex of $S$.

In the next lemma we show that a sufficiently large connected chair-free graph contains a triangle $T$, which we can find in linear time. In later proofs we will partition from $V(T)$.

\begin{lemma}\label{l-3-col-1}
Let $d\geq 1$ and $G$ be a chair-free non-bipartite graph of diameter $d$ with $n$ vertices and $m$ edges. If $G$ has at least $2d+2$ vertices, then $G$ contains at least one triangle, which is computable in $O(n+m)$ time.
\end{lemma}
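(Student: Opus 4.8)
The plan is to prove the contrapositive in a strong form: if a connected chair-free non-bipartite graph $G$ of diameter $d$ is triangle-free, then $G$ is isomorphic to the odd cycle $C_{2d+1}$, so that $n=2d+1$; hence $n\ge 2d+2$ forces a triangle. Note that $G$ is connected, since its diameter is finite. If $G$ already contains a triangle we are done, so assume it does not, and let $C=v_0v_1\cdots v_{2k}v_0$ be a shortest odd cycle of $G$, which exists because $G$ is non-bipartite. A shortest odd cycle is always induced: a chord would split it into two cycles whose lengths sum to $|C|+2$, one of which is odd and strictly shorter, contradicting minimality. Since $G$ is triangle-free we have $2k+1\ge 5$, that is, $k\ge 2$.

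The key step is to show that no vertex outside $C$ has a neighbour on $C$, which I will do by producing an induced chair from any such attachment. Suppose $w\notin V(C)$ is adjacent to $v_i$. Triangle-freeness gives $w\not\sim v_{i-1}$ and $w\not\sim v_{i+1}$, and since $C$ is induced, $v_{i-1}\not\sim v_{i+1}$; thus $\{v_{i-1},v_{i+1},w\}$ is an independent set of neighbours of $v_i$, a claw centred at $v_i$. I extend this claw along the cycle either to $v_{i-2}$ (using the path $v_iv_{i-1}v_{i-2}$) or to $v_{i+2}$ (using the path $v_iv_{i+1}v_{i+2}$). For the first choice the five vertices $v_i,v_{i+1},w,v_{i-1},v_{i-2}$ induce a chair provided $w\not\sim v_{i-2}$: all of the remaining required non-edges, namely $v_iv_{i-2}$, $v_{i+1}v_{i-1}$, $v_{i+1}v_{i-2}$, $v_{i+1}w$ and $wv_{i-1}$, hold because $C$ is induced and by the adjacencies already established. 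The analogous statement holds for the second choice, with the single obstruction $w\sim v_{i+2}$. Finally, $w$ cannot be adjacent to both $v_{i-2}$ and $v_{i+2}$: otherwise $w$ together with the arc of $C$ avoiding $v_{i-1},v_i,v_{i+1}$ would form an odd cycle of length $2k-1<2k+1$, again contradicting minimality of $C$. Hence at least one of the two extensions yields an induced chair, contradicting chair-freeness.

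Therefore no vertex of $G$ outside $C$ is adjacent to $C$, and since $G$ is connected this forces $V(G)=V(C)$, i.e.\ $G=C_{2k+1}$. As the diameter of $C_{2k+1}$ equals $k$, we obtain $k=d$ and $n=2d+1<2d+2$, the desired contradiction. This establishes that a triangle exists whenever $n\ge 2d+2$.

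For the $O(n+m)$ bound I plan to combine this structure with a single breadth-first search from an arbitrary vertex. Every triangle lies inside two consecutive BFS levels, because each edge joins levels differing by at most one, and therefore contains a same-level (conflict) edge; conversely, the same claw/chair construction, now applied to the conflict edge at the minimum level $i$ at which one occurs, should force that edge to have a common neighbour, and hence to lie in a triangle, unless $G$ has at most two vertices per level. In the latter case $G$ is cycle-like and $n\le 2d+1$, which is excluded by hypothesis. One then verifies the common neighbourhood of the two endpoints of the minimal conflict edge in $O(n+m)$ time. I expect the main obstacle to be exactly this last step: converting the structural (existence) argument into a genuinely linear-time search requires a careful case analysis of how the minimal conflict edge interacts with the adjacent BFS levels, whereas the existence statement itself follows cleanly from the shortest-odd-cycle-plus-chair argument above.
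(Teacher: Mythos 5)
Your existence argument is correct, and in fact slightly stronger than what the paper proves. The paper likewise takes a shortest (hence induced) odd cycle and applies chair-freeness at a vertex attached to it, but instead of your two-sided forcing ($w$ must be adjacent to both $v_{i-2}$ and $v_{i+2}$, whence a strictly shorter odd cycle) it propagates adjacency around the cycle ($y\sim x_1$ forces $y\sim x_3$, then $y\sim x_5$, and so on), ending with $y$ adjacent to the consecutive pair $x_p,x_1$, contradicting triangle-freeness. Your variant additionally yields the clean classification $G\cong C_{2d+1}$ in the triangle-free case. One difference worth noting: your forcing step uses minimality of the odd cycle (to get the contradiction from the shorter odd cycle through $w$), whereas the paper's propagation needs only that the cycle be \emph{induced} --- this difference matters below.

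The algorithmic half, however, has a genuine gap, which you yourself flag: you never prove that the minimal-level conflict edge has a common neighbour, and the proposed dichotomy (``common neighbour, unless $G$ has at most two vertices per level'') is unsubstantiated. Nothing in your structural analysis addresses a conflict edge $uv$ whose endpoints have distinct, non-adjacent parents; no claw is available there, so chair-freeness gives no immediate leverage, and it is not clear the claim is even true as stated. The paper sidesteps this entirely: the minimal conflict edge is used only to close, through the BFS tree, an odd cycle $C$ on at most $2d+1$ vertices. Since $|V(C)|$ is bounded by the constant $2d+1$, an induced odd cycle $C'$ of $G[V(C)]$ can be extracted in constant time; if $C'$ is not a triangle it has length at least $5$, and since $n\ge 2d+2$ and $G$ is connected there is a vertex $y$ outside $C'$ with a neighbour on $C'$, computable in $O(n+m)$ time. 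The propagation argument --- which, unlike your shorter-odd-cycle argument, requires only that $C'$ be induced, not shortest --- then shows $y$ has two consecutive neighbours on $C'$, i.e.\ a triangle, found in constant time since $|V(C')|\le 2d+1$. If you insisted on keeping your two-sided argument in the algorithm, you would have to iterate: either $y$ has two consecutive neighbours on $C'$ (triangle found), or chair-freeness forces $y$ adjacent to both $v_{i-2}$ and $v_{i+2}$, giving a strictly shorter odd cycle inside the constant-size set $V(C')\cup\{y\}$; after at most $d$ rounds, each costing $O(n+m)$ to locate a newly attached vertex, this terminates. But that repair is precisely the ``careful case analysis'' you anticipated and left open, and the paper's propagation is the simpler route.
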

\begin{proof}
For a contradiction, assume that $G$ is triangle-free. As $G$ is not bipartite, there is an odd cycle in $G$. Let $x_1x_2\ldots x_px_1$ be a shortest one. As $G$ is triangle-free and of diameter~$d$, we find that $5\leq p\leq 2d+1$. 
Moreover, as $G$ is of order at least $2d+2$, there is some vertex outside this cycle that has a neighbour on this cycle. Without loss of generality let us assume $y$ with $y\notin \{x_1,x_2,\ldots,x_p\}$ is adjacent to $x_1$.
 As $G$ is triangle-free, $y$ does not have two consecutive neighbours on $x_1x_2\ldots x_px_1$. 
 As $G$ is chair-free and $y$ is neither adjacent to $x_2$ nor to $x_p$, we find that $y$ must be adjacent to $x_3$. We repeat this argument and obtain that  
 $y$ is adjacent to $x_{2q+1}$ for every $ 0 \leq q \leq  \lfloor \frac{p}{2} \rfloor$. In particular $y$ is adjacent to the two consecutive vertices $x_1$ and $x_p$, a contradiction.
 We conclude that our assumption is false and that $G$ contains a triangle.

We continue and show that we can compute a triangle of $G$ in $O(n+m)$ time. Let $u$ be a vertex of $G$. We partition $V(G)$ from $\{u\}$ and note that breadth-first search computes a breadth-first tree $F$, that is, $F$ is a spanning tree of $G$ such that each vertex of $N_i$ has distance $i$ to $u$ in $F$ for any~$i$. As $G$ is not bipartite, there has to be an edge $e$ and an integer $i$ such that $e$ is incident to two vertices of $N_i$. We can compute such an edge that additionally minimizes $i$ in $O(n+m)$ time. By adding this edge to $F$, we find an odd cycle~$C$ in $G$. As $F$ is of diameter at most~$2d$, we find that $C$ has at most $2d+1$ vertices. Hence, we can determine in constant time an induced odd cycle, say $C'$, in $G[V(C)]$. 
We check in constant time whether $C'$ is a triangle. If not, then $C'$ is of order at least $5$. 
As $G$ is of order at least $2d+2$, there is a vertex outside $C'$ that has a neighbour on $C'$. We compute such a vertex, say $y$, in $O(n+m)$ time. As shown above, $y$ has two consecutive neighbours on $C'$. As $C'$ has at most~$2d+1$ vertices, we can find such two vertices, and thus a triangle in $G$, in constant time.
\end{proof}

\noindent
In the next two lemmas we show two necessary conditions for a chair-free graph $G$ with a triangle $T$ to be $3$-colourable. To prove the second lemma, we partition $V(G)$ from $V(T)$.

\begin{lemma}\label{l-3-col-2}
Let $G$ be a $3$-colourable chair-free graph, and let $T$ be a triangle in $G$ with vertex set $\{x,y,z\}$. If $T$ contains at least two vertices that have a private neighbour, then $|P(x)\cup P(y)\cup P(z)|\leq 6$.
\end{lemma}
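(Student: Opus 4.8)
The plan is to reduce the statement to a single uniform claim: \emph{for any two distinct vertices $v,w$ of $T$, if $P(w)\neq\emptyset$ then $|P(v)|\le 2$}. Granting this, the hypothesis that at least two of $x,y,z$ have a private neighbour immediately finishes the proof, since then every vertex $v\in\{x,y,z\}$ has some \emph{other} triangle vertex $w$ with $P(w)\neq\emptyset$; applying the claim three times gives $|P(x)|,|P(y)|,|P(z)|\le 2$, and as $P(x),P(y),P(z)$ are pairwise disjoint we get $|P(x)\cup P(y)\cup P(z)|\le 6$.

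First I would fix a $3$-colouring $c$ of $G$ and normalise it so that $c(x)=1$, $c(y)=2$, $c(z)=3$. Since every vertex of $P(v)$ is adjacent to $v$, all vertices of $P(v)$ avoid the colour $c(v)$, so $G[P(v)]$ is properly $2$-coloured and hence bipartite; in particular $P(v)$ contains no triangle, so every clique contained in $P(v)$ has at most two vertices. This is the only place where $3$-colourability is used, and it is exactly what converts ``$P(v)$ is a clique'' into ``$|P(v)|\le 2$''.

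The heart of the argument is to prove the claim by showing that $P(v)$ is a clique. Fix $p\in P(w)$ and let $w'$ be the third vertex of $T$. Writing $S=N(p)\cap P(v)$ and $\overline{S}=P(v)\setminus N(p)$, I would establish three facts, each by exhibiting an induced chair $S_{1,1,2}$ if it fails. \textbf{(a)} $S$ is a clique: two non-adjacent $q_1,q_2\in S$ give the chair with centre $p$, the two unit legs $pq_1,pq_2$, and the length-$2$ leg $p\,w\,w'$ (all required non-edges hold because $q_1,q_2\in P(v)$ miss $w,w'$ and $p\in P(w)$ misses $w'$). \textbf{(b)} $\overline{S}$ is a clique: two non-adjacent $q_1,q_2\in\overline{S}$ give the chair with centre $v$, unit legs $vq_1,vq_2$, and length-$2$ leg $v\,w\,p$ (here $p$ misses $v,q_1,q_2$ since $q_1,q_2\notin N(p)$). \textbf{(c)} $S$ and $\overline{S}$ are completely joined: if $q\in S$ and $q'\in\overline{S}$ were non-adjacent, then the chair with centre $v$, unit legs $vw'$ and $vq'$, and length-$2$ leg $v\,q\,p$ appears (the non-edges $w'q$, $w'q'$, $qq'$, $pv$, $pw'$, $pq'$ all hold). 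Combining (a)--(c), any two vertices of $P(v)=S\cup\overline{S}$ are adjacent, so $P(v)$ is a clique, and by the first step $|P(v)|\le 2$, proving the claim.

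The step I expect to be the main obstacle is locating configuration \textbf{(c)}: facts (a) and (b) alone only bound $|P(v)|$ by $4$ (a clique inside $N(p)\cap P(v)$ together with a clique in its complement), and one must notice that a non-adjacent pair split across $S$ and $\overline{S}$ yields a chair whose two unit legs use the \emph{third triangle vertex} $w'$ together with a private neighbour of $v$, rather than two private neighbours of $v$. Once this chair is spotted, forcing $P(v)$ to be a clique and invoking bipartiteness closes the gap from $4$ down to $2$.
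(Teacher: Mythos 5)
Your proof is correct and is essentially the paper's argument in contrapositive form: your three chair configurations (a), (b), (c) are exactly the paper's three cases (the private neighbour $w\in P(y)$ adjacent to neither, exactly one, or both of a non-adjacent pair $u,v\in P(x)$, witnessed by the same five-vertex sets), and $3$-colourability enters equivalently ($K_4$-freeness in the paper versus bipartiteness of $G[P(v)]$ in your write-up). Your packaging as ``$P(v)$ is a clique'' makes the per-vertex bound $|P(v)|\le 2$ explicit rather than arguing by pigeonhole and contradiction, but the combinatorial content is identical.
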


\begin{proof}
For a contradiction, assume that $|P(x)\cup P(y)\cup P(z)|> 6$. By the pigeonhole principle, one vertex of $T$, say $x$, has at least three private neighbours. As $G$ is $3$-colourable, and thus $K_4$-free, there are two non-adjacent vertices in $P(x)$, say $u$ and $v$.
As at least two vertices of $T$ have a private neighbour, we may assume without loss of generality that $y$ has a private neighbour $w$.
Then $\{u,v,w,x,y\}$ induces a chair unless $w$ is adjacent to at least one of $u$ and $v$. If $w$ is adjacent to exactly one of $u$, $v$, then $\{u, v, w, x, z\}$ induces a chair. If $w$ is adjacent to both $u$ and $v$, then $\{u,v,w,y,z\}$ induces a chair. In all three cases, we obtained a contradiction.
\end{proof}

\begin{lemma}\label{l-3-col-3}
Let $G$ be a $3$-colourable chair-free graph of diameter $d$ for some $d\geq 1$, and let $T$ be a triangle in $G$ with vertex set $\{x,y,z\}$. Then $G-N(T)$ has at most $9\cdot 2^d +2$ vertices.
\end{lemma}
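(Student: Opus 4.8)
The plan is to partition $V(G)$ from $V(T)$ into the layers $N_0=V(T),\,N_1=N(T),\,N_2,\ldots,N_d$ described just before the lemma. Since every vertex of $N(T)$ is removed, $G-N(T)$ is exactly $G[N_0\cup N_2\cup\cdots\cup N_d]$, and $N_0=V(T)$ contributes only the three vertices $x,y,z$. Hence it suffices to bound $\sum_{i=2}^{d}|N_i|$, and I would do this by a doubling argument on the layers together with an absolute bound on the first retained layer $N_2$.

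The engine is the claim that for every $i\ge 2$ and every $b\in N_i$, the set $N(b)\cap N_{i+1}$ is a clique. Suppose $a,a'\in N(b)\cap N_{i+1}$ are non-adjacent. Choose a neighbour $c\in N_{i-1}$ of $b$ and a neighbour $e\in N_{i-2}$ of $c$; these exist by the definition of the layers. Since two vertices whose layer indices differ by at least two are non-adjacent, $e$ is non-adjacent to $b,a,a'$, and $c,a,a'$ are pairwise non-adjacent, so $\{b,a,a',c,e\}$ induces a chair centred at $b$ with pendant path $b,c,e$, a contradiction. As $G$ is $3$-colourable it is $K_4$-free, so the clique $\{b\}\cup(N(b)\cap N_{i+1})$ has at most three vertices, whence $|N(b)\cap N_{i+1}|\le 2$. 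Because every vertex of $N_{i+1}$ has a neighbour in $N_i$, this gives $|N_{i+1}|\le 2|N_i|$ for all $i\ge 2$.

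The main obstacle is to bound $|N_2|$ by an absolute constant, and this is precisely where the triangle $T$ (rather than a single vertex) is needed. First, an analogous chair argument using $T$ as the source of the pendant shows that two non-adjacent vertices of $N_2$ cannot share a neighbour in $N_1$: if $a,a'\in N_2$ are non-adjacent and both adjacent to some $b\in N_1$, then a vertex of $T$ non-adjacent to $b$ (one exists, since a vertex of $N_1$ adjacent to all of $x,y,z$ would form a $K_4$) serves as the end of a pendant and yields an induced chair centred at $b$. Consequently the sets $N(a)\cap N_1$, over an independent set of vertices $a\in N_2$, are pairwise disjoint. I would then argue by contradiction that a large independent set in $N_2$ is impossible: an independent set of $N_2$ of size $t$ yields $t$ distinct neighbours in $N_1$; by pigeonhole many of them are adjacent to a common triangle vertex, say $x$; and since $\{x\}$ with any triangle of $N(x)$ would be a $K_4$, the graph $G[N(x)]$ is triangle-free, so a Ramsey-type bound produces three pairwise non-adjacent such neighbours $b_1,b_2,b_3$. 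The $N_2$-neighbour of $b_1$ is non-adjacent to $b_2,b_3$ (by the disjointness above) and to $x$, so $\{x,b_1,b_2,b_3\}$ together with that neighbour induces a chair. This bounds every independent set of $N_2$, and since $G[N_2]$ is $3$-colourable it bounds $|N_2|$ by a constant.

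Finally, combining $|N_2|\le\mathrm{const}$ with $|N_{i+1}|\le 2|N_i|$ for $i\ge 2$ gives $|N_i|\le 2^{i-2}|N_2|$ and hence $\sum_{i=2}^{d}|N_i|\le |N_2|\,(2^{d-1}-1)$, a quantity of the form $c\cdot 2^{d}$; adding the three vertices of $N_0$ and tracking the constants yields the stated bound $9\cdot 2^{d}+2$. I expect the delicate part to be the constant bound on $|N_2|$: the doubling step is a clean local chair-plus-$K_4$ argument, whereas controlling $N_2$ requires combining the no-shared-neighbour property, the triangle-freeness of $N(x)$, and a counting argument, with care taken so that the resulting constant fits the claimed bound.
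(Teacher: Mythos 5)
Your layering and doubling argument coincides with the paper's: partitioning $V(G)$ from $V(T)$, the chair-plus-$K_4$ argument showing that $N(b)\cap N_{i+1}$ is a clique of size at most $2$ (including your correct handling of the bottom case via a non-neighbour of $b$ on $T$), and the resulting estimate $\sum_{i=2}^{d}|N_i|\le |N_2|\bigl(2^{d-1}-1\bigr)$. Your route to the constant bound on $|N_2|$ is genuinely different, and it is also where the proposal fails to prove the statement as given. The ingredients are individually correct: for an independent set $I\subseteq N_2$ the sets $N(a)\cap N_1$, $a\in I$, are pairwise disjoint; each chosen $N_1$-vertex is adjacent to a triangle vertex; $G[N(x)]$ is triangle-free by $K_4$-freeness; and $\{x,b_1,b_2,b_3,a_1\}$ does induce a chair. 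But quantitatively this machinery only fires when $|I|=t\ge 16$: then some triangle vertex is adjacent to at least $\lceil t/3\rceil\ge 6$ of the chosen $N_1$-vertices, and $R(3,3)=6$ supplies the independent triple. Hence you obtain $\alpha(G[N_2])\le 15$ and, via $3$-colourability, only $|N_2|\le 45$, giving $|V(G)\setminus N(T)|\le 3+45\bigl(2^{d-1}-1\bigr)=22.5\cdot 2^{d}-42$, which exceeds the claimed $9\cdot 2^{d}+2$ for every $d\ge 2$. To reach the stated constant one needs $|N_2|\le 18$, i.e.\ essentially $\alpha(G[N_2])\le 6$, and your closing assertion that ``tracking the constants yields the stated bound'' is exactly the missing step: with the Ramsey shortcut it cannot be made to work, since $R(3,3)$ gives nothing below $t=16$. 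So what you have is a correct proof of a weaker bound of the form $c\cdot 2^{d}$ with $c\approx 23$, not of the lemma.

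For comparison, the paper avoids Ramsey entirely and secures $|N_2|\le 18$ by structural case analysis. It splits $N_1$ into $N_1^*$ (vertices with two neighbours on $T$) and the private neighbours $S=P(x)\cup P(y)\cup P(z)$. For each pair, say $\{x,y\}$, the common-neighbour set $N_{xy}\subseteq N_1^*$ is independent by $K_4$-freeness, and chair-freeness forces any $N_2$-vertex with a neighbour in $N_{xy}$ to be adjacent to all of $N_{xy}$; combined with the $\le 2$ bound on $N_2$-neighbourhoods of $N_1$-vertices, this caps at $6$ the set $N_2^*$ of $N_2$-vertices seen from $N_1^*$. If $|N_2|>18$, then $|N_2\setminus N_2^*|>12$ forces $|S|>6$, so by Lemma~\ref{l-3-col-2} exactly one triangle vertex $x$ has private neighbours; finally $G[P(x)]$ is bipartite, chair-freeness makes each part of $P(x)$ share one common neighbourhood in $N_2$, which is a clique of size at most $2$, so at most $4$ vertices of $N_2$ have a neighbour in $P(x)$ --- a contradiction. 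Your pigeonhole-plus-Ramsey argument is an elegant alternative and would suffice if the lemma were stated with a larger constant, but as it stands you must either adopt the paper's $N_1^*/P(x)$ analysis or find a sharper replacement for the $R(3,3)$ step.
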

\begin{proof}
For a contradiction, assume that $G$ has at least $9\cdot 2^d +3$ vertices. 
We partition $V(G)$ from $V(T)$.
As $G$ is $3$-colourable, $G$ is $K_4$-free. Hence, every vertex of $N_1$ has at least one non-neighbour on $T$.  
Let $i\geq 1$ and let $u$ be a vertex of $N_i$. As $G$ is $\chair$-free, the neighbours of $u$ in $N_{i+1}$ form a clique. 
Note that for $i=1$ this holds since $u$ must have at least one neighbour and at least one non-neighbour on~$T$.
As $G$ is $K_4$-free, this means that the neighbourhood of 
 $u$ in $N_{i+1}$ is a clique of size at most~$2$. It follows that
\[3+ 9\cdot 2^d\leq  |N_0|+|N_2|+|N_3|+\ldots+|N_d|\leq 3+ |N_2|\cdot \sum_{i=2}^d 2^{i-2}< 3+|N_2|\cdot 2^{d-1}.\]
Hence, $|N_2|>18$. We let $N_1^*$ be the set of all vertices of $N_1$ that have two neighbours on~$T$, and we let $N_2^*$ be the set of neighbours of $N_1^*$ in $N_2$.
Consider the set $N_{xy}$ of common neighbours of $x$ and $y$ in $N_1^*$. Then $N_{xy}$ is an independent set, as otherwise two adjacent vertices in~$N_{xy}$ form, together with $x$ and $y$, a $K_4$, contradicting the $K_4$-freeness of $G$.

Every vertex $u\in N_2^*$ with a neighbour  in $N_{xy}$ must be adjacent to every vertex in~$N_{xy}$, as~$G$ is chair-free and $N_{xy}$ is an independent set. 
Recall that every vertex of $N_1$, and thus every vertex of $N_{xy}$, has at most two neighbours in $N_2$, which belong to $N_2^*$ by definition. Hence, there are at most two vertices in $N_2^*$ that are adjacent to the vertices of $N_{xy}$. By applying the same reasoning for every other pair of vertices of $T$, we find that $N_2^*$ has size at most~$6$.  
Thus, $|N_2\setminus N_2^\star|>12$. As every vertex of $N_1$ has at most two neighbours in $N_2$, it follows that $|P(x)\cup P(y)\cup P(z)|>6$. 
By Lemma~\ref{l-3-col-2}, we obtain that there is exactly one vertex, say $x$, of $T$ which has private neighbours.
As $G$ is $3$-colourable, $G[P(x)]$ is bipartite. 

As $G[P(x)]$ is bipartite, we can partition $P(x)$ into two independent sets $A$ and $B$ (one of these two sets might be empty).
As $G$ is chair-free and as $A$ is independent, the vertices of $A$ share the same set of neighbours in $N_2$. Similarly, the vertices of $B$ share the same set of neighbours in $N_2$. Recall that the neighbourhood of every vertex of $A\cup B$ in $N_2$ is a clique of size at most~$2$. We conclude that the total number of vertices in $N_2$ with a neighbour in $P(x)$ is at most~$4$, a contradiction as $|N_2\setminus N^*_2|>12$.
We conclude that $G-N_1$ has at most $9\cdot 2^d +2$ vertices.
\end{proof}

\noindent
In~\cite{MPS19}, it was shown that {\sc $3$-Colouring} is polynomial-time solvable for chair-free graphs of bounded diameter. By giving more precise arguments, we show in the first statement of Theorem~\ref{t-3-col} that we can even obtain a linear-time algorithm for {\sc $3$-Colouring} on chair-free graphs of bounded diameter. Recall that we wish to prove a similar result for a number of vertex partitioning problems that require yes-instances to be $3$-colourable. In order to do this (in Theorem~\ref{t-chair}), we first need to prove a new result on the structure of $3$-colourable chair-free (non-bipartite) graphs of bounded diameter and on how to find this structure.
These graphs might have an exponential number of $3$-colourings (for example, take a triangle~$T$ and add a number of pendant vertices to one of the vertices of $T$). However, if there are too many, then we find a certain induced subgraph that has a bounded number of $3$-colourings. This is shown in the second statement of Theorem~\ref{t-3-col}.
The proof of Theorem~\ref{t-3-col} uses Lemmas~\ref{l-3-col-1}--\ref{l-3-col-3}.

\begin{theorem}\label{t-3-col}
For an integer $d\geq 1$ and a chair-free non-bipartite graph $G$ of diameter~$d$ with $n$ vertices and $m$ edges, it is possible to find the following in $O(n+m)$ time:
\begin{enumerate}
\item 
whether or not $G$ is $3$-colourable, and 
\item 
should $G$ be $3$-colourable, either all $3$-colourings of $G$, or a triangle $T$ with exactly one vertex~$x$ that has private neighbours and all $3$-colourings of $G-P(x)$ that can be extended to $3$-colourings of $G$; in both cases, the number of $3$-colourings found is at most $3^{9\cdot 2^d+8}$.  
\end{enumerate}
\end{theorem}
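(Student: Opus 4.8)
The plan is to fix a triangle in $G$, partition the vertex set from it, and observe that only a bounded, explicitly controllable part of the graph carries any freedom in a $3$-colouring.

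First I would dispose of small graphs: if $n<2d+2$ then $|V(G)|$ is bounded by a function of $d$, so I can enumerate all at most $3^{2d+1}$ vertex labellings, test each with Observation~\ref{c-3-colourings}, and output the $3$-colourings directly. Otherwise $n\geq 2d+2$, and Lemma~\ref{l-3-col-1} produces a triangle $T$ with $V(T)=\{x,y,z\}$ in $O(n+m)$ time. I then partition $V(G)$ from $V(T)$ into $N_0=V(T),N_1=N(T),N_2,\dots,N_d$ in $O(n+m)$ time, and split $N_1$ into the set $N_1^\star$ of vertices having two neighbours on $T$ (no vertex can have three, as $G$ is $K_4$-free) and the private neighbours $P(x)\cup P(y)\cup P(z)$ of the vertices of $T$. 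By the breadth-first layering, every neighbour of a vertex of $N_1$ lies in $N_0\cup N_1\cup N_2$.

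The two structural facts driving the argument are the following. In any $3$-colouring the triangle $T$ is rainbow, so every vertex of $N_1^\star$ receives the unique colour missing from its two neighbours on $T$; hence a $3$-colouring of $G$ is completely determined by its restriction to $V(G)\setminus N_1^\star$, and the number of $3$-colourings of $G$ is at most $3^{|V(G)\setminus N_1^\star|}$. Secondly, by Lemma~\ref{l-3-col-3}, if $G$ is $3$-colourable then $G-N(T)=G[N_0\cup N_2\cup\cdots\cup N_d]$ has at most $9\cdot 2^d+2$ vertices. I would therefore first test whether $|V(G-N(T))|\leq 9\cdot 2^d+2$; if not, then $G$ is not $3$-colourable and I am done. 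Otherwise $G-N(T)$ has bounded size, so I can enumerate all its proper $3$-colourings $c_0$ by brute force in constant time. For each $c_0$ the colours of $N_0\cup N_2\cup\cdots\cup N_d$ are fixed, which assigns to every vertex of $N_1$ a list of admissible colours, namely the colours not used by its already-coloured neighbours in $N_0\cup N_2$; since each vertex of $N_1$ has a neighbour in $T$, this list has size at most $2$, so extending $c_0$ to $N_1$ is an instance of {\sc $2$-List Colouring} on $G[N_1]$, solvable in $O(n+m)$ time by Lemma~\ref{l-2sat}. Then $G$ is $3$-colourable if and only if some $c_0$ extends, which settles part~(1); as there are only constantly many choices of $c_0$, the total time is $O(n+m)$.

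For part~(2) I split according to Lemma~\ref{l-3-col-2}. If at least two vertices of $T$ have private neighbours, then $|P(x)\cup P(y)\cup P(z)|\leq 6$, so $|V(G)\setminus N_1^\star|\leq (9\cdot 2^d+2)+6=9\cdot 2^d+8$; I enumerate all $3$-colourings of $G$ by choosing the colours of $G-N(T)$ and of the at most six private neighbours, forcing the colours on $N_1^\star$, and keeping the proper labellings (checked via Observation~\ref{c-3-colourings}), giving at most $3^{9\cdot 2^d+8}$ colourings. If no vertex of $T$ has a private neighbour, then $N_1=N_1^\star$ is entirely forced and the $3$-colourings of $G$ are in bijection with the extendable $3$-colourings of $G-N(T)$, of which there are at most $3^{9\cdot 2^d+2}$; again I output all $3$-colourings of $G$. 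The remaining case, where exactly one vertex $x$ has private neighbours, is the only one in which the free set $P(x)$ may be unbounded; here I output $T$ and $x$, and enumerate the $3$-colourings of $G-P(x)$ by colouring $G-N(T)$, forcing $N_1^\star$, and keeping those that extend to $G$ (extension to $P(x)$ being one more {\sc $2$-List Colouring} call per candidate), of which there are at most $3^{9\cdot 2^d+2}$. In all cases the number of colourings produced is at most $3^{9\cdot 2^d+8}$ and, since each case makes only constantly many brute-force and $2$-List-Colouring passes, the running time is $O(n+m)$. The main obstacle is precisely this last case: the private neighbours of a single vertex of $T$ are the one source of genuinely unbounded freedom, and the point of removing $P(x)$ (justified by Lemma~\ref{l-3-col-2}) is to quarantine that freedom so that the remaining colourings can be enumerated and each re-completed through a single linear-time list-colouring check.
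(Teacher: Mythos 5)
Your proposal is correct and follows essentially the same route as the paper's proof: dispose of graphs of order at most $2d+1$, find a triangle via Lemma~\ref{l-3-col-1}, partition from $V(T)$, use Lemma~\ref{l-3-col-3} to bound $G-N(T)$, enumerate its constantly many $3$-colourings with forced colours on $N_1^\star$ and a linear-time $2$-list-colouring extension check (Lemma~\ref{l-2sat}), and split part~(2) by Lemma~\ref{l-3-col-2} exactly as the paper does. The only (harmless) imprecision is the parenthetical claiming no vertex of $N_1$ has three neighbours on $T$ because $G$ is $K_4$-free --- $K_4$-freeness is not yet known when deciding part~(1), so such vertices must either be detected explicitly (as the paper does, immediately reporting non-$3$-colourability) or, as your algorithm would in effect do, be caught by their empty lists in every $2$-list-colouring call.
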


\begin{proof}
We first check in constant time whether $G$ is of order at most $2d+1$. If so, then we can determine in constant time all $3$-colourings of $G$ and these are at most $3^{2d+1}$. Note that $3^{2d+1}<3^{9\cdot 2^d+8}$.
In what follows, we assume that $G$ is of order at least $2d+2$. 

By Lemma~\ref{l-3-col-1}, $G$ contains a triangle and we can compute such a triangle, say $T$, in $O(n+m)$ time. Let $x,y,z$ be the vertices of $T$. We partition $V(G)$ from $\{x,y,z\}$ in $O(n+m)$ time.
We additionally determine all private neighbours of the vertices of $T$ and all vertices of $N_1$ that are adjacent to all vertices of $T$ in linear time.
If there is a vertex of the latter type, then $G$ is not $3$-colourable. Thus, we focus on the case where each vertex of $N_1$ is adjacent to at most two vertices of $T$. 
For simplicity, let $S=P(x)\cup P(y)\cup P(z)$.

We compute the set $N_1^*$ of all vertices of $N_1$ that have two neighbours on $T$. Clearly, $S=N_1\setminus N_1^*$ and the computation of $N_1^*$ and $S$ takes $O(n+m)$ time.
We proceed by considering $G-N_1$. We determine $|V(G-N_1)|$ in $O(n+m)$ time and check if $|V(G-N_1)|\geq 
9\cdot 2^d+3$. If so, then by Lemma~\ref{l-3-col-3} we find that $G$ is not $3$-colourable. Assume that $G-N_1$ has at most $9\cdot 2^d+2$ vertices.

We consider every vertex labelling of $G-N_1$ with labels $1,2,3$. In what follows we determine in $O(n+m)$ time which ones lead to a $3$-colouring of $G$. 
First of all, we discard those labellings which are not a $3$-colouring of $G-N_1$. 
Given a $3$-colouring of $G-N_1$, each vertex of $N_1^*$ receives the remaining available label that is not used for its neighbours on $T$. Note that this assignment takes linear time. We discard such a labelling if it does not lead to a $3$-colouring of $G-S$.
Next we assign lists to the vertices of $G$ as follows: we set $L(u)=\{i\}$, where $i$ is the label of $u$, if $u\notin S$ and we set $L(u)=\{1,2,3\}\setminus\{i\}$, where $i$ is the label of the unique neighbour of $u$ on $T$, if $u\in S$. Thus, checking whether a given $3$-colouring of $G-N_1$ leads to a $3$-colouring of $G$ takes $O(n+m)$ time as $(G,L)$ is an instance of {\sc $2$-List Colouring} (cf.\,Lemma~\ref{l-2sat}). 
We discard those $3$-colourings of $G-N_1$ which do not lead to a $3$-colouring of $G$.
As there are at most $3^{9\cdot 2^d+2}$ vertex labellings of $G-N_1$, we repeat this procedure constantly many times. Note that we get all $3$-colourings of $G-S$ as a by-product.

If all $3$-colourings of $G-N_1$ do not lead to a $3$-colouring of~$G$, then $G$ is not $3$-colourable. 
Consequently, we can decide whether $G$ is $3$-colourable in $O(n+m)$ time.
As Theorem~\ref{t-3-col}.1 is now proven, we proceed with our proof for Theorem~\ref{t-3-col}.2 by assuming that $G$ is $3$-colourable.

As there are at most $3^{9\cdot 2^d+2}$ vertex labellings of $G-N_1$, there are at most $3^{9\cdot 2^d+2}$ $3$-colourings of $G-S$.
If $S=\emptyset$, then $G-S$ equals $G$ and Theorem~\ref{t-3-col}.2 follows.
If all vertices of $S$ are adjacent to a single vertex, say $x$, of $T$, we conclude that $S=P(x)$ and Theorem~\ref{t-3-col}.2 follows.
Hence, we may assume that there are at least two vertices of $T$ which have a private neighbour.
As $G$ is $3$-colourable, we find that $|S|\leq 6$ by Lemma~\ref{l-3-col-2}.
Thus, as we have at most $3^{9\cdot 2^d+2}$ $3$-colourings of $G-S$, we find at most 
$3^{9\cdot 2^d+8}$ $3$-colourings of~$G$ and their computation takes $O(n+m)$ time, which completes our proof.
\end{proof}

\noindent
The double-exponential bound in Theorem~\ref{t-3-col} cannot be improved to single-exponential. In order to see this
we define a series $G_1,G_2,\ldots$ of connected chair-free graphs as follows (see Figure~\ref{f-expon} for a drawing of the graph $G_4$):

\begin{itemize}
\item $G_1$ is a triangle with vertex set $\{v_1^1,v_2^1,v_3^1\}$.
\item $G_{d+1}$ is obtained from $G_d$ by adding $3\cdot 2^d$ new vertices $\{v_1^{d+1},v_2^{d+1},\ldots,v_{3\cdot 2^d}^{d+1}\}$ and adding the edges 
$v_i^dv_{2i-1}^{d+1}, v_i^dv_{2i}^{d+1},v_{2i-1}^{d+1}v_{2i}^{d+1}$ for each $i=1,2,\ldots, 3\cdot 2^{d-1}$.
\end{itemize}

\noindent
We note that $G_d$ is of diameter $2d-1$ for each $d$.
Let $a(d)$ be the number of $3$-colourings of $G_d$.
We claim $a(d)=6\cdot 2^{3\cdot (2^{d-1}-1)}$ and prove this by induction. Obviously, $a(1)=6$.
 For each $3$-colouring of $G_d$ and each $i=1,2,\ldots, 3\cdot 2^{d-1}$, we have two possibilities to colour the vertices $v_{2i-1}^{d+1}$ and $v_{2i}^{d+1}$. Thus, using induction hypothesis, we find that
$a(d+1)=a(d)\cdot 2^{3\cdot 2^{d-1}}=\left(6\cdot 2^{3\cdot (2^{d-1}-1)}\right)\cdot 2^{3\cdot 2^{d-1}}=6\cdot 2^{3\cdot (2^{d}-1)}$.

Furthermore, the only triangles with exactly
one vertex that has private neighbours are those on vertex set $\{v_i^{d-1},v_{2i-1}^d,v_{2i}^d\}$ for $i=1,2,\ldots,3\cdot 2^{d-1}$. For such a triangle,  the vertex with private neighbours is $v_i^{d-1}$. As every two distinct $3$-colourings of $G_d$ differ on at least one vertex of 
$V(G_d-P(v_i^{d-1}))$, there are $a(d)$ $3$-colourings of $G_d-P(v_i^{d-1})$ that can be extended to $3$-colourings of $G_d$.

\begin{figure}[h]
\centering
\begin{tikzpicture}[scale=1.4]

\foreach \j in {1,2,3}
		\draw ({360/3*(\j+0.5)-90}:0.5) node[bnode]{} -- ({360/3*(\j+1.5)-90}:0.5) node[bnode]{};
\foreach \i[evaluate=\i as \square using int(3*2^\i)] in {1,2,3}{
	\foreach \j in {1,2,...,\square}
		\draw ({360/\square*2*(\j+0.5)-90}:{(\i-1)/1.5+0.5}) node[bnode]{}--
				({360/\square*((2*\j-1)+1.5)-90}:{\i/1.5+0.5}) node[bnode]{} -- 
				({360/\square*((2*\j)+1.5)-90}:{\i/1.5+0.5}) node[bnode]{} --
				({360/\square*2*(\j+0.5)-90}:{(\i-1)/1.5+0.5}) node[bnode]{};
}
\draw ({360/3*(1+0.5)-65}:0.5) node{$v_{1}^1$};
\draw ({360/3*(2+0.5)-65}:0.5) node{$v_{2}^1$};
\draw ({360/3*(3+0.5)-65}:0.5) node{$v_{3}^1$};
\draw ({360/6*(2.5)-101}:{1/1.5+0.5}) node{$v_{1}^2$};
\draw ({360/6*(3.5)-79}:{1/1.5+0.5}) node{$v_{2}^2$};
\draw ({360/6*(4.5)-101}:{1/1.5+0.5}) node{$v_{3}^2$};
\draw ({360/6*(5.5)-79}:{1/1.5+0.5}) node{$v_{4}^2$};
\draw ({360/6*(6.5)-101}:{1/1.5+0.5}) node{$v_{5}^2$};
\draw ({360/6*(7.5)-79}:{1/1.5+0.5}) node{$v_{6}^2$};
\draw ({360/12*(1+3.5)-97}:{2/1.5+0.5}) node{$v_{1}^3$};
\draw ({360/12*(2+3.5)-83}:{2/1.5+0.5}) node{$v_{2}^3$};
\draw ({360/12*(3+3.5)-97}:{2/1.5+0.5}) node{$v_{3}^3$};
\draw ({360/12*(4+3.5)-83}:{2/1.5+0.5}) node{$v_{4}^3$};
\draw ({360/12*(5+3.5)-97}:{2/1.5+0.5}) node{$v_{5}^3$};
\draw ({360/12*(6+3.5)-83}:{2/1.5+0.5}) node{$v_{6}^3$};
\draw ({360/12*(7+3.5)-97}:{2/1.5+0.5}) node{$v_{7}^3$};
\draw ({360/12*(8+3.5)-83}:{2/1.5+0.5}) node{$v_{8}^3$};
\draw ({360/12*(9+3.5)-97}:{2/1.5+0.5}) node{$v_{9}^3$};
\draw ({360/12*(10+3.5)-83}:{2/1.5+0.5}) node{$v_{10}^3$};
\draw ({360/12*(11+3.5)-97}:{2/1.5+0.5}) node{$v_{11}^3$};
\draw ({360/12*(12+3.5)-83}:{2/1.5+0.5}) node{$v_{12}^3$};
\foreach \j in {1,2,...,24}{
	\draw ({360/24*(\j+7.5)-90}:{3/1.5+0.72}) node{$v_{\j}^4$};
}
\end{tikzpicture}
	\caption{The graph $G^4$.}
	\label{f-expon}
\end{figure}

\medskip
\noindent
We now use Theorem~\ref{t-3-col} to  prove that apart from {\sc $3$-Colouring} (Theorem~\ref{t-3-col}.1), the problems
{\sc Acyclic $3$-Colouring}, {\sc Star $3$-Colouring}, {\sc Independent Odd Cycle Transversal}, {\sc Independent Feedback Vertex Set}, and  {\sc Near-Bipartiteness} can be solved in linear time for chair-free graphs of bounded diameter.

We first deal, in the next lemma, with the case that the chair-free input graph of bounded diameter is bipartite.

\begin{lemma}\label{l-bibbib}
 For every integer $d\geq 1$, {\sc Acyclic $3$-Colouring}, {\sc Star $3$-Colouring}, {\sc Independent Odd Cycle Transversal}, {\sc Independent Feedback Vertex Set}, and  {\sc Near-Bipartiteness} can be solved in $O(n+m)$ time for chair-free bipartite graphs of diameter at most $d$ with $n$ vertices and $m$ edges. 
 \end{lemma}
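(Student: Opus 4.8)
The plan is to exploit the structural dichotomy for connected bipartite chair-free graphs. Since $G$ has finite diameter $d$, it is connected, so by Theorem~\ref{t-al04} it is a path, a cycle, or a \emph{complex}, that is, a graph obtained from a complete bipartite graph with parts $A,B$ by deleting a matching. A path $P_n$ has diameter $n-1$ and a cycle $C_n$ has diameter $\lfloor n/2\rfloor$, so the bounded-diameter hypothesis forces $n\le 2d+1$ in those two cases. I would therefore fix a constant $C=C(d)=\max\{2d+1,18\}$ and first test whether $n\le C$: if so, $G$ has constant order and I solve all five problems by brute force over the $O(1)$ three-colourings or $O(1)$ candidate subsets, verifying each with Observations~\ref{c-forest}--\ref{c-k} in constant time. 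From now on I may assume $n>C$, so $G$ is neither a long path nor a long cycle, hence a complex; I compute the bipartition $(A,B)$ by breadth-first search and set $a=|A|\le b=|B|$ in $O(n+m)$ time, noting $a\ge 1$ and $b\ge 10$.

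The key structural fact I would record is that, because every non-edge of a complex lies in the deleted matching, any independent set of $G$ lies entirely in $A$, or entirely in $B$, or is a single matched pair $\{u,v\}$ with $u\in A$, $v\in B$; in particular every colour class of a $3$-colouring has one of these three forms. Two of the problems are then immediate, since $G$ is bipartite: taking $I=A$ and $F=B$ shows $G$ is near-bipartite, settling \textsc{Near-Bipartiteness}, and taking $S=\emptyset$ gives an independent odd cycle transversal of size $0\le k$, settling \textsc{Independent Odd Cycle Transversal}.

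For \textsc{Independent Feedback Vertex Set} I would analyse when $G-I$ is a forest for an independent set $I$. Writing $A'=A\setminus I$ and $B'=B\setminus I$, any two vertices of $A'$ have at least $|B'|-2$ common neighbours in $B'$, so a $C_4$ appears unless $|A'|\le 1$ or $|B'|\le 3$, and symmetrically unless $|B'|\le 1$ or $|A'|\le 3$; combining these with $b\ge 5$ gives $|I|\ge a-1$, while $I=A\setminus\{a_0\}$ attains this with $G-I$ a star. Hence the minimum size of an independent feedback vertex set is exactly $\min\{a,b\}-1$, and I answer \textsc{yes} precisely when $\min\{a,b\}-1\le k$. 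For \textsc{Acyclic $3$-Colouring} and \textsc{Star $3$-Colouring} I claim the answer is \textsc{yes} if and only if $\min\{a,b\}\le 2$. If $a\le 2$, colouring $a_1,a_2$ by $1,2$ and all of $B$ by $3$ gives a proper $3$-colouring in which every two colour classes induce a star forest, so both problems are satisfied. Conversely, if $a\ge 3$ then, since $b\ge 10$, a counting argument over the at most three colour classes (at most two of which can be subsets of $B$, because $a\ge 3$) produces a colour class that is a subset of $A$ of size at least two together with a different class that is a subset of $B$ of size at least four; these induce a subgraph containing a $C_4$, so no $3$-colouring is acyclic, let alone star.

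The hard part will be the complex analysis for the acyclic and star variants, namely showing that in a large complex the three colour classes must always contain one ``heavy'' subset of $A$ and one ``heavy'' subset of $B$ and that such a pair inevitably yields a forbidden $C_4$; this is where the independent-set characterisation and the counting over classes must be combined carefully, and where the small-order cases (which behave irregularly, for instance $C_6$ versus $K_{3,3}$) are isolated out by the initial brute-force step. Once these per-problem characterisations are established, each reduces to computing $a$ and $b$ and comparing with a constant or with $k$, so the total running time is dominated by the breadth-first search and is $O(n+m)$.
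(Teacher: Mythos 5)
Your proposal is correct and follows essentially the same route as the paper's proof: connectivity plus Alekseev's theorem (Theorem~\ref{t-al04}) reduces everything to paths/cycles, which the diameter bound confines to constant order and brute force, or to a complex, for which you derive the same per-problem characterizations as the paper --- trivial yes-answers for {\sc Near-Bipartiteness} and {\sc Independent Odd Cycle Transversal}, minimum independent feedback vertex set of size exactly $\min\{|A|,|B|\}-1$, and acyclic/star $3$-colourability if and only if $\min\{|A|,|B|\}\le 2$, with the negative direction forced by a bichromatic $C_4$ between a colour class of size at least two on one side and a large class on the other. Your explicit trichotomy for independent sets of a complex and the counting over colour classes are only cosmetically different from the paper's pigeonhole argument (the paper works with $|S_1|\ge 5$, you with $b\ge 10$), and I verified that your combined constraints in the IFVS lower bound do yield $|I|\ge a-1$ in all cases, so the argument is complete.
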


\begin{proof}
Note that a bipartite graph is near-bipartite and has an independent odd cycle transversal of size at most $k$ for every integer $k$. 
Hence, it remains to consider the problems {\sc Acyclic $3$-Colouring}, {\sc Star $3$-Colouring} and {\sc Independent Feedback Vertex Set}.

Let $G$ be a chair-free bipartite graph of diameter at most~$d$ with $n$ vertices and $m$ edges. We may assume without loss of generality that $G$ is connected.
We first determine the two partition classes  $S_1$ and $S_2$ of $G$ in $O(n+m)$ time. We may assume without loss of generality that $|S_1|\geq |S_2|$. 
We check in constant time whether $|S_1|+|S_2|\leq \max\{8,2d\}$ and if so, then we can solve each of our problems in constant time. Otherwise, we find that $|S_1|\geq 5$. 
As bipartite graphs of maximum degree at most~$2$ and diameter at most~$d$ are paths or cycles of at most $2d$ vertices, we find that $G$ has a vertex of degree at least $3$, and so $G$ is a complex by Theorem~\ref{t-al04}.

We first claim that in the case where $G$ is a complex with $|S_1|\geq 5$, $G$ is star $3$-colourable if $|S_2|\leq 2$ and acyclic $3$-colourable only if $|S_2|\leq 2$.
This claim suffices to prove the lemma for {\sc Acyclic $3$-Colouring} and {\sc Star $3$-Colouring}
as we can decide whether $|S_2|\leq 2$ or not in constant time and moreover, every star $3$-colouring of a graph is acyclic.

We prove the above claim as follows:
If $|S_2|\leq 2$, then, for any $s\in S_2$, $G-s$ is a forest each component of which is of diameter at most $2$, and thus $G$ is star $3$-colourable with colour classes $S_1,S_2\setminus \{s\}$, and $\{s\}$. 
If $|S_2|\geq 3$, then let $c$ be an arbitrary $3$-colouring of $G$. 
By the pigeonhole principle there exists a colour class $X$ of $c$ that contains at least two vertices of $S_1$, and so $X\cap S_2=\emptyset$.		
As $|S_2|\geq 3$, there are two vertices $s_2,s_2'\in S_2$ that are coloured alike. As $|S_1|\geq 5$, and as $s_2$ and $s_2'$ are of degree at least $|S_1|-1$, we find that $s_2$ and $s_2'$ have at least three common neighbours in $S_1$ two of which, say $s_1$ and $s_1'$, are coloured alike. Hence, $s_1s_2s_1's_2's_1$ is a bichromatic $4$-cycle. We conclude that every $3$-colouring of $G$ is not acyclic, which completes the proof of our claim.

It remains to consider {\sc Independent Feedback Vertex Set}. Let $k$ be an arbitrary integer.
We claim that if $G$ is a complex with $|S_1|\geq 5$, then $G$ has an independent feedback vertex set of size at most $k$ if and only if $k\geq |S_2|-1$. Note that the latter can be decided in linear time.

We prove the above claim as follows: If $|S_2|\leq 2$, then $G-s$ is a forest for any $s\in S_2$ and $G$ has an independent feedback vertex set of size at most $k$.
Hence, we may assume $|S_2|\geq 3$.
Let $I$ be a minimum independent feedback vertex set in $G$. Such a set exists as $G$ is bipartite. As $S_2\setminus \{s\}$ is independent and as $G[S_1\cup \{s\}]$ is a forest for each vertex $s\in S_2$, we find $|I|\leq |S_2|-1$. For the sake of a contradiction, let us assume $|I|\leq |S_2|-2$. Hence, any two vertices of $S_2\setminus I$ have at least $|S_1|-2$ common neighbours in $S_1$, and so $|I\cap S_1|\geq |S_1|-3\geq 2$. Moreover, $I=I\cap S_1$ as every vertex of $S_2$ has a neighbour in $I\cap S_1$ and as $I$ is independent. As $I$ is an independent feedback vertex set with $|I|\leq |S_1|-2$, any two vertices of $S_1\setminus I$ do not have two common neighbours in $S_2$ and so $|S_2|\leq 3$. Hence, 
$5\leq |S_1|\leq |I|+3\leq |S_2|+1\leq 4$, a contradiction. 
As $|I|=|S_2|-1$, the proof of the lemma is complete.
\end{proof}

\noindent
We are now ready to prove the main result of this section, which is the same statement as Lemma~\ref{l-bibbib} except that we now also include the non-bipartite graphs.

\begin{theorem}\label{t-chair} 
For every integer $d\geq 1$,  {\sc Acyclic $3$-Colouring}, {\sc Star $3$-Colouring}, {\sc Independent Odd Cycle Transversal}, {\sc Independent Feedback Vertex Set}, and  {\sc Near-Bipartiteness} can be solved in $O(n+m)$ time for chair-free graphs of diameter at most $d$ with $n$ vertices and $m$ edges. 
\end{theorem}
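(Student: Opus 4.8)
The plan is to reduce to the non-bipartite $3$-colourable case and then invoke Theorem~\ref{t-3-col}. Every one of the five problems has only $3$-colourable yes-instances: a near-bipartite graph, a graph with an independent feedback vertex set, and a graph with an independent odd cycle transversal are each $3$-colourable, while \textsc{Acyclic $3$-Colouring} and \textsc{Star $3$-Colouring} ask for a $3$-colouring by definition. So I would first decide in $O(n+m)$ time whether $G$ is bipartite; if it is, Lemma~\ref{l-bibbib} settles all five problems. Otherwise $G$ is non-bipartite, and Theorem~\ref{t-3-col}.1 decides $3$-colourability in $O(n+m)$ time; if $G$ is not $3$-colourable, then the instance is a no-instance for every problem. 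When $G$ is $3$-colourable and non-bipartite, Theorem~\ref{t-3-col}.2 returns in $O(n+m)$ time either (A) all $3$-colourings of $G$, or (B) a triangle $T$ whose unique vertex with private neighbours is $x$, together with all $3$-colourings of $G-P(x)$ that extend to $G$. In both cases the number of returned colourings is at most $3^{9\cdot 2^d+8}$, a constant.

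In Case~(A) I would phrase each problem as a property of a $3$-colouring and test it against the constantly many returned colourings, spending $O(n+m)$ per test by Observations~\ref{c-forest}--\ref{c-k}. For \textsc{Acyclic $3$-Colouring} and \textsc{Star $3$-Colouring} I simply check each colouring. For the other three I use that a colour class of a proper $3$-colouring is an independent set whose deletion leaves a bipartite graph, and conversely that any independent odd cycle transversal (respectively, independent feedback vertex set) becomes a colour class once the bipartite (respectively, forest) remainder is $2$-coloured. Thus $G$ has an independent odd cycle transversal of size at most~$k$ iff some returned colouring has a colour class of size at most~$k$; and $G$ is near-bipartite, respectively has an independent feedback vertex set of size at most~$k$, iff some returned colouring has a colour class whose deletion leaves a forest, respectively such a class of size at most~$k$. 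Each test is linear, so Case~(A) costs $O(n+m)$.

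Case~(B) is the crux, since $P(x)$ may contain $\Theta(n)$ vertices and its extensions cannot be enumerated. Here every vertex of $P(x)$ is adjacent to $x$ but to no other vertex of $T$, so in any extension of a returned colouring $c'$ it must avoid the colour $i:=c'(x)$; extending $c'$ is therefore exactly a proper $2$-colouring of the bipartite graph $G[P(x)]$ by the two colours of $\{1,2,3\}\setminus\{i\}$, some of whose vertices are pre-coloured by their already coloured neighbours outside $P(x)$, which is solvable and analysable component by component (cf.\ Lemma~\ref{l-2sat}). The key simplification is that the vertex set $C_j\cup C_\ell$ receiving the two non-$i$ colours is the \emph{same} for every extension, so the graph $G-C_i$ is fixed and can be tested once for being a forest or a star forest; this already disposes of the pair of non-$x$ colours for \textsc{Acyclic}/\textsc{Star $3$-Colouring} and of the deletion of~$C_i$ for the three transversal problems. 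It moreover settles \textsc{Independent Odd Cycle Transversal} completely: deleting any colour class leaves a bipartite graph, so I only need some class of size at most~$k$, and the smallest achievable size of each class is obtained by minimising its count on each component of $G[P(x)]$ independently (a free component contributes its smaller side, a pre-coloured one is fixed), all in $O(n+m)$ time.

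What remains---and what I expect to be the main obstacle---is to decide over the exponentially many extensions whether both bichromatic subgraphs meeting colour~$i$ can be made forests (for \textsc{Acyclic}/\textsc{Star $3$-Colouring}), or a class meeting $P(x)$ can be deleted to leave a forest within a size bound (for \textsc{Near-Bipartiteness} and \textsc{Independent Feedback Vertex Set}). The clean observation is that, within $P(x)$, every edge joins the two non-$i$ colours, so in a bichromatic subgraph using colour~$i$ no internal edge of $P(x)$ survives and a vertex of $P(x)$ connects only to its colour-$i$ neighbours. Hence a vertex of $P(x)$ can lie on a cycle of such a subgraph only if it has a colour-$i$ neighbour other than~$x$; every other vertex of $P(x)$ is a pendant at~$x$ and is harmless. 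The crux is therefore to bound the set~$D$ of vertices of $P(x)$ having a colour-$i$ neighbour besides~$x$, together with the set of such neighbours. Here I would again invoke the chair-freeness facts from the proof of Lemma~\ref{l-3-col-3}---equal $N_2$-neighbourhoods for non-adjacent vertices of $P(x)$, neighbourhoods in $N_2$ that are cliques of size at most two, and the consequence that an edge $uw$ with $u\in P(x)$, $w\neq x$ and $w\not\sim x$ forces all but boundedly many vertices of $P(x)$ into $N(u)\cup N(w)$---to show that $D$ and its relevant neighbours form a configuration of bounded size. Given this, I would branch over the colourings of the bounded configuration, colour the harmless pendants so as to respect the finitely many no-cycle constraints and, when a size bound is present, to minimise the relevant class, and finally verify each resulting labelling in $O(n+m)$ time via Observations~\ref{c-forest}--\ref{c-k}. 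Combining the bipartite case with Cases~(A) and~(B), and since only constantly many labellings are ever verified, this yields the claimed $O(n+m)$ time bound for all five problems.
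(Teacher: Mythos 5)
Your overall architecture is the same as the paper's: dispose of the bipartite case via Lemma~\ref{l-bibbib}, use Theorem~\ref{t-3-col} to obtain constantly many $3$-colourings of $G-P(x)$, observe that the colour class $C_i$ containing $x$ is fixed across all extensions (so $G-C_i$ can be tested once), and your treatment of \textsc{Independent Odd Cycle Transversal} by minimising each class component-wise over the free components of $G[P(x)]$ is essentially the paper's Case~2 and is correct. However, at the step you yourself identify as the crux, your plan rests on a claim that is false as stated: the set $D$ of vertices of $P(x)$ having a colour-$i$ neighbour other than $x$ need \emph{not} be of bounded size. Take a triangle on $x,y,z$, let $P(x)$ be an independent set of $\Theta(n)$ vertices all adjacent to $x$ and to one single vertex $v\in N_2$; this graph is chair-free, $3$-colourable and of bounded diameter, and for any returned colouring with $c(v)=i$ we get $D=P(x)$. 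What the chair-freeness facts from the proof of Lemma~\ref{l-3-col-3} actually bound is not $D$ but the set of \emph{outside} neighbours of $P(x)$ (at most four vertices in $N_2$, plus vertices of $N(y)\cap N(z)$ with near-complete attachment). The paper exploits this in the opposite direction: since almost all free vertices of $P(x)$ attach to the same few coloured vertices, any acyclic extension can put at most one vertex per colour class adjacent to such a vertex, so for each returned colouring either the set of viable extensions collapses to a constant number, or the colouring is rejected outright, or---when there is no outside neighbour at all---$x$ is a cut-vertex and the problem decouples (for \textsc{Independent Feedback Vertex Set}, reducing to $G-S_c$ with budget $k-|W|$). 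Carrying out this case analysis, which is the bulk of the paper's proof of Theorem~\ref{t-chair} (Cases~1 and~3, with separate subcases for $N(y)\cap N(z)$-neighbours, $N_2$-neighbours, the cut-vertex situation, and independent $S_c$), is precisely what your sketch postpones; "branch over the colourings of the bounded configuration" is not well-founded when $D$ is linear in $n$.

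A second, smaller gap concerns \textsc{Star $3$-Colouring}: your pendant vertices are \emph{not} harmless there. A vertex $u\in P(x)$ whose only colour-$i$ neighbour is $x$ cannot lie on a bichromatic cycle, but it can be an endpoint of a bichromatic $P_4$ through $x$ (for instance $u$--$x$--$w$--$v$ with $c(u)=c(w)=j$ and $c(v)=i$), so colouring pendants "to respect the finitely many no-cycle constraints" does not decide star colourability; this is why the paper treats star colouring with dedicated arguments (requiring the components of $G[S_c]$ to have diameter at most~$2$, and restricting to monochromatic $S_c$ when $S_c$ is independent) rather than folding it into the acyclicity check.
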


\begin{proof}
Let $G$ be a chair-free graph of diameter at most~$d$ with $n$ vertices and $m$ edges.
Recall that $G$ is acyclic $3$-colourable or star $3$-colourable only if $G$ is $3$-colourable. Moreover, if $I$ is an independent set of $G$ for which $G-I$ is a bipartite graph, then $G$ is $3$-colourable. Hence, our problems require all the yes-instances to be $3$-colourable.
If $d=1$, then $G$ is $3$-colourable if and only if $G$ has at most $3$ vertices, and so each of our problems can be solved in constant time. We proceed by assuming $d\geq 2$. We first check in $O(n+m)$ time whether $G$ is bipartite.
If so, then we apply Lemma~\ref{l-bibbib}. In the remainder of our proof we assume that $G$ is not bipartite.

\medskip
\noindent
{\it Outline.} As our problems require all the yes-instances to be $3$-colourable, we check first whether $G$ is $3$-colourable. If so, then we compute an induced subgraph $H$ of $G$ and determine the set $\mathcal{C}$ of all its $3$-colourings that can be extended to $3$-colourings of $G$. As we compute $H$ by applying Theorem~\ref{t-3-col}, we find that $|\mathcal{C}|\leq 3^{9\cdot 2^d+8}$. We then distinguish some subcases. In some of them we further branch by extending our $3$-colourings. However, in some of them we find that $H$ equals $G$, and so Observations~\ref{c-3-colourings} and~\ref{c-k} imply that our six problems are solvable in $O(n+m)$ time as $\mathcal{C}$ is of constant size. As an implicit step, we apply this finding whenever $H$ is the whole graph $G$.

\medskip
\noindent
{\it Full Proof.} We first apply Theorem~\ref{t-3-col}. We continue by assuming that $G$ is $3$-colourable. In fact, the only remaining case is that where the lemma provides a triangle $T$ on vertex set $\{x,y,z\}$, a vertex $x$ of $T$ that has private neighbours, and the set of all $3$-colourings of $G-P(x)$ that can be extended to $3$-colourings of $G$. Note that we have at most $3^{9\cdot 2^d+8}$ such $3$-colourings. 
We partition $V(G)$ from $V(T)$.

We find that $G[P(x)]$ is bipartite, as $G$ is $3$-colourable, but not necessarily connected. We extend each $3$-colouring of $G-P(x)$ that can be extended to a $3$-colouring of $G$ to some vertices of $P(x)$.
Let $c$ be an arbitrary $3$-colouring of $G-P(x)$ that can be extended to a $3$-colouring of $G$. 
For $i\in\{0,1,2\}$, we compute in $O(n+m)$ time the set $S_i$ of all vertices of $P(x)$ which have $i$ available colours with respect to $c$, that is, $S_i$ is the set of all vertices of $P(x)$ which have neighbours in $3-i$ colours. 
As $c$ can be extended to a $3$-colouring of $G$, we find that $S_0$ is empty.
It takes $O(n+m)$ time to determine the available colour of each vertex in $S_1$. Furthermore, we can extend $c$ by breadth-first search in the same time to the vertices of those components of $G[P(x)]$ that contain at least one vertex of $S_1$. 

Let $S_c$ be the set of vertices that induce those components of $G[P(x)]$ that do not contain a vertex of $S_1$. Note that $S_c$ can be computed in $O(n+m)$ time
and that all neighbours of all vertices of $S_c$ in $V(G)\setminus S_c$ are coloured alike.
Moreover, every vertex of $S_c$ has its neighbours in $[N(y)\cap N(z)]\cup N_2\cup S_c\cup \{x\}$ by definition.
As $c$ can be extended to a $3$-colouring of $G$, we find that our approach leads to a $3$-colouring, say $c'$, of $G-S_c$. 
As there are at most $3^{9\cdot 2^6+2}$ $3$-colourings of $G-P(x)$, we find at most $3^{9\cdot 2^6+2}$ such triples $(c,c',S_c)$.
Furthermore, for each $3$-colouring $c_s$ of $G-P(x)$, there exists a triple $(c_s,{c_s}',S_{c_s})$ if $c_s$ can be extended to a $3$-colouring of $G$.
We proceed by considering the case where $S_c\neq\emptyset$ as otherwise $G=G-S_c$.
We continue by distinguishing on the problems we are considering. Recall that $G$ is $3$-colourable.

\medskip

\noindent 
\textbf{Case 1.} {\sc Acyclic $3$-Colouring} and {\sc Star $3$-Colouring}.

\noindent
We check whether for some triple $(c,c',S_c)$, the $3$-colouring $c'$ of $G-S_c$ that can be extended to an acyclic $3$-colouring or star $3$-colouring of $G$. By this approach, we clearly solve {\sc Acyclic $3$-Colouring} and {\sc Star $3$-Colouring}.

Let $(c,c',S_c)$ be an arbitrary triple as defined above. Recall that a star $3$-colouring of a graph is acyclic.
In time $O(n+m)$, we can determine the components of $G[S_c]$ and check whether $G[S_c]$ is a forest. If not, then $G[S_c\cup \{x\}]$, and thus $G$, is not acyclic $3$-colourable.
We continue and assume that $G[S_c]$ is a forest.
We check in $O(n+m)$ time if a vertex of $S_c$ has a neighbour in $N(y)\cap N(z)$. If so, say $s\in S_c$ is adjacent to $v\in N(y)\cap N(z)$, then $c'$ cannot be extended to an acyclic $3$-colouring of $c$ as either $s$ and $x$ are coloured alike or one of $\{svyxs,svzxs\}$ is a bichromatic $4$-cycle.

We proceed by assuming that $S_c$ has its neighbours in $N_2\cup S_c\cup \{x\}$.
As $G$ is chair-free, every two non-adjacent vertices of $S_c$ share the same neighbours in $N_2$ and, if there exists such a neighbour, then these two vertices have to be coloured differently to avoid a bichromatic $4$-cycle.
Therefore, in any acyclic extension of $c'$ to $G$, each of the two colour classes in $S_c$ either has size at most $1$ or has no neighbour in $N_2$.
We check in constant time if $S_c$ is of size at most $2$. If so, then there are at most $4$ possibilities to extend $c'$ to a $3$-colouring of $G$ and for each we apply Observation~\ref{c-3-colourings}.
Hence, we may assume $|S_c|\geq 3$. 
We check in $O(n+m)$ time if a vertex of $S_c$ has a neighbour in~$N_2$. 

Let us consider the subcase where $s\in S_c$ has a neighbour, say $v$, in $N_2$. Let $G_s$ be the component of $G[P(x)]$ that contains $s$.
Note that there are at most two possibilities to extend $c'$ to the vertices of $G_s$. 
We check in linear time if $S_c\setminus V(G_s)$ is of size at least $2$. If so, say $s_1,s_2\in S_c\setminus V(G_s)$, then $v$ is a neighbour of $s,s_1$, and $s_2$. Thus, $xs_1'vs_2'x$ is a bichromatic $4$-cycle for two vertices $s_1'$ and $s_2'$ of $\{s,s_1,s_2\}$. We conclude that $c'$ cannot be extended to an acyclic $3$-colouring of $G$. Hence, we may assume $|S_c\setminus V(G_s)|\leq 1$, and so there are at most four possibilities to extend $c'$ to a $3$-colouring of $G$ each of which can be obtained in $O(n+m)$ time. We apply Observation~\ref{c-3-colourings} for each.

We proceed by assuming that no vertex of $S_c$ has a neighbour in $N_2$. In other words, each vertex of $S_c$ has its neighbours in $S_c\cup\{x\}$.
As $x$ is a cut-vertex of $G$, any extension of $c'$ to a $3$-colouring of $G$ is acyclic if and only if $c'$ is acyclic. Hence, we apply Observation~\ref{c-3-colourings} on $G-S_c$ and $c'$ in order to solve {\sc Acyclic $3$-Colouring}. 

We now check in $O(n+m)$ time if each component of $G[S_c]$ is of diameter at most~$2$. If not, then $G[S_c\cup\{x\}]$, and thus $G$ is not star $3$-colourable. Let us proceed by assuming that each component of $G[S_c]$ is of diameter at most $2$. We find that every $3$-colouring of $G[S_c\cup \{x\}]$ is a star $3$-colouring. In other words, we can restrict ourselves to those $3$-colouring extensions of $c'$ to $G$ that assign one colour to all vertices of $S_c$ if $S_c$ is independent, and an arbitrary $3$-colouring extensions of $c'$ to $G$ if $S_c$ is not independent. Note that we can check in $O(n+m)$ time whether $S_c$ is independent. We find in both subcases at most two extensions of $c'$ to $G$ and apply Observation~\ref{c-3-colourings} for each in order to solve {\sc Star $3$-Colouring}. 

\medskip
\noindent 
\textbf{Case 2.} {\sc Independent Odd Cycle Transversal}.

\noindent
Let $k$ be an arbitrary integer.
We check whether some triple $(c,c',S_c)$ consists of a $3$-colouring $c'$ of $G-S_c$ that can be extended to a $3$-colouring of $G$ whose one colour class is an independent odd cycle transversal of size at most $k$. As all the yes-instances require $G$ to be $3$-colourable, this approach clearly solves {\sc Independent Odd Cycle Transversal}.

Let $(c,c',S_c)$ be an arbitrary triple as defined above. 
Moreover, let $X,Y,Z$ be the colour classes of $c'$ with $x\in X$, $y\in Y$, and $z\in Z$. Clearly, $X,Y,$ and $Z$ can be computed in linear time.
We decide in linear time which of $\{Y,Z\}$ is of smaller size, say $|Y|\leq |Z|$.

Recall that all vertices of $S_c$ have their neighbours in $S_c\cup X$. 
Note that $c'$ can be extended to a $3$-colouring of $G$ by $2$-colourings of $G[S_c]$ on the colours that $c'$ assigns to $y$ and $z$, and these are the only possibilities. 
We find that the smallest possible colour class of a $3$-colouring of $G$ that extends $c'$ consists of the vertices either in $X$ or in $Y\cup W$, where
$W$ is the smallest possible colour class of a $2$-colouring of $G[S_c]$. As we can compute the components of $G[S_c]$ and its parts in $O(n+m)$ time, we find $W$ in the same time. Hence, the smallest possible independent odd cycle transversal of $G$ that is a colour class of an extension of $c'$ to a $3$-colouring of $G$ is of size $\min\{|X|,|Y\cup W|\}$. We can compare the sizes of $X$ and $Y\cup W$ with $k$ in linear time. 

\medskip

\noindent 
\textbf{Case 3.} {\sc Independent Feedback Vertex Set} and  {\sc Near-Bipartiteness}.

\noindent
Let $k$ be an arbitrary integer.
We check whether some triple $(c,c',S_c)$ consists of a $3$-colouring $c'$ of $G-S_c$ that can be extended to a $3$-colouring of $G$ whose one colour class is an independent feedback vertex set (of size at most $k$). As all the yes-instances require $G$ to be $3$-colourable, this approach clearly solves {\sc Independent Feedback Vertex Set} and  {\sc Near-Bipartiteness}.

Let $(c,c',S_c)$ be an arbitrary triple as defined above. 
Moreover, let $X,Y,Z$ be the colour classes of $c'$ with $x\in X$, $y\in Y$, and $z\in Z$. Clearly, $X,Y,$ and $Z$ can be computed in linear time.
We check first whether $G-X$ is a forest in $O(n+m)$ time. If so, then we find that $X$ is an independent feedback vertex set of $G$ and we can determine its size in linear time. Hence, we proceed by assuming that $G-X$ contains a cycle or $|X|>k$. As we aim to find an extension of $c'$ to a $3$-colouring of $G$ whose one colour class is an independent feedback vertex set (of size at most~$k$), we find that such a set consists of the vertices of $Y$ or of $Z$, and the vertices of some set $A\subseteq S_c$.

Recall that all vertices of $S_c$ have their neighbours in $[N(y)\cap N(z)]\cup N_2\cup S_c\cup \{x\}$ and their neighbours in $[N(y)\cap N(z)]\cup N_2\cup \{x\}$ form an independent set. 
Note that $c'$ can be extended to a $3$-colouring of $G$ by $2$-colourings of $G[S_c]$ on the colours that $c'$ assigns to $y$ and $z$, and these are the only possibilities. 
If $G[S_c]$ is connected, which can be tested in $O(n+m)$ time, then there are at most two such possibilities, and so we apply Observation~\ref{c-k} for each.
We proceed by assuming that $G[S_c]$ is disconnected, and so $|S_c|\geq 2$.

We claim that all vertices of $S_c$ have the same neighbours in $N_2$. Let us assume that $v$ is an arbitrary vertex of $N_2$ that is adjacent to some vertex of $S_c$.
Let $S_v$ be the set of neighbours of $v$ in $S_c$.
By definition, we find that $S_v$ is non-empty.
As $G$ is chair-free, we obtain that every vertex of $S_v$ is adjacent to every vertex of $S_c\setminus S_v$ as otherwise $\{s_1,s_2,v,x,y\}$ would induce a chair for some possible vertices $s_1\in S_v$ and $s_2\in S_c\setminus S_v$. As $G[S_c]$ is disconnected, we find that $S_c\setminus S_v=\emptyset$, which completes the proof of our claim as $v$ is arbitrarily chosen.

We can check if there is a vertex in $N(y)\cap N(z)$ in $O(n+m)$ time. 
First assume there is such a vertex, say $w$. As $\{s_1,s_2,w,x,y\}$ does not induce a chair for each two vertices $s_1,s_2$ of an independent set $I$ of $G[S_c]$, we find that $w$ is adjacent to all but at most one vertex of $I$. As $G[S_c]$ is bipartite, it follows that $w$ has at least $|S_c|-2$ neighbours in~$S_c$.
For each $s\in N(w)\cap S_c$, we find $s\in A$ as $sxyws$ and $sxzws$ are $4$-cycles.
Note that $N(w)\cap S_c$ can be computed in $O(n+m)$ time.
As $|N(w)\cap S_c|\geq |S_c|-2$, we find at most eight possibilities to extend $c'$ to a $3$-colouring of $G$ by a $2$-colouring of $G[S_c]$ in which one colour class contains all the vertices of $N(w)\cap S_c$. We apply Observation~\ref{c-k} for each.
Hence, we may assume that $N(y)\cap N(z)=\emptyset$, and so every two vertices of $S_c$ share the same neighbours in $V(G)\setminus S_c$.

If no vertex of $N_2$ has a neighbour in $S_c$, then $x$ is a cut-vertex. In this case we find that $G$ has an independent feedback vertex set of size at most $k$ if and only if $G-S_c$ has an independent feedback vertex set (of size at most $k-|W|$, where
$W$ is the smallest possible colour class of a $2$-colouring of $G[S_c]$.
As $W$ can be computed in $O(n+m)$ time, we apply Observation~\ref{c-k} for $G-S_c$ and $c'$.

We proceed by considering the situation where $v\in N_2$ has a neighbour in $S_c$.
Recall that all vertices of $S_c$ are adjacent to $v$.
As $xs_1vs_2x$ is a $4$-cycle for any two vertices $s_1,s_2\in S_c$, we find that $A$ has size at least $|S_c|-1$. 
In other words, we aim for such a $2$-colouring of $G[S_c]$ whose one colour class is of size at most $1$.
If $S_c$ is not independent, we have at most two such possibilities, and each leads to a $3$-colouring of $G$. We apply Observation~\ref{c-k} for each.
Now suppose that $S_c$ is independent.
We find that any two vertices of $S_c$ have the same neighbours in $G$. Let us fix one vertex, say, $s$ of $S_c$. As there is at most one vertex of $S_c$ that is not in the independent feedback vertex set, we may assume that $s$ is that vertex.
We have four ways of colouring the vertices of $S_c$ such that all vertices of $S_c\setminus\{s\}$ receive the same colour. It remains to apply Observation~\ref{c-k} for each case.
\end{proof}

\section{Polyad-Free Graphs of Bounded Diameter}\label{s-limitations}

In this section we show that for bounded diameter tractability, we should not seek to extend some of the results from the previous section to omission of arbitrary polyads (subdivided stars).  We let $K_{1,r}^\ell$ denote the {\it $\ell$-subdivided $r$-star}, which is  the  graph obtained from a star~$K_{1,r}$ by subdividing {\it one} edge of $K_{1,r}$ exactly $\ell$ times.

The problems for which we show \NP-hardness when restricted to $H$-free graphs for some polyad $H$ are {\sc Independent Odd Cycle Transversal}, {\sc Acyclic $3$-Colouring} and {\sc Star $3$-Colouring}. We show this in Theorems~\ref{t-hard1}--\ref{t-hard3}, respectively.

In our proofs we reduce from two special variants of the {\sc Not-All-Equal $3$-Satisfiability} problem, which is well-known to be \NP-complete~\cite{Sc78}. The problem is defined as follows. Given a CNF formula $\phi$ that consists of a set $X= \{x_1,x_2,...,x_n\}$ of logical variables, and a set $C = \{C_1, C_2, . . . , C_m\}$ of three-literal clauses over $X$, does there exist a truth assignment for $X$ such that each $C_j$ contains at least one true literal and at least one false literal? If such a truth assignment exists, then $\phi$ is {\it not-all-equal satisfiable}.

Before explaining the two variants, we need to introduce some additional terminology.
Let $\phi$ be an instance of {\sc Not-All-Equal $3$-Satisfiability} in which every clause still has 
three literals, but 
each literal appears in at most two clauses.
A \emph{collection} $\mathcal{C}$ of $\phi$ is a set of pairs $(z,C)$ such that

\begin{itemize}
\item $z$ is a literal of the clause $C$, and
\item for every $z$ that appears in at least one clause of $\phi$, there exists exactly one pair $(z,C)\in {\cal C}$.
\end{itemize}

\noindent
A clause $C$ is \emph{uncovered} by ${\cal C}$ if $(z,C)\notin\mathcal{C}$ for each literal $z$ that appears in $C$.
The collection~$\mathcal{C}$ is \emph{covering} if no clause of $\phi$ is uncovered.

We can now discuss the aforementioned two variants of {\sc Not-All-Equal $3$-Satisfiability}, which we call variant~$A$ and $B$, respectively. In variant~A, every clauses still has 
three literals, but 
each literal appears in at most two clauses. Furthermore, variant A requires as additional input a covering collection.
In variant~B, every clause also still has three literals, but each clause has only positive literals and each literal  occurs  in  at  most  four different  clauses. For variant B we do not require the input to include a covering collection.
Variants~A and~B are  both \NP-complete. For variant A, we prove this in our next lemma, whereas the \NP-completeness of variant~B was shown by Darmann and D{\"{o}}cke~\cite{DD20}.
 
 \begin{lemma}\label{l-a}
 Variant~A of {\sc Not-All-Equal $3$-Satisfiability} is \NP-complete.
 \end{lemma}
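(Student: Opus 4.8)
The plan is to reduce from the variant of \textsc{Not-All-Equal $3$-Satisfiability} in which every clause has exactly three literals and each literal appears in at most two clauses; this restricted form is \NP-complete, which follows from the general problem~\cite{Sc78} by the standard splitting argument (replacing each high-occurrence variable by several copies tied together with equality gadgets so that no literal survives in more than two clauses). Membership of Variant~A in \NP\ is immediate, since a not-all-equal satisfying assignment is a polynomial certificate and the covering collection, being part of the input, is verified in polynomial time. The subtlety is that handing the solver a covering collection only gives it \emph{more} information, so hardness is not automatic: the whole difficulty lies in showing that from such a $\phi$ one can, in polynomial time, manufacture a valid covering collection $\mathcal{C}$. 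The reduction then maps $\phi$ to the pair $(\phi,\mathcal{C})$, which is plainly equivalence-preserving because $\mathcal{C}$ places no constraint on satisfiability.

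The heart of the argument is that a covering collection always exists and can be found efficiently, which I would model as a bipartite matching problem. Form the bipartite graph $B$ whose one side is the set of clauses of $\phi$ and whose other side is the set of literals occurring in $\phi$, joining a clause $C$ to a literal $z$ precisely when $z$ is a literal of $C$. A matching of $B$ that saturates every clause yields a covering collection: assign each matched literal to the clause it is matched to, and assign every remaining occurring literal to an arbitrary clause containing it. Then each occurring literal receives exactly one pair $(z,C)$, and every clause is covered by the literal it was matched to, so no clause is uncovered.

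It remains to guarantee such a clause-saturating matching, and here Hall's theorem does the work. For a set $S$ of clauses let $N(S)$ denote the literals occurring in the clauses of $S$. Counting the incidences between $S$ and $N(S)$ in two ways, each of the $|S|$ clauses supplies three literals, while each literal of $N(S)$ lies in at most two clauses and hence in at most two clauses of $S$; therefore $3|S|\le 2|N(S)|$, giving $|N(S)|\ge \tfrac{3}{2}|S|\ge |S|$. Thus Hall's condition is satisfied, a matching saturating all clauses exists, and it is computable in polynomial time by any bipartite matching algorithm, completing the construction of $\mathcal{C}$.

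The main obstacle I anticipate is not the matching step, which is clean once the bounded-occurrence hypothesis is in place, but the bookkeeping in the preliminary reduction enforcing ``each literal in at most two clauses''. The equality gadget linking the copies of a split variable (for instance $\mathrm{NAE}(a,b,y)\wedge \mathrm{NAE}(a,b,\neg y)$, which forces $a\ne b$ and can thus encode $a=b$ after one negation) must itself be chained so that no literal, including the freshly introduced auxiliary ones, ends up in more than two clauses. Once the instance is in the required form, the crucial point is simply that the bound of two occurrences per literal is exactly what makes Hall's condition hold, so the covering collection can always be supplied and Variant~A is \NP-complete.
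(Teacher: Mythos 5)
Your second step --- existence and polynomial-time computability of a covering collection --- is correct and takes a genuinely different, and in fact simpler, route than the paper. The paper never shows that covering collections always exist; instead it fixes an arbitrary collection and, as long as some clause $C_1$ is uncovered, re-applies its splitting gadget to a literal $z$ of $C_1$ (whose unique pair must point to the second clause $C_2$ containing $z$), explicitly re-assigning six pairs so that the number of uncovered clauses strictly decreases. Your Hall-type count --- $3|S|\le 2|N(S)|$, using that each clause contributes three distinct literals while each literal lies in at most two clauses --- shows this repair loop is unnecessary: a clause-saturating matching, hence a covering collection, exists for \emph{every} instance in the restricted form and is computable by any bipartite matching algorithm, without modifying the formula. (Like the paper, you implicitly assume the three literals of a clause are distinct; that is the convention in force here, and with it your argument is sound. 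The observation that $\mathcal{C}$ places no constraint on satisfiability, so the map $\phi\mapsto(\phi,\mathcal{C})$ is trivially equivalence-preserving, is also exactly right.)

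The genuine gap is in your first step, and it is precisely the ``bookkeeping'' you flag but do not resolve. In the not-all-equal setting the occurrence-bounding reduction is not the standard splitting argument (two-literal implication clauses are unavailable), and your proposed construction fails as described: the equality gadget consisting of the clauses $(a,\bar b,y)$ and $(a,\bar b,\bar y)$ consumes \emph{two} occurrences of the literal $a$ and two of $\bar b$. If the copies $x_1,\ldots,x_k$ of a variable are linked in a chain of such gadgets, every interior copy spends two occurrences of one polarity on the gadget to its left and two of the other polarity on the gadget to its right, exhausting all four of its permitted occurrences; it can then not appear in any original clause at all. Since each copy can afford at most two gadget participations, the equality graph on copies has maximum degree two, so a connected chain can carry original occurrences only at its two ends --- and variables occurring three or more times cannot be handled. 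The paper closes exactly this hole with a different gadget: for a literal $z$ occurring in clauses $C_1,\ldots,C_k$ with $k\ge 3$, add fresh variables $y_z,z_1,z_2$ and clauses $(y_z,z_1,\bar z_2)$, $(y_z,\bar z_1,z_2)$, $(z,\bar z_1,\bar z_2)$, and replace $z$ by $z_i$ in $C_i$ for $i=1,2$; a short case analysis shows any not-all-equal assignment forces $z_1=z_2=z$, every newly introduced literal occurs at most twice, and the number of occurrences of $z$ drops by one, so the recursion terminates in polynomially many steps. Until your step~1 is carried out at this level of care your proof is incomplete; with the paper's gadget spliced in, your Hall-based construction of $\mathcal{C}$ does complete the lemma, and more cleanly than the paper's own second step.
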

 
 \begin{proof}
 We reduce from {\sc Not-All-Equal $3$-Satisfiability} in two steps.
 
 \medskip
 \noindent
{\bf Step 1.}
Let $\phi$ be a CNF formula with $m$ clauses and $n$ variables.
 Suppose that literal~$z$ appears in $k\geq 3$ clauses, say $z$ appears in clauses $C_1,C_2,\ldots,C_k$. From $\phi$ we first construct a CNF formula $\phi'$ as follows:
 
\begin{itemize}
\item Add three new variables $y_z$, $z_1$, and $z_2$.
\item Add three new clauses $(y_z, z_1,\bar z_2)$, $(y_z, \bar z_1, z_2)$ and $(z,\bar z_1,\bar z_2)$.
\item For $i=1,2$, replace $z$ by $z_i$ in $C_i$.
\end{itemize}

\noindent
We note that $\phi$ is not-all-equal satisfiable if and only if $\phi'$ is not-all-equal-satisfiable. The reason is that
$z_1$, $z_2$ and $z$ have the same value in any truth assignment of $\phi'$ for which each clause contains at least one true literal and at least one false literal.
We also note that $y_z, z_1,\bar z_1, z_2, \bar z_2$ each appear in at most two clauses of $\phi'$. Furthermore, $z$ appears in at most $k-1$ clauses of $\phi'$, namely, in $C_3,\ldots, C_k$ and $(z,\bar z_1, \bar z_2)$.
Hence, applying this reduction recursively (at most $O(n\cdot m)$ times) leads to an equivalent CNF formula of {\sc Not-All-Equal $3$-Satisfiability} of polynomial length, in which each literal appears in at most two clauses.

\medskip
\noindent
{\bf Step 2.}
Let $\phi$ be an instance of {\sc Not-All-Equal $3$-Satisfiability} with $m$ clauses and $n$ variables such that each literal appears in at most two clauses.

For every literal $z$ that appears in at least one clause, we fix one clause, say $C_z$. Let $\mathcal{C}$ be the set of all pairs $(z,C_z)$. Note that $\mathcal{C}$ is a collection of $\phi$ and that its computation takes $O(nm)$ time. Let $\gamma_\phi(\mathcal{C})$ be the number of clauses in $\phi$ which are uncovered.

If $\gamma_\phi(\mathcal{C})>0$, then let $C_1$ be an uncovered clause and $z$ be a literal that appears in~$C_1$.
By construction, there is a second clause $C_2$ in which $z$ appears as literal and $(z,C_2)\in \mathcal{C}$.
From $(\phi,\mathcal{C})$ we first construct a CNF formula $\phi'$ and a collection $\mathcal{C}'$ as follows:
 
\begin{itemize}
\item Add three new variables $y_z$, $z_1$, and $z_2$.
\item Add three new clauses $(y_z, z_1,\bar z_2)$, $(y_z, \bar z_1, z_2)$, $(z, \bar z_1, \bar z_2)$.
\item For $i=1,2$, replace $z$ by $z_i$ in $C_i$.
\item Remove the pair $(z,C_2)$.
\item Add six new pairs $(y_z,(y_z,\bar z_1, z_2))$, $(z,(z, \bar z_1, \bar z_2))$, $(z_1,C_1)$, $(z_2,C_2)$, $(\bar z_1, (z,\bar z_1, \bar z_2))$, and $(\bar z_2, (y_z, z_1, \bar z_2))$.
\end{itemize} 

\noindent
By repeating the arguments from the proof of Step~1, we find that $\phi$ is not-all-equal satisfiable if and only if $\phi'$ is not-all-equal-satisfiable. 
As before, it is readily seen that $\phi$ is an instance of {\sc Not-All-Equal $3$-Satisfiability} in which each literal appears in at most two clauses. Moreover, $\mathcal{C}'$ is a collection of $\phi'$.
This follows from the fact that ${\cal C}$ is a collection of $\phi$ and our construction: we replaced $(z,C_2)$ by a new pair that contains $z$ and added exactly one pair for the literals $y_z,z_1,z_2,\bar z_1, \bar z_2$.
We did this in such a way that $\mathcal{C}'$ covers all the new clauses as well as $C_1$ and $C_2$. Thus, $\gamma_{\phi'}(\mathcal{C}')<\gamma_\phi(\mathcal{C})$.
Hence, applying this reduction recursively (at most $O(m)$ times) leads to an equivalent CNF formula of {\sc Not-All-Equal $3$-Satisfiability} of polynomial length, in which each literal appears in at most two clauses and for which a covering collection is fixed.
 \end{proof}

\begin{theorem}\label{t-hard1}
{\sc Independent Odd Cycle Transversal} is \NP-complete for $K_{1,4}^3$-free graphs of diameter~$4$.
\end{theorem}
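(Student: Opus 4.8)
The plan is to reduce from Variant~A of {\sc Not-All-Equal $3$-Satisfiability}, which is \NP-complete by Lemma~\ref{l-a}. Membership in \NP{} is immediate, since given a candidate set $S$ one checks in polynomial time that $S$ is independent and that $G-S$ is bipartite. So let $(\phi,\mathcal{C})$ be an instance of Variant~A, where every clause of $\phi$ has three literals, every literal occurs in at most two clauses, and $\mathcal{C}$ is a covering collection. From $(\phi,\mathcal{C})$ I would build, in polynomial time, a graph $G$ and an integer $k$ such that $G$ has an independent odd cycle transversal of size at most~$k$ if and only if $\phi$ is not-all-equal satisfiable.

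For the construction I would use a \emph{variable gadget} for each variable that couples the two literal vertices $v_z$ and $v_{\bar z}$ so that, in any $2$-colouring of $G-S$, the surviving one of each pair lies on a prescribed side of the bipartition; the colour of a literal vertex then encodes its truth value. For each clause I would attach a small \emph{odd} clause gadget (a triangle on auxiliary vertices, wired to the three literal vertices) whose property is that $G-S$ can only be made bipartite when the three incident literals are \emph{not all equal}, so that bipartiteness of $G-S$ exactly captures the not-all-equal condition. Finally I would add a bounded number of global ``hub'' vertices and, guided by the covering collection~$\mathcal{C}$, the edges joining literal vertices and clause gadgets to these hubs. The target value $k$ would be set to the size of the set $S$ produced by an arbitrary not-all-equal satisfying assignment (one distinguished literal vertex per variable), so that the two directions below match.

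The verification then splits into four parts, which I would carry out in the following order. (i)~\textbf{Correctness, forward direction:} from a not-all-equal satisfying assignment, define $S$ by selecting, in each variable gadget, the literal vertex that is false, check that $S$ is independent, and $2$-colour $G-S$ by sides corresponding to true/false, using that every clause has both a true and a false literal to extend the colouring over each clause gadget. (ii)~\textbf{Correctness, backward direction:} from an independent odd cycle transversal $S$ with $|S|\le k$, argue that $S$ must meet every odd clause gadget and that independence forces $S$ to pick consistently within each variable gadget; read off an assignment and use bipartiteness of $G-S$ to conclude the not-all-equal property. (iii)~\textbf{Diameter:} show $\dist_G(u,v)\le 4$ for all $u,v$, routing every pair through the hub vertices; here the covering collection guarantees that each clause gadget has a short link to the hubs, keeping all distances within~$4$.

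The main obstacle is part~(iv), \textbf{$K_{1,4}^3$-freeness}, and its tension with the diameter bound of part~(iii). Forcing diameter~$4$ pushes one to add many incidences to the hubs, which creates vertices of degree at least~$4$; but an induced $K_{1,4}^3$ is exactly a vertex of degree~$4$ carrying three pairwise non-adjacent pendant neighbours together with an induced path of length~$4$, and a graph of diameter~$4$ has ample room for such a length-$4$ induced leg. To rule this out I would analyse each vertex that could serve as the degree-$4$ centre and show that its neighbourhood is so structured---neighbours lying in cliques, or sharing common neighbours that chord any potential long leg---that no induced copy of $K_{1,4}^3$ can occur. This is precisely where the covering collection does the real work: because $\mathcal{C}$ assigns to every literal a single clause and covers every clause, the incidences between literals, clause gadgets and hubs are restricted enough that every candidate long induced leg off a degree-$4$ vertex is short-circuited by an edge, so no subdivided $4$-star survives. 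Finishing this case analysis, together with (i)--(iii), completes the reduction.
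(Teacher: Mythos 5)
Your proposal is a plan rather than a proof precisely where the theorem is difficult. The construction itself is never pinned down: the variable gadgets, the clause wiring, the ``hub'' vertices, and the edges ``guided by $\mathcal{C}$'' are all placeholders, and part (iv) --- $K_{1,4}^3$-freeness --- is only a case analysis you promise to perform. Worse, the parts you do specify are internally inconsistent. You take the clause gadget to be a triangle, so any independent odd cycle transversal $S$ must contain one vertex of every clause triangle (as your own backward direction (ii) observes); yet in the forward direction (i) you build $S$ from one false literal vertex per variable only, which leaves every clause triangle intact in $G-S$, so $G-S$ cannot be bipartite. Relatedly, your budget $k$ is defined circularly as ``the size of the set $S$ produced by an arbitrary not-all-equal satisfying assignment'': in a reduction, $k$ must be computable from the instance alone, and with triangle clause gadgets the natural value is $m+1$ (one vertex per clause triangle plus one hub vertex), not one vertex per variable.

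The paper's actual proof shows the missing ideas are simpler than you anticipate. It uses the standard reduction from Variant~A of {\sc Not-All-Equal $3$-Satisfiability} to {\sc $3$-Colouring}: adjacent literal vertices $v_{x_i}$, $v_{\bar x_i}$, a single vertex $z$ adjacent to all literal vertices, and one triangle $T_i$ per clause joined to its three literal vertices; the transversal $S$ of size $m+1$ is the colour class of $z$ in a $3$-colouring ($z$ plus one vertex of each $T_i$), and conversely an independent $S$ with $G-S$ bipartite yields a $3$-colouring from which the not-all-equal assignment is read off. Diameter~$4$ is immediate because every vertex is within distance~$2$ of $z$; no additional hubs are needed. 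And, contrary to your claim that the covering collection ``does the real work'' for $K_{1,4}^3$-freeness, $\mathcal{C}$ plays no role in this theorem at all --- it is only needed later, in the proof of Theorem~\ref{t-hard2} for {\sc Acyclic $3$-Colouring}, to select a matching of edges to substitute by $K_{2,3}$. What actually excludes an induced $K_{1,4}^3$ is elementary: since each literal occurs in at most two clauses, a literal vertex has degree at most~$4$ but at most three pairwise non-adjacent neighbours (its negation and $z$ are adjacent); clause vertices have degree~$3$; and although $z$ has unbounded degree, every induced path with end-vertex $z$ has order at most~$4$ (it contains at most one literal vertex and at most two vertices of any $T_i$), so $z$ cannot carry the subdivided leg. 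Your instinct that $z$ is the problematic centre is correct, but the short-induced-path argument, not the covering collection, is the key you are missing.
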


\begin{proof}
	We deploy an argument previously used in~\cite{MPS19}. Let us recall first the standard reduction to {\sc 3-Colouring} from {\sc Not-all-Equal $3$-Sat}. We use variant~A. Hence, we are given a CNF formula $\phi$ with clauses $C_1,\ldots, C_m$ and variables $x_1,\ldots,x_n$ such that each literal appears in at most two clauses. From $\phi$, we construct a graph $G$ as follows (see also Figure~\ref{fig:3-col}):
	
	\begin{itemize}
		\item Add two vertices $v_{x_i}$ and $v_{\bar x_i}$ for each variable $x_i$.
		\item Add an edge between $v_{x_i}$ and $v_{\bar x_i}$ for  $i=1,\ldots, n$.
		\item Add a new vertex $z$ and make $z$ adjacent to every $v_{x_i}$ and every $v_{\bar x_i}$ 
		\item For each clause $C_i$,
		add a triangle $T_i$ with three new vertices $c_{i_1}, c_{i_2}, c_{i_3}$. Fix an arbitrary  order of the literals of $C_i$, say $y_{i_1}, y_{i_2}, y_{i_3}$ where $y_{i_j}\in \{x_{i_j},\bar x_{i_j}\}$, and add the edge $v_{y_{i_j}}c_{i_j}$ for $j=1,2,3$.
	\end{itemize}

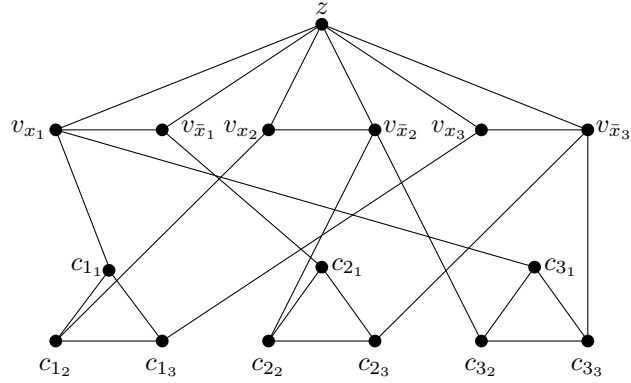
\begin{figure}[h]
\centering
\begin{tikzpicture}[scale=1.4]
%%% Clause triangles
	\draw (0,0)node[bnode]{}--(1,0)node[bnode]{}--(0.5,0.67)node[bnode]{}--(0,0);	
	\draw[shift={(2,0)}] (0,0)node[bnode]{}--(1,0)node[bnode]{}--(0.5,0.7)node[bnode]{}--(0,0);	
	\draw[shift={(4,0)}] (0,0)node[bnode]{}--(1,0)node[bnode]{}--(0.5,0.7)node[bnode]{}--(0,0);	
%%% Variable K_2's
	\draw (2.5,3)--(0,2)node[bnode]{}--(1,2)node[bnode]{}--(2.5,3);
	\draw (2.5,3)--((2,2)node[bnode]{}--(3,2)node[bnode]{}--(2.5,3);
	\draw (2.5,3)--((4,2)node[bnode]{}--(5,2)node[bnode]{}--(2.5,3);
%%% z
	\draw (2.5,3) node[bnode]{};
%%% edges clause -- literal
	%x_1,x_1'
	\draw (4.5,.7)--(0,2)--(0.5,0.7);
	\draw (2.5,.7)--(1,2);
	%x_2,x_2'
	\draw (0,0)--(2,2);
	\draw (2,0)--(3,2)--(4,0);
	%x_3,x_3'
	\draw (1,0)--(4,2);
	\draw (3,0)--(5,2)--(5,0);
%%% labbeling
	\draw (0,-.25) node{ $c_{1_2}$};
	\draw (1,-.25) node{ $c_{1_3}$};
	\draw (2,-.25) node{ $c_{2_2}$};
	\draw (3,-.25) node{ $c_{2_3}$};
	\draw (4,-.25) node{ $c_{3_2}$};
	\draw (5,-.25) node{ $c_{3_3}$};
	\draw (0.3,0.7) node{ $c_{1_1}$};
	\draw (2.75,0.7) node{ $c_{2_1}$};
	\draw (4.75,0.7) node{ $c_{3_1}$};
	\draw (-0.25,2) node{ $v_{x_1}$};
	\draw (1.35,2) node{ $v_{\bar x_1}$};
	\draw (1.75,2) node{ $v_{x_2}$};
	\draw (3.25,2) node{ $v_{\bar x_2}$};
	\draw (3.7,2) node{ $v_{x_3}$};
	\draw (5.25,2) node{ $v_{\bar x_3}$};
	\draw (2.5,3.15) node{ $z$};
\end{tikzpicture}

\caption{The reduction from Variant A of {\sc Not-all-Equal $3$-Satisfiability} to {\sc  Independent Odd Cycle Transversal} on the instance 
$\phi=(x_1,{x}_2,x_3),(\bar{x}_1,\bar{x}_2,\bar{x}_3),(x_1,\bar{x}_2,\bar{x}_3)$. }\label{fig:3-col}
\end{figure}

We now show that $G$ has diameter at most~$4$. First note that any literal vertex is adjacent to $z$ and any clause vertex is adjacent to some literal vertex, so any vertex is at distance at most~$2$ from $z$. Therefore any two vertices are at distance at most~$4$.
	
Next, we show that $G$ is $K_{1,4}^3$-free. Any literal vertex has degree at most~$4$ since it appears in at most two clauses. However it has at most three independent neighbours since two of its neighbours, its negation and $z$, are adjacent. Each clause vertex has at most three neighbours. So the only vertex with four independent neighbours is $z$. The longest induced path with end-vertex $z$ has order at most~$4$ since any such path contains at most one literal and at most two vertices of any $T_i$. Therefore $G$ is $K_{1,4}^3$-free.

Finally, we prove that $G$ has an independent odd cycle transversal of size $m+1$ if and only if $\phi$ is not-all-equal satisfiable.

First suppose that $G$ has an independent odd cycle transversal $S$ of size $m+1$. Then $G-S$ is bipartite, say with partition classes $A$ and $B$, whereas $S$ is an independent set. Hence $G$ has a $3$-colouring with colour classes $A$, $B$ and $S$. Assume that $z$ is assigned colour~$1$. Then each literal vertex is assigned either colour $2$ or colour $3$. If, for some clause~$C_i$, the vertices corresponding to the literals of $C_i$ are all assigned the same colour, then $T_i$ cannot be coloured. Therefore, if we set literals whose vertices are coloured with colour $2$ to be true and those coloured with colour $3$ to be false, each clause must contain at least one true literal and at least one false literal.

Now suppose that $\phi$ is not-all-equal satisfiable. Then we can colour vertex $z$ with colour~$1$, each true literal with colour~$2$ and each false literal with colour~$3$. Since each clause has at least one true literal and at least one false literal, each triangle has neighbours in two different colours. This implies that each triangle is $3$-colourable. Hence, $G$ is $3$-colourable, and in particular we coloured a set $S$ of 
$m+1$ vertices of $G$ with colour~$1$ (namely, one vertex of each of the $m$ triangles $T_i$ and the vertex~$z$).
Consequently, $S$ is an independent odd cycle transversal of size~$m+1$. 
\end{proof}
	
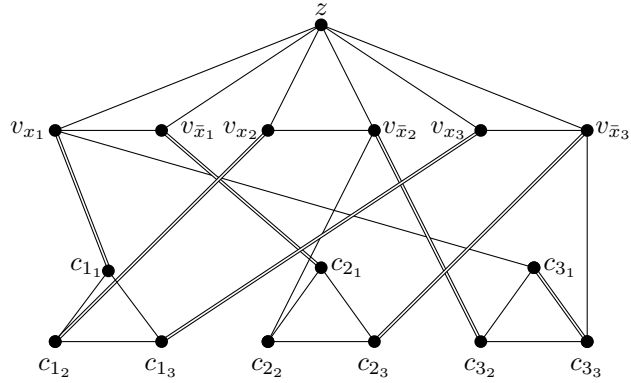
\begin{figure}
\centering
\begin{tikzpicture}[scale=1.4]
%%% edges clause -- literal
	%x_1,x_1'
	\draw (4.5,.7)--(0,2);
	\draw[double] (0.5,0.7)--(0,2);
	\draw[double] (2.5,.7)--(1,2);
	%x_2,x_2'
	\draw[double] (0,0)--(2,2);
	\draw (2,0)--(3,2);
	\draw[double] (3,2)--(4,0);
	%x_3,x_3'
	\draw[double] (1,0)--(4,2);
	\draw[double] (3,0)--(5,2);
	\draw (5,2)--(5,0);
%%% labbeling
	\draw (0,-.25) node{ $c_{1_2}$};
	\draw (1,-.25) node{ $c_{1_3}$};
	\draw (2,-.25) node{ $c_{2_2}$};
	\draw (3,-.25) node{ $c_{2_3}$};
	\draw (4,-.25) node{ $c_{3_2}$};
	\draw (5,-.25) node{ $c_{3_3}$};
	\draw (0.3,0.7) node{ $c_{1_1}$};
	\draw (2.75,0.7) node{ $c_{2_1}$};
	\draw (4.75,0.7) node{ $c_{3_1}$};
	\draw (-0.25,2) node{ $v_{x_1}$};
	\draw (1.35,2) node{ $v_{\bar x_1}$};
	\draw (1.75,2) node{ $v_{x_2}$};
	\draw (3.25,2) node{ $v_{\bar x_2}$};
	\draw (3.7,2) node{ $v_{x_3}$};
	\draw (5.25,2) node{ $v_{\bar x_3}$};
	\draw (2.5,3.15) node{ $z$};
%%% Clause triangles
	%1
	\draw (0,0)node[bnode]{}--(1,0)node[bnode]{}--(0.5,0.67)node[bnode]{}--(0,0)node[bnode]{};	
	%2
	\draw[shift={(2,0)}] (0,0)node[bnode]{}--(1,0)node[bnode]{}--(0.5,0.7)node[bnode]{}--(0,0)node[bnode]{};	
	%3
	\draw[shift={(4,0)}] (0.5,0.7)node[bnode]{}--(0,0)node[bnode]{}--(1,0)node[bnode]{};
	\draw[double,shift={(4,0)}] (0.5,0.7)node[bnode]{}--(1,0)node[bnode]{};	
%%% Variable K_2's
	\draw (2.5,3)--(0,2)node[bnode]{}--(1,2)node[bnode]{}--(2.5,3);
	\draw (2.5,3)--((2,2)node[bnode]{}--(3,2)node[bnode]{}--(2.5,3);
	\draw (2.5,3)--((4,2)node[bnode]{}--(5,2)node[bnode]{}--(2.5,3);
%%% z
	\draw (2.5,3) node[bnode]{};
\end{tikzpicture}
	\caption{The reduction from Variant A of {\sc Not-all-Equal $3$-Satisfiability} to {\sc Acyclic $3$-Colouring} on the instance 
$\phi=(x_1,x_2,x_3),(\bar{x}_1,\bar{x}_2,\bar{x}_3),(x_1,\bar{x}_2,\bar{x}_3)$ given a covering collection 
$\mathcal{C}=\{
	(x_1,(x_1,x_2,x_3))$, $
	(\bar x_1,(\bar{x}_1,\bar{x}_2,\bar{x}_3))$, $
	(x_2,(x_1,x_2,x_3))$, $
	(\bar x_2,(x_1,\bar{x}_2,\bar{x}_3))$, $
	(x_3,(x_1,x_2,x_3))$, $
	(\bar x_3,(\bar{x}_1,\bar{x}_2,\bar{x}_3))\}$. 
Double lines are edges that are substitution instances of $K_{2,3}$ where the endpoints come from the partition of size~$2$.}
	\label{fig:acyclic-3-col-polyad}
\end{figure}

\begin{theorem}\label{t-hard2}
{\sc Acyclic $3$-Colouring} is \NP-complete for $K_{1,6}^5$-free graphs of diameter~$6$.
\end{theorem}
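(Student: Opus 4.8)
The plan is to give a polynomial reduction from Variant~A of {\sc Not-All-Equal $3$-Satisfiability}, which is \NP-complete by Lemma~\ref{l-a}; membership in \NP\ is immediate, since a proposed $3$-labelling can be tested for being an acyclic $3$-colouring in polynomial time (see Observation~\ref{c-3-colourings}). Given a formula $\phi$ together with a covering collection $\mathcal{C}$, I would start from exactly the graph built in the proof of Theorem~\ref{t-hard1} (literal vertices $v_{x_i},v_{\bar x_i}$ joined in pairs and all joined to a vertex $z$, a clause triangle $T_i$ for each clause, and one literal--clause edge per incidence), and then modify it: for every pair $(z',C)\in\mathcal{C}$ I replace the corresponding literal--clause edge by a $K_{2,3}$ whose size-$2$ side is formed by the two endpoints of that edge, drawn with double lines in Figure~\ref{fig:acyclic-3-col-polyad}. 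Call the resulting graph $G$.

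The engine of the reduction is the following gadget property, which I would isolate first: in any acyclic $3$-colouring, the two size-$2$ vertices of a $K_{2,3}$ must receive \emph{distinct} colours (if they agreed, two of the three middle vertices would repeat a colour and close a bichromatic $4$-cycle), and then all three middle vertices are forced to the remaining third colour; conversely, once the two endpoints differ, colouring the middles with the third colour creates no bichromatic cycle inside the gadget. Thus each gadget behaves, for the purpose of proper colourings, like the edge it replaced, so the forward direction is quick: an acyclic $3$-colouring of $G$ is proper, every incidence (a surviving edge or a gadget) forces a literal vertex and its clause vertex to differ, hence no clause triangle can have its three literal vertices equally coloured, and reading colour~$2$/colour~$3$ on the literal vertices (after fixing $z$ to colour~$1$) as true/false yields a not-all-equal satisfying assignment, exactly as in Theorem~\ref{t-hard1}.

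For the structural constraints I would argue through $z$. Every literal vertex is adjacent to $z$, every gadget middle is adjacent to a literal vertex, and every clause vertex reaches a literal vertex in at most two steps, so every vertex lies within distance~$3$ of $z$ and the diameter is at most~$6$. For $K_{1,6}^5$-freeness, only $z$ and the literal vertices occurring in two clauses have degree at least~$6$; such a literal vertex has exactly six neighbours, two of which ($z$ and its negation) are adjacent, so its neighbourhood contains no six independent vertices and it cannot centre a $K_{1,6}$. It remains to rule out $z$, for which I would show that no induced path of length~$6$ starts at $z$: after its first (literal) vertex such a path can never meet another literal vertex, since all literals see $z$, so it continues inside the non-literal part, which splits into disjoint pieces each consisting of one clause triangle together with the pendant middle vertices of its covered incidences; an induced path in such a piece uses at most two triangle vertices, giving length at most~$5$ from $z$. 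As $K_{1,6}^5$ requires a branch of length~$6$, $G$ is $K_{1,6}^5$-free.

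The backward direction is where the real work lies, and I expect it to be the main obstacle. Given a not-all-equal satisfying assignment I would colour $z$ with~$1$, each literal vertex by its truth value in $\{2,3\}$, and each clause triangle by a proper $3$-colouring respecting all incidence constraints; crucially I would choose this triangle colouring so that in each clause a \emph{covered} incidence vertex receives a colour in $\{2,3\}$, which the not-all-equal condition leaves enough room to arrange, and finally colour every gadget's middles with their forced third colour. The claim is that this colouring is acyclic, and the key reduction is a colour count: a bichromatic cycle can never pass through a gadget middle, because a middle carries the unique colour missing from its two differently coloured neighbours, so every bichromatic cycle avoids all gadgets and lives in the base graph with the covered literal--clause edges deleted. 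There each literal vertex retains at most one clause neighbour (its single uncovered incidence) and clause vertices are non-adjacent to $z$, which makes the $\{1,2\}$- and $\{1,3\}$-subgraphs forests; the surviving $\{2,3\}$-subgraph has maximum degree~$2$, and the covering collection guarantees that in every clause triangle at least one of its two $\{2,3\}$-coloured vertices has lost its literal edge, breaking every cycle through that triangle. The delicate point is precisely this compatibility between the chosen triangle colouring and the covering collection, and checking that it indeed leaves all three two-colour subgraphs cycle-free.
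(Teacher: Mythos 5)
Your proposal is correct, but it deviates from the paper's construction in one substantive way, and the deviation forces exactly the extra work you flagged. The paper's substitution rule has two clauses: it replaces by $K_{2,3}$ not only every covered literal--clause edge $v_{z_1}c_{i_j}$ with $(z_1,C_i)\in\mathcal{C}$, but also, in every clause with \emph{two} uncovered incidences, the triangle edge between the two clause vertices attached to those incidences (these extra double edges are visible inside the triangles of Figure~\ref{fig:acyclic-3-col-polyad}, which you cite but read as containing only literal--clause gadgets). With the second clause of the rule, the backward direction becomes purely structural: any bichromatic cycle must contain a path $v_{z_1}c_{i_{j_1}}c_{i_{j_2}}v_{z_2}$ of genuine edges, forcing $(z_1,C_i),(z_2,C_i)\notin\mathcal{C}$ while $c_{i_{j_1}}c_{i_{j_2}}$ is unsubstituted, contradicting the rule --- so \emph{every} proper extension is automatically acyclic and no colouring has to be chosen with care. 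You instead keep all triangle edges real and repair soundness by choosing, per triangle, which vertex gets colour~$1$, so that a covered incidence always lies among the two $\{2,3\}$-coloured clause vertices. I checked that this does go through: the NAE condition leaves room for that choice in all cases (when a clause has two uncovered incidences, colour~$1$ can always be placed on one of them compatibly with the literal colours), each literal has exactly one pair in $\mathcal{C}$ and hence at most one uncovered incidence, which gives your forest arguments for the $\{1,2\}$- and $\{1,3\}$-subgraphs, and the degree-$2$ bound plus the broken literal edge in each triangle kills all $\{2,3\}$-cycles; your structural claims (diameter at most~$6$, only $z$ and doubly occurring literals have degree~$6$, no induced $7$-vertex path from $z$) also survive the extra triangle edges. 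The trade-off is clear: the paper buys an essentially choice-free soundness proof at the cost of a slightly richer gadget set, while your leaner construction shifts the burden onto the ``delicate point'' you identified --- which, once the feasibility of the colour placement is verified case by case, does close the proof.
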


\begin{proof}	
	Suppose now we attempt to frame a similar argument for  {\sc Acyclic $3$-Colouring} by putting the construction used in the proof of Theorem~\ref{t-hard1} through a process by which we map edges to bipartite graphs $K_{2,3}$ (which is the standard reduction from {\sc $3$-Colouring} to {\sc Acyclic $3$-Colouring}). That is, for {\sc Acyclic $3$-Colouring}, we replace each edge $u_1u_2$ by three new vertices $w_1,w_2,w_3$ and edges $u_iw_j$ for $i\in \{1,2\}$ and $j\in \{1,2,3\}$.
Alas, using vertex $z$, which is the only vertex of unbounded degree, we can find arbitrarily large polyads. 
However, we reduce from variant A and only substitute the edges of a certain matching. 
	
Let $\phi$ be an instance of variant A and $\mathcal{C}$ be a given covering collection of $\phi$. We only substitute edges $w_1w_2$ in the construction used in the proof of Theorem~\ref{t-hard1} for which 

\begin{itemize}
\item $w_1w_2=v_{z_1}c_{i_j}$ such that $(z_1,C_i)\in {\cal C}$; or
\item $w_1w_2=c_{i_h}c_{i_j}$ such that
 there are two literals $z_1$ and $z_2$ with $(z_1,C_i),(z_2,C_i)\notin\mathcal{C}$, and $c_{i_h}$ and $c_{i_j}$
 are the two vertices of $T_i$ adjacent to $v_{z_1}$ and $v_{z_2}$, respectively.
\end{itemize} 

\noindent This is drawn in Figure~\ref{fig:acyclic-3-col-polyad}. We note that the set of edges we substitute is indeed a matching.
We claim that the resulting graph $G$ has diameter at most $6$. Indeed, there is path of length at most $3$ from any vertex to $z$. We further claim that $G$ is $K_{1,6}^{5}$-free, which follows from the fact that any induced $K_{1,6}^{5}$ in the graph would have to involve the vertex $z$ as its centre 
(as the other vertices have at most~$5$ independent neighbours)
 and all paths of order $7$ starting from $z$ must involve two vertices adjacent to $z$.

Finally, we claim that $G$ is acyclically $3$-colourable if and only if $\phi$ is not-all-equal satisfiable.
	
First suppose that $G$ is acyclically $3$-colourable. 	Given an acyclic $3$-colouring of $G$, assume $z$ is assigned colour $1$. Then each literal vertex is assigned either colour~$2$ or colour~$3$. The argument here concludes as it does in the reduction to {\sc $3$-Colouring}, since in any acyclic $3$-colouring of the gadget $K_{2,3}$ the vertices in the partition of size~$2$ must receive distinct colours.
	
Now suppose that $\phi$ is not-all-equal satisfiable. Then, we can colour vertex $z$ with colour~$1$, each true literal $v_{x_i}$ with colour $2$ and each false literal $v_{x_i}$ with colour~$3$. The proof concludes as in the reduction to {\sc $3$-Colouring} except that we must argue that there are no bichromatic cycles.
For a contradiction, assume that there is a bichromatic cycle, say $C$.
Bearing in mind that in a substitution instance of $K_{2,3}$, the vertices of the partition of size~$2$ are coloured distinct, we note that $C$ cannot contain any edge that belongs to such a gadget.
As $C$ is a bichromatic cycle, we find that it contains at least two adjacent vertices $v_{z_1}$ and $c_{i_{j_1}}$.
Moreover, $C$ contains exactly one other vertex, say $c_{i_{j_2}}$, from $\{c_{i_1},c_{i_2},c_{i_3}\}$, and this vertex is adjacent to $c_{i_{j_1}}.$ 
It follows that some $v_{z_2}$ for $z_2\neq z_1$ is the second neighbour of $c_{i_{j_2}}$ in $C$.
But now $(z_1,C_i),(z_2,C_i)\notin \mathcal{C}$ and $c_{i_{j_1}}c_{i_{j_2}}$ is an edge of $G$, which contradicts our substitution rule.
\end{proof}

\begin{theorem}\label{t-hard3}
{\sc Star $3$-Colouring} is \NP-complete for $K_{1,6}^{14}$-free graphs of diameter~$14$.
\end{theorem}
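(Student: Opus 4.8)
The plan is to follow the template of Theorems~\ref{t-hard1} and~\ref{t-hard2}. As in Theorem~\ref{t-hard2}, the naive route---substituting \emph{every} edge of the \NP-hardness gadget from Theorem~\ref{t-hard1} by the standard \textsc{3-Colouring}-to-\textsc{Star 3-Colouring} edge gadget---fails, because the unbounded-degree vertex $z$ together with such substitutions would create arbitrarily long induced polyads. I would therefore again reduce from variant~A of \textsc{Not-All-Equal 3-Satisfiability} (which is \NP-complete by Lemma~\ref{l-a}), take the graph $G$ built in the proof of Theorem~\ref{t-hard1}, and substitute only the edges of the same matching that is singled out by a covering collection $\mathcal{C}$ in the proof of Theorem~\ref{t-hard2}. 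The difference is the choice of gadget: instead of $K_{2,3}$, each matching edge is replaced by a star-colouring edge gadget $F$.

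The gadget $F$ has two terminals $u_1,u_2$ (the former endpoints of the substituted edge) and must satisfy: (i) in every star $3$-colouring of $F$ the two terminals receive distinct colours; and (ii) every assignment giving the terminals distinct colours extends to a star $3$-colouring of $F$, and---crucially---the extension can be chosen so that no bichromatic $P_4$ leaves $F$ through a terminal. Property~(i) is what carries the not-all-equal forcing argument over from Theorem~\ref{t-hard1}, exactly as the $K_{2,3}$ gadget did for acyclic colouring; property~(ii) is what the star condition additionally demands, since a bichromatic $P_4$ is acyclic but is not a star. To obtain (ii) while keeping $F$ bipartite and of bounded degree, $F$ is a suitably subdivided and padded variant of $K_{2,3}$; its diameter is precisely what pushes the overall diameter up to $14$ and the forbidden polyad up to $K_{1,6}^{14}$.

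Three things then need checking. First, the diameter bound: every terminal of every gadget lies within bounded distance of $z$ (literal vertices are adjacent to $z$, clause vertices are at distance $2$), and the radius of $F$ is bounded, so every vertex reaches $z$ within $7$ steps and hence $G$ has diameter at most $14$. Second, $K_{1,6}^{14}$-freeness: as in Theorem~\ref{t-hard2}, any induced $K_{1,6}^{14}$ must be centred at $z$---every other vertex has at most five pairwise non-adjacent neighbours---and every induced path of order $16$ starting at $z$ must meet two neighbours of $z$, so the long arm of length $15$ cannot be realised. Third, correctness: the forward direction uses property~(i) to recover a not-all-equal assignment exactly as before, while the backward direction colours $z$ with $1$, true literals with $2$, false literals with $3$, extends each gadget by property~(ii), and then verifies the full star condition, namely the absence of bichromatic cycles \emph{and} of bichromatic $P_4$'s; the cycle analysis transfers essentially verbatim from Theorem~\ref{t-hard2}. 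The main obstacle I anticipate is the $P_4$ analysis: the hard part will be to show that the covering collection places enough substituted edges around each clause triangle and each literal--clause edge that every potential bichromatic $P_4$ either enters a gadget (where~(ii) lets us break it) or is blocked, while simultaneously keeping the substituted set a matching so that the induced polyads stay of length $14$. Balancing `enough substitution to kill bichromatic $P_4$'s' against `few enough substitutions to bound the polyad' is the crux, and is exactly why a covering collection, rather than an arbitrary matching, is required.
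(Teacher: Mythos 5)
Your plan is the ``direct adaptation'' that the paper's proof of Theorem~\ref{t-hard3} explicitly rejects in its first sentence, and the failure occurs precisely at the point you flag as the crux: bichromatic $P_4$'s that live entirely on \emph{unsubstituted} edges, out of reach of any gadget property. Concretely, in any star $3$-colouring of the Theorem~\ref{t-hard1} graph with $z$ coloured $1$, every clause triangle $T_i$ must use colour $1$ on some vertex $c_{i_j}$ (your distinct-forcing gadget on a substituted triangle edge still makes all three $c_{i_j}$ pairwise differently coloured). If the edge $v_y c_{i_j}$ to that colour-$1$ vertex is unsubstituted and $v_y$ has colour $2$, then for any other literal vertex $v_{y'}$ of colour $2$ (such a vertex exists once $n\geq 2$) the path $v_{y'}\, z\, v_y\, c_{i_j}$ is a bichromatic $P_4$ with colours $2,1,2,1$ that touches no gadget. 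Trying to steer colour $1$ onto the triangle vertex whose literal edge \emph{is} substituted (the covering collection guarantees one per clause) also breaks: if the covered literal is false while the other two literals of the clause are both true (NAE-satisfiability does not exclude this), the two remaining triangle vertices must both avoid colour $1$ and colour $2$, forcing two adjacent vertices to colour $3$. Moreover, your gadget property~(ii) is not something any gadget can deliver: with terminals coloured $1$ and $2$, the internal terminal-neighbours of a bipartite gadget such as $K_{2,3}$ are \emph{forced} to colour $3$ with no freedom of extension, so whether a bichromatic $P_4$ leaves through a terminal is dictated entirely by the external colours; for instance $w\, v_y\, c_{i_1}\, c_{i_2}$ coloured $3,2,3,2$, with $w$ internal to the gadget on $v_y$'s covered edge and the other two edges unsubstituted, is such a path. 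So the backward direction of your reduction fails: NAE-satisfiable formulas need not map to star $3$-colourable graphs.

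The paper's actual proof takes a different route. It reduces from variant~B of \textsc{Not-All-Equal $3$-Satisfiability} (monotone clauses, each literal in at most four clauses, no covering collection at all), adds a triangle $z z' z''$ to lock the colours around $z$, and interposes padding layers $p^j_{x_i}$ and $q^j_{x_i}$ between each variable vertex and the clause gadgets so that the truth values are carried at distance at least $3$ from $z$; it then replaces \emph{every} $p$--$q$ edge, every clause-triangle edge, and every $q$--$c$ edge by a $K_{2,2}$ substitution gadget with endpoints in the same part (not a matching, and not a distinct-colour $K_{2,3}$ variant). The wholesale substitution is affordable exactly because the high-degree vertex $z$ is insulated by the $p,q$ layers, and this insulation is also what drives the parameters up to $K_{1,6}^{14}$ and diameter $14$.
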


\begin{proof}	
	Our construction for {\sc Acyclic $3$-Colouring} in the proof of Theorem~\ref{t-hard3} cannot be directly adapted for {\sc Star $3$-Colouring}. Instead of this, we reduce from Variant~B of {\sc Not-All-Equal $3$-Satisfiability}. Hence, we are given a CNF formula $\phi$ with clauses $C_1,\ldots,C_m$ and variables $x_1,\ldots,x_n$ such that 
each $C_i$ consists of three positive literals and each  literal  occurs  in  at  most  four  different  clauses. From~$\phi$, we construct a graph $G$ as follows (see also Figure~\ref{fig:star-3-col-polyad}):
	
	\begin{itemize}
		\item Add a vertex $v_{x_i}$ for each variable $x_i$.
		\item Add a vertex $z$ adjacent to each vertex $v_{x_i}$.
		\item Add two new vertices $z',z''$ in a triangle with $z$.
		\item Add vertices $p^1_{x_i},p^2_{x_i},p^3_{x_i},p^4_{x_i}$ for each instance of a variable $x_i$ with edges from each of these to $v_{x_i}$.
		\item Add vertices $q^1_{x_i},q^2_{x_i},q^3_{x_i},q^4_{x_i}$ for each instance of a variable $x_i$ with edges from each $q^j_{x_i}$ to $p^j_{x_i}$ which are substitution instances of $K_{2,2}$ and the endpoints come from the same part.
		\item For each clause $C_i$ add a triangle $T_i$ with vertices $c_{i_1}, c_{i_2}, c_{i_3}$ where edges are substitution instance of $K_{2,2}$ and the endpoints come from the same part.
		\item 
		Fix an arbitrary order of the literals of every $C_i$, $x_{i_1}, x_{i_2}, x_{i_3}$.
 Assign every pair $(i,j)$ a vertex of
$q_{x_{i_j}}^1,q_{x_{i_j}}^2,q_{x_{i_j}}^3$ and make this vertex adjacent to
$c_{i_j}$, such that this assignment is injective. Let each of the new edges be a substitution instance of $K_{2,2}$, where the
endpoints come from the same part.
	\end{itemize}

\noindent
	We draw the special edges that are in fact built from instances $K_{2,2}$ with double lines in Figure~\ref{fig:star-3-col-polyad}. We claim that $G$ has diameter at most $14$. Indeed, every vertex has a path of length at most $7$ from it to $z$. We also claim that $G$ is $K_{1,6}^{14}$-free. Vertices that are not equal to some $c_{i_j}$ or $z$ have degree at most~$5$ so cannot be the centre of an induced $K_{1,6}^{14}$. Moreover, every $c_{i_j}$ 
has degree~$6$, but for every $c_{i_j}$ there is no vertex that is adjacent to exactly one of the neighbours of $c_{i_j}$.  Hence, no $c_{i_j}$ can be the centre of an induced $K_{1,6}^{14}$ either. Finally, $z$ cannot be the centre of an induced $K_{1,6}^{14}$ either, as every induced path that starts from $z$ has length at most~$14$. Hence, $G$ is indeed $K_{1,6}^{14}$-free.
	
\begin{figure}[h]
\centering
\begin{tikzpicture}[scale=1.4]
   \draw (2.5,6) node[bnode]{}--(2,6.7) node[bnode]{}--(3,6.7)
node[bnode]{}--(2.5,6);
        \draw (2,6.9) node{ $z'$};
        \draw (3,6.9) node{ $z''$};
	\draw (0,.75) node{ $c_{1_1}$};
	\draw (1,.75) node{ $c_{1_3}$};
	\draw (2,.75) node{ $c_{2_1}$};
	\draw (3,.75) node{ $c_{2_3}$};
	\draw (4,.75) node{ $c_{3_1}$};
	\draw (5,.75) node{ $c_{3_3}$};
	\draw (0.25,1.7) node{ $c_{1_2}$};
	\draw (2.25,1.7) node{ $c_{2_2}$};
	\draw (4.25,1.7) node{ $c_{3_2}$};
	\draw (-1.45,3) node{$q_{x_1}^1$};
	\draw (-0.95,3) node{$q_{x_1}^2$};
	\draw (-.45,3) node{$q_{x_1}^3$};
	\draw (0.05,3) node{$q_{x_1}^4$};
	\draw (-1.45,4) node{$p_{x_1}^1$};
	\draw (-.95,4) node{$p_{x_1}^2$};
	\draw (-.45,4) node{$p_{x_1}^3$};
	\draw (0.05,4) node{$p_{x_1}^4$};
	\draw (-0.75,5) node{$v_{x_1}$};
	\draw (1.25,5) node{$v_{x_2}$};
	\draw (3.75,5) node{$v_{x_3}$};
	\draw (5.75,5) node{$v_{x_4}$};
	\draw (2.5,6.15) node{ $z$};
%%% Clause triangles
	\draw[double,shift={(0,1)}] (0,0)node[bnode]{}--(1,0)node[bnode]{}--(0.5,0.7)node[bnode]{}--(0,0)node[bnode]{};	
	\draw[double,shift={(2,1)}] (0,0)node[bnode]{}--(1,0)node[bnode]{}--(0.5,0.7)node[bnode]{}--(0,0)node[bnode]{};	
	\draw[double,shift={(4,1)}] (0,0)node[bnode]{}--(1,0)node[bnode]{}--(0.5,0.7)node[bnode]{}--(0,0)node[bnode]{};	
%%% Variables
	\draw (2.5,6)--(-0.5,5)node[bnode]{}--(-1.25,4)node[bnode]{};
	\draw (-0.5,5)node[bnode]{}--(-.75,4)node[bnode]{};
	\draw (-0.5,5)node[bnode]{}--(-.25,4)node[bnode]{};
	\draw (-0.5,5)node[bnode]{}--(.25,4)node[bnode]{};
	\draw (2.5,6)--(1.5,5)node[bnode]{}--(0.75,4)node[bnode]{};
	\draw (1.5,5)node[bnode]{}--(1.25,4)node[bnode]{};
	\draw (1.5,5)node[bnode]{}--(1.75,4)node[bnode]{};
	\draw (1.5,5)node[bnode]{}--(2.25,4)node[bnode]{};
	\draw (2.5,6)--(3.5,5)node[bnode]{}--(2.75,4)node[bnode]{};
	\draw (3.5,5)node[bnode]{}--(3.25,4)node[bnode]{};
	\draw (3.5,5)node[bnode]{}--(3.75,4)node[bnode]{};
	\draw (3.5,5)node[bnode]{}--(4.25,4)node[bnode]{};
	\draw (2.5,6)--(5.5,5)node[bnode]{}--(4.75,4)node[bnode]{};
	\draw (5.5,5)node[bnode]{}--(5.25,4)node[bnode]{};
	\draw (5.5,5)node[bnode]{}--(5.75,4)node[bnode]{};
	\draw (5.5,5)node[bnode]{}--(6.25,4)node[bnode]{};
	\draw[double] (-1.25,4)node[bnode]{}--(-1.25,3)node[bnode]{}--(0,1)node[bnode]{};
	\draw[double] (-.75,4)node[bnode]{}--(-.75,3)node[bnode]{}--(2,1)node[bnode]{};
	\draw[double] (-.25,4)node[bnode]{}--(-.25,3)node[bnode]{};
	\draw[double] (.25,4)node[bnode]{}--(.25,3)node[bnode]{};
	\draw[double] (.75,4)node[bnode]{}--(.75,3)node[bnode]{}--(0.5,1.7)node[bnode]{};
	\draw[double] (1.25,4)node[bnode]{}--(1.25,3)node[bnode]{}--(4,1)node[bnode]{};
	\draw[double] (1.75,4)node[bnode]{}--(1.75,3)node[bnode]{};
	\draw[double] (2.25,4)node[bnode]{}--(2.25,3)node[bnode]{};
	\draw[double] (2.75,4)node[bnode]{}--(2.75,3)node[bnode]{}--(1,1)node[bnode]{};
	\draw[double] (3.25,4)node[bnode]{}--(3.25,3)node[bnode]{}--(2.5,1.7)node[bnode]{};
	\draw[double] (3.75,4)node[bnode]{}--(3.75,3)node[bnode]{}--(4.5,1.7)node[bnode]{};
	\draw[double] (4.25,4)node[bnode]{}--(4.25,3)node[bnode]{};
	\draw[double] (4.75,4)node[bnode]{}--(4.75,3)node[bnode]{}--(3,1)node[bnode]{};
	\draw[double] (5.25,4)node[bnode]{}--(5.25,3)node[bnode]{}--(5,1)node[bnode]{};
	\draw[double] (5.75,4)node[bnode]{}--(5.75,3)node[bnode]{};
	\draw[double] (6.25,4)node[bnode]{}--(6.25,3)node[bnode]{};
%%% z
	\draw (2.5,6) node[bnode]{};
\end{tikzpicture}
	\caption{The reduction from variant~B of {\sc Not-All-Equal 3-Satisfiability} to {\sc Star $3$-Colouring} on the instance 
$\phi=(x_1,x_2,x_3),(x_1,x_3,x_4),(x_2,x_3,x_4)$. Double lines are edges that are substitution instances of $K_{2,2}$ where the endpoints come from same partition.}
	\label{fig:star-3-col-polyad}
\end{figure}
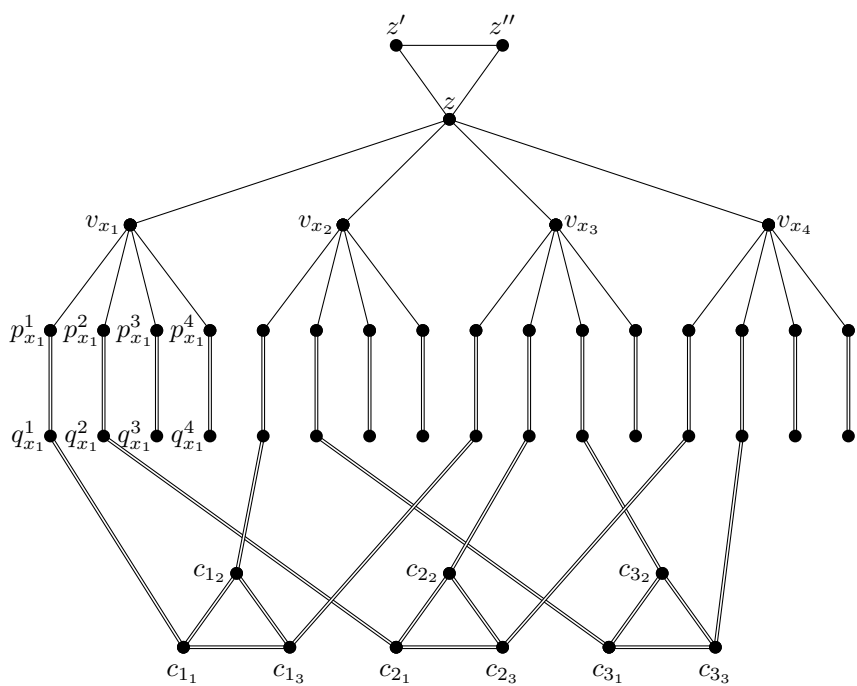

Finally, we claim that $G$ is star $3$-colourable if and only if $\phi$ is not-all-equal satisfiable.
			First suppose that $G$ is star $3$-colourable. Given a star $3$-colouring of $G$, assume $z$ is assigned colour $1$. Then each vertex $v_{x_i}$ is assigned either colour $2$ or colour $3$. Now each of the vertices $p^1_{x_i},p^2_{x_i},p^3_{x_i}$ is assigned the same colour from $\{2,3\}$ (this is enforced by the fact that $\{z',z''\}$ must be coloured $\{2,3\}$ which forbids any possibility that colour~$1$ is used). Furthermore each of the vertices $q^1_{x_i},q^2_{x_i},q^3_{x_i}$ is assigned precisely the colour from $\{2,3\}$ that $p^1_{x_i},p^2_{x_i},p^3_{x_i}$ was not assigned. The argument here concludes as it does in the reduction to {\sc $3$-Colouring}.
	
Now suppose that $\phi$ is not-all-equal satisfiable, then we can colour vertex $z$ with colour~$1$, the vertices $q^1_{x_i},q^2_{x_i},q^3_{x_i}$ of each true literal with colour $2$ and the vertices $q^1_{x_i},q^2_{x_i},q^3_{x_i}$ of each false literal with colour $3$. Then, since each clause has at least one true literal and at least one false literal, each triangle has neighbours in two different colours. This implies that each triangle is $3$-colourable. The argument here concludes as it does in the reduction to {\sc $3$-Colouring}.
\end{proof}

\section{Conclusions}\label{s-con}

We showed that bounding the diameter does not help for {\sc Independent Set} for $H$-free graphs. We proved that this does help for some problems related to {\sc $3$-Colouring} if $H$ is the chair.  Whether these results can be extended to larger polyads~$H$ is an interesting but challenging task. For three of these problems we gave a polyad~$H$ such that they are \NP-complete for $H$-free graphs of diameter~$d$ for some constant $d$.
Such a polyad $H$ was already known to exist for {\sc $3$-Colouring} of graphs of diameter at most~$4$~\cite{MPS19}.
We ask if there exists a polyad $H$ and an integer~$d$ such that {\sc Near-Bipartiteness} and {\sc Independent Feedback Vertex Set} are \NP-complete for $H$-free graphs of diameter at most~$d$.

\end{document}